\DeclareMathOperator*{\argmax}{arg\,max}
\newtheorem{theorem}{Theorem}[section] 
\newtheorem{corollary}{Corollary}
\newtheorem{assumption}{Assumption}
\newtheorem{lemma}[theorem]{Lemma}
\theoremstyle{definition}
\newtheorem{definition}{Definition}[section]
\newenvironment{Algorithm}
  {\par\medskip\noindent\minipage{\linewidth}}
  {\endminipage\par\medskip}
\newenvironment{Table}
  {\par\medskip\noindent\minipage{\linewidth}}
  {\endminipage\par\medskip}
\newenvironment{Figure}
  {\par\medskip\noindent\minipage{\linewidth}}
  {\endminipage\par\medskip}
\providecommand{\keywords}[1]
{
  \small	
  \textbf{\textit{Keywords---}} #1
}
\title{Abstract, keywords and references template}
\author{Saar Cohen$^{1}$, Noa Agmon$^{2}$  \\
        \small Department of Computer Science \\
        \small Bar Ilan University, Israel \\
        \small saar30@gmail.com$^{1}$, agmon@cs.biu.ac.il$^{2}$\\
}
\title{Converging to a Desired Orientation in a Flock of Agents}
\date{\empty}
\begin{document}
\maketitle

\begin{abstract}
This work concentrates on different aspects of the \textit{consensus problem}, when applying it to a swarm of flocking agents. We examine the possible influence an external agent, referred to as {\em influencing agent} has on the flock. We prove that even a single influencing agent with a \textit{Face Desired Orientation behaviour} that is injected into the flock is sufficient for guaranteeing desired consensus of the flock of agents which follow a Vicsek-inspired Model. We further show that in some cases this can be guaranteed also in dynamic environments. 
\end{abstract} \hspace{10pt}

\keywords{Coordination, Consensus Algorithms, Flocking, Swarm Robotics, Multi-Agent Systems, Cooperative Control, Graph Laplacians, Networked Control Systems, Graph Theory, Lyapunov Function, Switched Systems, Computational Geometry}


\newcommand{\BibTeX}{\rm B\kern-.05em{\sc i\kern-.025em b}\kern-.08em\TeX}

\pgfplotsset{compat=1.16}

\maketitle 

\begin{multicols}{2}

\section{Introduction}
\label{sec:intro}
Reaching a consensus within a group of simple robots that have limited sensing, use no explicit communication and have limited computational capabilities is one of the core problems in swarm robotics. The goal is to find a protocol for the robots such that eventually they will all agree on a certain value. In this work we consider the problem of converging a flock of agents to a single orientation, that is, reaching a consensus on the orientation. As opposed to previous work \cite{ren2005consensus,olfati2004consensus,olfati2007consensus}, we examine the problem of forcing the flock to reach a {\em desired} orientation by inserting one or \textit{more} agents, referred to as {\em influencing agents}, into the flock.

We assume that the agents are initially placed in one connected component, facing random directions. We prove that influencing agents with a \textit{Face Desired Orientation behaviour} are sufficient for guaranteeing consensus, while employing them into a swarm of flocking agents that follow the Vicsek Model \cite{vicsek1995novel}.

Genter and Stone \cite{genter2016ad,genter2016adding} introduced this kind of baseline behaviour, according to which influencing agents face a predetermined desired orientation at all times. This behaviour was actually inspired by Jadbabaie et al. \cite{jadbabaie2003coordination}, who considered a modified version of Vicsek’s model which utilizes a \textit{single} leader with a fixed orientation and several followers. Hence, the \textit{Face Desired Orientation behaviour} is the one described in their work, except that we propose several significant extensions regarding this behaviour:
\begin{enumerate}
    \item We consider \textit{both} the fixed topology case and the switching topology case, instead of considering just a single one of them.
    \item Opposed to their work, we consider \textit{multiple} Vicsek-inspired Models, instead of just a \textit{single} one.
    \item Furthermore, Jadbabaie et al. \cite{jadbabaie2003coordination} utilized an \textit{undirected} graph model with a \textit{single} leader, but we consider a \textit{directed} one with \textit{multiple} influencing agents.
    \item Whereas Genter and Stone \cite{genter2016ad,genter2016adding} solely give \textit{experimental results} regarding this behaviour, we also propose \textit{theoretical proofs} of useful properties while such behaviour is incorporated.
\end{enumerate}

For a \textit{fixed topology}, we also prove that continuous-time update rules converge exponentially \textit{faster} than discrete-time ones, thus yielding the benefits of the continuous-time case. For a \textit{switching topology with a continuous-time update rule}, we propose a much more concrete convergence rate than the one proposed by Olfati-Saber and Murray \cite{olfati2004consensus}.

We also propose a computational geometric algorithm, referred to as the \textit{Intersection Points Placement Method}, which guarantees in polynomial time that a \textit{single} influencing agent can be inserted into the flock in a manner that it will necessarily influence an entire single connected component. This algorithm utilizes the locus of all point in $\mathbb{R}^2$, which lie on the linear line connecting the intersection points of neighborhoods which correspond to a random pair flocking agents.

We have situated our research in the MASON simulator \cite{luke2005mason}, demonstrating the impact of the number of influencing agents on the convergence rate in different scenarios. Finally, we discuss and demonstrate the concept of "lost" flocking agents \cite{genter2017fly} in a switching topology. 
\section{Related Work}
In this section, we provide a review of the state-of-the-art in different domains that are related to our work. The articles mentioned below concentrate on reaching a consensus among swarms, while utilizing various tools for the sake of achieving it.
\subsection{Ad-Hoc Teamwork}
Due to the rise in the use of autonomous agents, both in software and robotic settings, they will increasingly need to band together for cooperative activities with previously unfamiliar teammates. As described by Stone et al. \cite{stone2013teaching}, multiple agents with different knowledge and capabilities find themselves in a situation such that their goals and utilities are perfectly aligned (for instance, everyone’s sole interest is to help find survivors), yet they have had no prior opportunity to coordinate.

Genter and Stone \cite{genter2013ad} consider the problem of leading a team of flocking agents in an ad-hoc teamwork setting. An ad-hoc teamwork setting is one in which a teammate - which they call an \textit{influencing agent} - must determine how to best achieve a team goal given a set of possibly suboptimal teammates. The main question addressed in this paper is: \textit{how should influencing agents behave so as to orient the rest of the flock towards a target heading as quickly as possible?} The main contribution of this paper is the 1-step lookahead algorithm for orienting a flock to travel in a particular direction and its variations.

Genter et al. \cite{genter2015determining}  further examine where these influencing agents should be placed in the flock. Specifically, their research question is: \textit{where should influencing agents be located within a flock to maximize their influence on the flock?} They introduce a graph-based method for deciding where to place the influencing agents and their approach to control the influencing agents. They suggest two cases for controlling: the \textbf{Drop} case (in which the  k influencing agents are at their desired positions at $t = 0$) and the \textbf{Dispatch} case (in which  the influencing agents are initially positioned outside of the flock).

Researches investigate those ad-hoc settings from different angles \cite{genter2013ad,stone2013teaching}. For instance, \cite{stone2013teaching} focuses on settings in which we are designing a new agent that has full information about its environment, that must coordinate with an older, less aware, more reactive agent whose behaviour is known, while using game-theoretic tools.
These researches usually deploy the main idea of letting each robot execute a simple algorithm and plan its motion adaptively based on the observed movement of other agents, so that the agents as a group will achieve the given goal. 

Fu et al. \cite{fu2018influencing} observe on this problem from a different perspective. They focus on lower-density settings where interactions are rarer and flock formation is more difficult. They study how such influencing agent priorities must change in these settings to be successful and propose new influencing agent strategies to adapt to the challenges posed by these settings. Low-density settings are important to study because they capture dynamics in situations where flocking may not occur naturally, but where we might want to instigate flocking behaviour; imagine a herd of buffalo that is currently grazing, or a spooked flock of birds where individual agents fail to coordinate.

Considerable research into ad-hoc teamwork has established a number of theoretical and empirical methods for handling unknown teammates, but these analyses largely focus on simple scenarios that may not be representative of the real problems that agents may encounter in influencing teams. Barret and Stone \cite{barrett2015cooperating} come to address this gap. The main contribution is the introduction of an algorithm (PLASTIC–Policy) for selecting effective actions for cooperating with a variety of teammates in complex domains. This algorithm learns policies of how to cooperate with previous teammates and then reuses this knowledge to cooperate with new teammates. Learning these policies is treated as a \textit{reinforcement learning problem}, where the inputs are features derived from the world state and the actions are soccer moves such as passing or dribbling. Then, PLASTIC–Policy observes the current teammates’ behaviours to select which of these policies will lead to the best performance for the team.

\subsection{Consensus in Groups of Autonomous Agents}
In the traditional consensus problem of distributed computing \cite{fischer1983consensus}, the goal is for all processors in a network to agree on a certain value. The problem becomes challenging as the network becomes more complex, and the consensus protocol should handle sparse networks, asynchronous timing and possible faults in the system \cite{fischer1985impossibility,olfati2004consensus,olfati2007consensus}.

Regarding the consensus problem in swarm robotics, the dynamic nature of the network is brought to its extreme: the network topology changes throughout the execution, and a robot controls not only the decision of how to change its own value, but also its movement. The robot’s choice of action in the physical world has an impact on what its neighbors observe, and thus also its own state in the future \cite{fax2004information,olfati2004consensus,olfati2007consensus}. Modeling those topologies is usually done using graph theory and matrix theory. 

According to Ren and Beard \cite{ren2005consensus}, those topologies are depicted using directed graphs and the minimum requirements to reach consensus are explored in the case of strongly connected graphs. More specifically, it is shown that the agents reach a consensus asymptotically if and only if the associated interaction graph has a spanning tree.

Olfati-Saber and Murray \cite{olfati2004consensus} examine directed networks with fixed or switching topologies. They present two consensus protocols which solve agreement problems in a network of either continuous-time (CT) integrator or discrete-time (DT) model. One protocol depicts either a fixed or a switching topology with zero communication time-delay, while the other depicts a fixed topology with communication time-delay corresponding to every edge. They also present the model using a linear system, induced by the \textit{Laplacian} matrix or the \textit{Perron} matrix \cite{olfati2004consensus}, both induced by the interaction graph of the topology. Their paper also examines the spectral properties of both matrices, while proving conditions under which the flock reaches a consensus.

On the contrary, some researches examine the coordination of flocking agents using the Vicsek Model \cite{vicsek1995novel}. According to this model, each agent's heading is updated using a simple local rule based on the average of its own heading plus the headings of its neighbors.

Jadbabaie et al. \cite{jadbabaie2003coordination}, both leaderless and leader following coordinations are examined. For each case, this article gives us conditions under which the flock reaches a consensus upon its orientation. In the case of leaderless coordination, it is shown that given an infinite sequence of contiguous, nonempty, bounded, time-intervals, with the property that across each such interval the agents are linked together, the flock will reach a consensus. In the case of leader following coordination, a similar condition is proven.

Some researches use the Laplacian matrix's spectral properties for exploring the consensus problem. The second eigenvalue of the Laplacian is widely used in Olfati-Saber and Murray \cite{olfati2004consensus}. This eigenvalue is known as the \textit{algebraic connectivity} of the graph associated with the Laplacian matrix, and aside from depicting the connectivity of the graph, this article gives us the speed in which the flock reaches a consensus.

Moreover, Olfati-Saber and Murray \cite{olfati2004consensus} also use the spectral properties of the Laplacian to provide conditions for solving the average-consensus problem. It considers the case of topologies with either zero communication time-delay or with communication time-delay and shows conditions under which consensus is achieved.

Olfati-Saber et al. \cite{olfati2007consensus} also examine the spectral properties of the Laplacian in order to investigate the average-consensus problem (in which the consensus value equals to the average of the the initial value of the state of each agents). It gives conditions under which a consensus is asymptotically reached for all initial states, using these spectral properties. It considers both the continuous-time (CT) case and the discrete-time (DT) case. In the DT case, the \textit{Perron} matrix is examined, while examining its connection to the Laplacian matrix.

Hong et al. \cite{hong2007lyapunov} consider neighbor-based rules which are adopted to realize local control strategies for these continuous-time autonomous agents described by double integrators. Although the inter-agent connection structures vary over time and related graphs may not be connected, a sufficient condition to make all the agents converge to a common value is given for the problem by a proposed Lyapunov-based approach and related space decomposition technique.

The Lyapunov-based approach also provides a way of showing the speed in which a dynamics vanishes \cite{olfati2007consensus,olfati2004consensus}. A disagreement dynamics is shown, which measures how far the agents are from a consensus. This article uses the Lyapunov-based approach to show the speed in which the agents reach a consensus asymptotically. 

\section{The Flocking Model}
\label{sec:flocking-model}
The flock comprises of $k$ flocking agents, trying to converge to the same orientation $\theta^*$, and $m$ influencing agents. On the one hand, the flocking agents $A^F : = \{a_0,...,a_{k-1}\}$ are agents which we cannot directly control and we would like to consider the implications of various definitions for their global orientation separately. On the other hand, we can control the behaviour of the influencing agents $A^D := \{a_k,...,a_{n-1}\}$, where $n = m + k$. We assume that the influencing agents are capable of communication, they have full knowledge of the world and a high computability.

In contrast to the flocking agents, each influencing agent's position is updated according to their assumed behaviour. That is, it is not necessarily updated after its orientation is updated in accordance to a specific update rule.

At each time step $t$, the number of influencing agents inside $a_i$'s neighborhood is denoted by $m_i (t)$ and the number of flocking agents inside the neighborhood is denoted by $k_i (t)$, where $n_i (t) = m_i (t) + k_i (t) \leq n$.

\textit{Lemma 2} from \cite{genter2013ad} will be used throughout the entire research. This lemma is phrased as follows:
\begin{corollary}
	\label{Same Orientaion}
	When $m_i(t)$ influencing agents work together to influence $k_i(t)$ flocking agents to align the team to some orientation, it suffices to consider only algorithms that choose at each time step just one orientation for all the  agents to adopt.
\end{corollary}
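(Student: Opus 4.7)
The plan is to exploit the convex-combination structure of the Vicsek update rule: at each time step, every flocking agent updates its heading to the arithmetic mean of the headings inside its neighborhood (itself included), which in particular contains the headings broadcast by any influencing agents currently within its sensing radius. Consequently, the new orientation of any flocking agent is a convex combination of the orientations chosen by those influencing agents together with the current orientations of the neighboring flocking agents, and this convexity is the lever that lets us collapse a multi-orientation strategy into a single-orientation one.

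First I would fix an arbitrary algorithm $\mathcal{A}$ under which, at time $t$, the influencing agents adopt possibly distinct orientations $\phi_k(t),\ldots,\phi_{n-1}(t)$, and compare it against a uniform algorithm $\mathcal{A}'$ in which every influencing agent adopts a single common orientation $\bar\phi(t)$. Writing out the update for an arbitrary flocking agent $a_i$ gives
\begin{equation*}
\theta_i(t+1) \;=\; \frac{1}{n_i(t)}\!\left(\theta_i(t) \;+\; \sum_{a_j\in F_i(t)}\theta_j(t) \;+\; \sum_{a_\ell\in D_i(t)}\phi_\ell(t)\right),
\end{equation*}
where $F_i(t)$ and $D_i(t)$ denote, respectively, the flocking and influencing neighbors of $a_i$ at time $t$. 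Under $\mathcal{A}'$ the rightmost sum collapses to $m_i(t)\,\bar\phi(t)$. Setting $\bar\phi(t)$ equal to the desired orientation $\theta^*$ makes the influencing contribution to every $a_i$'s update equal to $m_i(t)\,\theta^*$, which is exactly the target value; any non-uniform choice under $\mathcal{A}$ would, for at least one neighborhood, pull that weighted mean further from $\theta^*$.

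Second I would lift this per-step guarantee to a global one by induction on $t$. The inductive step leans on a monotonicity property of the Vicsek dynamics: if at time $t+1$ every flocking agent is at least as close to $\theta^*$ under $\mathcal{A}'$ as under $\mathcal{A}$, then averaging at step $t+2$ preserves this dominance, since the convex hull of headings under $\mathcal{A}'$ is contained in a symmetric interval around $\theta^*$ that is no wider than the corresponding interval under $\mathcal{A}$. Coupled with the per-step dominance above, this proves that the single-orientation policy that broadcasts $\bar\phi(t)\equiv\theta^*$ matches or outperforms any multi-orientation policy.

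The hardest part will be pinning down the optimality criterion so the reduction is unambiguous. Because different flocking agents have \emph{different} neighborhoods $D_i(t)$, the orientation that would most aggressively pull a given $a_i$ toward $\theta^*$ in a single step could, a priori, depend on $i$, and the single-orientation policy must commit to one $\bar\phi(t)$. The resolution is that once we agree the team goal is to drive each $|\theta_i(t+1)-\theta^*|$ to zero, the elementary fact that a convex combination is closest to $\theta^*$ precisely when every free term equals $\theta^*$ identifies a \emph{common} optimizer $\bar\phi(t)=\theta^*$ that is simultaneously best for every $i$, so the ambiguity dissolves and the reduction goes through.
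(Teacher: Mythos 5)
The paper itself offers no proof of this statement: it is quoted verbatim as \emph{Lemma 2} of Genter and Stone \cite{genter2013ad} and simply imported. So your attempt must be judged against the standard argument there, which is a pure \emph{reduction} argument: since the Vicsek update of a flocking agent $a_i$ depends on the influencing agents' orientations only through the sum $\sum_{a_\ell\in D_i(t)}\phi_\ell(t)$, replacing every $\phi_\ell(t)$ by the single common value $\bar\phi(t)=\frac{1}{m_i(t)}\sum_{\ell}\phi_\ell(t)$ leaves each update \emph{identical}, step by step. That exact equivalence is the entire content of the corollary; no optimality claim and no induction on the dynamics is needed.

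Your proof instead tries to establish something stronger and, as stated, false. The claim that setting $\bar\phi(t)=\theta^*$ is ``simultaneously best for every $i$'' and that ``any non-uniform choice would pull that weighted mean further from $\theta^*$'' does not hold: because $\theta_i(t+1)$ is a convex combination in which the flocking neighbors' current headings appear with fixed positive weight, an influencing agent can often do strictly better by \emph{overshooting} the target, choosing $\phi_\ell(t)$ beyond $\theta^*$ so that the resulting average lands exactly on $\theta^*$ in one step --- this is precisely the idea behind the 1-step lookahead algorithm of \cite{genter2013ad}, and it shows the optimal common orientation is generally not $\theta^*$. The corollary asserts only that \emph{one} common orientation suffices, not that it equals $\theta^*$. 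Your inductive ``monotonicity'' step is also unsupported: pointwise dominance of $|\theta_i(t+1)-\theta^*|$ is not preserved by the coupled averaging dynamics, and under the two policies the agents' positions (hence the neighbor sets $N_i(t)$) diverge, so the comparison at step $t+2$ is not even between the same graphs. The fix is to abandon the optimality framing entirely and argue the exact per-step equivalence via the sum-only dependence noted above.
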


Flocking agents update their orientation based on the orientations of the other agents in their neighborhood, defined by the \textit{visibility radius}. Let $N_i (t)$ be the set of $n_i (t) \leq n$ agents (including agent $a_i$) at time $t$ which are located within a \textit{visibility radius} $R$ of agent $a_i$. This means that each agent's \textit{visibility cone} equals to $2\pi$. We shall make the following definitions:
\begin{definition}
    \label{def:flocking-neighbors-graph}
	(\textit{Flocking Neighbors Graph})
	The relationships between neighbors which exist in time step $t$ and influence the form of the update equations can be described by a simple, \textbf{directed} graph (digraph) $\mathcal{G}(t) = (\mathcal{V},\mathcal{E}(t))$, where: 
	\begin{itemize}
	\item $\mathcal{V} = \{0,...,k-1\}$ - The vertex set represents the flocking agents.
	\item $\mathcal{E}(t) = \{(i,j) \in \mathcal{V} \times \mathcal{V} | a_i \neq a_j \in N_i(t)\}$
	\end{itemize}
	This graph will be named the \textit{Flocking Neighbors Graph}. We denote its number of connected components at time $0$ by $\eta$.
\end{definition}
\begin{definition}
\label{def:strongly-connected-digraphs}
(\textit{Strongly Connected Digraphs})
A digraph $\mathcal{G}=(\mathcal{V},\mathcal{E})$ is \textit{strongly connected} if there is a directed path connecting any two arbitrary nodes $s,t \in \mathcal{V}$ of the graph.
\end{definition}
\begin{definition}
\label{def:balanced-digraphs}
(\textit{Balanced Graphs} \cite{olfati2004consensus})
A graph $\mathcal{G}=(\mathcal{V},\mathcal{E})$ (which is either directed or undirected) with an adjacency matrix $A$ is \textit{balanced} if $\sum_{j \neq i} A[i,j] = \sum_{j \neq i} A[j,i]$ for all $i \in \mathcal{V}$.
\end{definition}
\begin{definition}
\label{def:fixed-topology}
(\textit{Fixed Topology})
If there exists some digraph $\mathcal{G}=(\mathcal{V},\mathcal{E})$ for which $\mathcal{G}(t) = \mathcal{G}$ for any time step $t$, we say that this is a \textit{fixed topology}.
\end{definition}
\begin{definition}
\label{def:switching-topolgy}
(\textit{Switching Topology})
A \textit{switching topology} can be modeled using a dynamic graph $\mathcal{G}(t) = \mathcal{G}_{\sigma(t)}$ parameterized with a switching signal $\sigma(t):\mathbb{N} \rightarrow \mathcal{Q}_k$, where $\mathcal{Q}_k$ denotes a suitably defined set, indexing the class of all simple graphs defined on $k$ vertices.
\end{definition}

We denote the \textit{adjacency matrix} and the \textit{degree matrix} of the flocking neighbors graph at time step $t$ by $A(t),D(t)$ (respectively). We also denote by $L(t) = D(t) - A(t)$ the \textit{graph Laplacian} of the flocking neighbors graph at time step $t$, whose elements are defined as follows:
$$L(t)[i,j] = 
\begin{cases}
-1, \quad a_j \in N_i(t) \\ 
|N_i(t)|, \quad j=i
\end{cases}$$

We denote by $P(t) = I - \varepsilon L(t)$ the \textit{Perron matrix} of the flocking neighbors graph at time step $t$, where $I$ is the identity matrix and $\varepsilon > 0$ is the step-size.

We denote the global orientation of a flocking agent $a_i$, $0 \leq i \leq k-1$, at time step $t+1$, by $\theta_i (t+1)$. Each agent $a_i$ moves with velocity $v_i$. At each time step $t$, each agent $a_i$ has a position $p_i(t) = (x_i (t),y_i (t))$ in the environment and an orientation $\theta_i (t)$. According to \cite{genter2015determining}, each agent's position $p_i (t)$ at time $t$ is updated after its orientation is updated, such that:
$$\begin{cases}
x_i(t) = x_i(t-1) + v_i cos(\theta_i(t)) \\
y_i(t) = y_i(t-1) - v_i sin(\theta_i(t))
\end{cases}$$

The influencing agents join the flock in order to influence its members to behave in a particular way. We consider the \textbf{Drop} case \cite{genter2015determining} for the initial placements $\{p_k (0),...,p_{n-1} (0)\}$ of the influencing agents, in which the influencing agents are at their desired location at time $t=0$. 

Alternatively, we could define a neighboring graph which also takes the influencing agents into consideration. This graph's definition is as follows:
\begin{definition}
    \label{def:influencing-neighbors-graph}
	(\textit{influencing Neighbors Graph})
	The relationships between neighbors (both flocking and influencing) which exist in time step $t$ and influence the form of the update equations can be described by a simple, directed graph (digraph) $\tilde{\mathcal{G}}(t) = (\tilde{\mathcal{V}},\tilde{\mathcal{E}}(t))$, where: 
	\begin{itemize}
	\item $\tilde{\mathcal{V}} = \{0,...,n-1\}$ - The vertex set representing both the flocking agents and the influencing agents.
	\item $\tilde{\mathcal{E}}(t) = \{(i,j) \in \mathcal{V} \times \mathcal{V} | a_i \neq a_j \in N_i(t)\}$
	\end{itemize}
	This graph will be named the \textit{influencing Neighbors Graph}. Clearly, for all $t \geq 0$, we denote the adjacency matrix, the degree matrix, the Laplacian matrix and the Perron matrix of $\tilde{\mathcal{G}}(t)$ by $\tilde{A}(t),\tilde{D}(t),\tilde{L}(t), \tilde{P}(t)$ (respectively).

	\underline{\textbf{Note:}} Definitions \ref{def:fixed-topology} and \ref{def:switching-topolgy} also hold for an influencing neighbors graph.
\end{definition}

Throughout the entire research, we will be considering the connected components of either the flocking neighbors graph or the influencing neighbors graph and the subgraph which they induce upon them. For this sake, we make the following definition:
\begin{definition}
    \label{def:induced-subgraph}
    (\textit{Induced Subgraph})
    Let $\mathcal{G} = (\mathcal{V},\mathcal{E})$ be some simple graph. For some subset $\mathcal{U} \subseteq \mathcal{V}$ of vertices, $\mathcal{G}(\mathcal{U}) = (\mathcal{U},\mathcal{E}(\mathcal{U}))$ with $\mathcal{E}(\mathcal{U}) = \{(v,u) \in \mathcal{U} \times \mathcal{U} | (v,u) \in \mathcal{E}\}$ is called the \textit{subgraph} of $\mathcal{G}$ \textit{induced} by $\mathcal{U}$.
\end{definition}

It should be noted that the update rule for each agent's position is performed in accordance to the update rule defining the global orientation of each agent. For each \textit{flocking} agent, it is either the discrete-time update rule presented in Subsection \ref{sec:dt-update-rule} or the continuous-time update rule presented in Subsection \ref{sec:ct-update-rule}). These update rules are based on the Vicsek Model \cite{vicsek1995novel}. 

Agent $a_j$'s position in the environment at time step $t$ is located at angle $\beta_{ij}(t) \leq \pi$ with respect to $a_i$'s position.

\subsubsection{Discrete-Time Update Rule}
\label{sec:dt-update-rule}
The global orientation of a flocking agent $a_i$, $0 \leq i \leq k-1$, at time step $t+1$, $\theta_i (t+1)$, can be defined in two ways in the discrete-time (DT) case. According to Olfati-Saber et al. \cite{olfati2007consensus}, if we denote $\theta(t) = [\theta_i(t)]$ ($0 \leq i \leq k-1$), it can be defined as follows:
\begin{equation}
    \label{eq:orientation-perron}
    \theta (t+1) = P(t)\theta (t)
\end{equation}

According to \cite{genter2015determining}, it can be set to be the average orientation of all agents in $N_i (t)$ (including itself) at time step $t$. This definition is actually a special case of the previous one, in the manner that it is a variation of the case $\varepsilon = 1$:	
\begin{equation}
    \label{eq:orientation-calcdiff}
    \theta_i (t+1) = \theta_i (t) + \frac{1}{n_i (t)} \sum_{a_j \in N_i (t)}calcDiff(\theta_j (t),\theta_i (t))
\end{equation}

\begin{Algorithm}
    \label{alg:calc-diff}
    \captionof{algorithm}{$calcDiff(\theta_i (t) , \theta_j (t))$  \cite{genter2015determining}}
    \centering
  \fbox{\begin{minipage}{\columnwidth}
	\begin{algorithmic}[1]
		\If{$((\theta_i (t) - \theta_j (t)) \geq - \pi \wedge (\theta_i (t) - \theta_j (t)) \leq \pi))$} 
		
		\State return $\theta_i (t) - \theta_j (t) $
		
		\Else \If{$\theta_i (t) - \theta_j (t) < - \pi$}
		
		\State return $2\pi + (\theta_i (t) - \theta_j (t))$
		
		\Else
		
		\State return $(\theta_i (t) - \theta_j (t)) - 2\pi$
		
		\EndIf
		
		\EndIf
		
	\end{algorithmic}
  \end{minipage}}
\end{Algorithm}

Instead of taking the mathematical average orientation of all agents, we use Equation 1 because of the special cases handled by Algorithm 1 (from \cite{genter2015determining}). As a convention, we assume $\theta_i (t)$ is within $[0,2\pi)$.  
\subsubsection{Continuous-Time Update Rule}
\label{sec:ct-update-rule}
The global orientation of an agent $a_i$, $0 \leq i \leq n-1$, at time step $t$ can be defined in two ways in the continuous-time (CT) case. According to \cite{olfati2004consensus}, it can be defined using the following linear system:
\begin{equation}
    \label{eq:global-orient-linear}
    \dot{\theta_i}(t) = \sum_{a_j \in N_i(t)} \tilde{A}(t)[i,j] (\theta_j(t) - \theta_i(t))
\end{equation}

The following lemma gives us an equivalent definition in matrix terms:

\begin{lemma}
\label{lemma:global-orient-linear-matrix}
    We denote the adjacency matrix, the degree matrix and the Laplcian matrix of the influencing neighbors graph at time step $t$ by $\tilde{A}(t),\tilde{D}(t),\tilde{L}(t)$ (respectively). Denoting $\tilde{\theta}(t) = [\theta_i(t)]$ ($0 \leq i \leq n-1$), the following heading system is equivalent to Equation \ref{eq:global-orient-linear}: $$\dot{\tilde{\theta}}(t) = - \tilde{L}(t) \tilde{\theta}(t)$$
\end{lemma}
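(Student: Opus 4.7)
The plan is to prove the two expressions coincide componentwise by simply unpacking the definition $\tilde L(t) = \tilde D(t) - \tilde A(t)$ that was given just before the discrete-time subsection. Since both sides of the claimed identity are vectors in $\mathbb{R}^n$, it suffices to verify that the $i$-th entry of $-\tilde L(t)\tilde\theta(t)$ equals the right-hand side of Equation~\ref{eq:global-orient-linear}.

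First I would fix an index $0 \le i \le n-1$ and expand
\[
\bigl[-\tilde L(t)\tilde\theta(t)\bigr]_i \;=\; -\sum_{j=0}^{n-1} \tilde L(t)[i,j]\,\theta_j(t) \;=\; -\tilde D(t)[i,i]\,\theta_i(t) \;+\; \sum_{j=0}^{n-1} \tilde A(t)[i,j]\,\theta_j(t),
\]
using that $\tilde D(t)$ is diagonal and $\tilde L(t) = \tilde D(t) - \tilde A(t)$. Next I would observe that by the definition of the degree matrix, $\tilde D(t)[i,i] = \sum_{j \ne i} \tilde A(t)[i,j]$, and that $\tilde A(t)[i,j] = 0$ whenever $a_j \notin N_i(t)$ (and also for $j=i$, since the influencing neighbors graph contains no self-loops by Definition~\ref{def:influencing-neighbors-graph}). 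Substituting this into the previous display yields
\[
\bigl[-\tilde L(t)\tilde\theta(t)\bigr]_i \;=\; -\Bigl(\sum_{a_j \in N_i(t)} \tilde A(t)[i,j]\Bigr)\theta_i(t) \;+\; \sum_{a_j \in N_i(t)} \tilde A(t)[i,j]\,\theta_j(t) \;=\; \sum_{a_j \in N_i(t)} \tilde A(t)[i,j]\bigl(\theta_j(t) - \theta_i(t)\bigr),
\]
which is precisely $\dot\theta_i(t)$ by Equation~\ref{eq:global-orient-linear}. Collecting these identities over all $i$ gives the claimed matrix form.

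There is no real obstacle here; the lemma is just a bookkeeping translation between the scalar sum-form and the Laplacian matrix-form, and the only subtlety worth flagging explicitly is that the diagonal convention of $\tilde A(t)$ and the relation between row sums of $\tilde A(t)$ and diagonal entries of $\tilde D(t)$ must be used in exactly the form dictated by the graph (no self-loops, so the sums over $j \ne i$ and over $a_j \in N_i(t)$ agree). I would mention this once, then conclude the proof.
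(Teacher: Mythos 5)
Your proposal is correct and follows essentially the same route as the paper's proof: both expand $-\tilde L(t)\tilde\theta(t)$ componentwise via $\tilde L(t)=\tilde D(t)-\tilde A(t)$, use the fact that the diagonal of $\tilde D(t)$ is the corresponding row sum of $\tilde A(t)$, and regroup into $\sum_{a_j\in N_i(t)}\tilde A(t)[i,j](\theta_j(t)-\theta_i(t))$. The only cosmetic difference is that the paper writes the degree as a sum over $a_j\in N_i(t)$ and subtracts $\tilde A(t)[i,i]$ explicitly, whereas you invoke the no-self-loop convention up front; these agree since $\tilde A(t)[i,i]=0$.
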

\begin{proof}
    The elements of the degree matrix are defined as follows: 
    $$\tilde{D}(t)[i,j] = 
    \begin{cases}
    \sum_{a_j \in N_i(t)}\tilde{A}[t][i,j], \quad i = j\\ 
    0, \quad j \neq i
    \end{cases}$$
    Since $\tilde{L}(t) = \tilde{D}(t) - \tilde{A}(t)$ by definition, then its elements are defined as follows:
    $$\tilde{D}(t)[i,j] = 
    \begin{cases}
    \sum_{a_j \in N_i(t)}\tilde{A}[t][i,j] - \tilde{A}(t)[i,i], \quad i = j\\ 
    - \tilde{A}(t)[i,j], \quad j \neq i
    \end{cases}$$
    Therefore:
    $$(- \tilde{L}(t) \tilde{\theta}(t))[i] = \bigg[\tilde{A}(t)[i,i] - \sum_{a_j \in N_i(t)}\tilde{A}[t][i,j]\bigg]\theta_i(t) +$$
    $$+ \sum_{a_i \neq a_j \in N_i(t)}\tilde{A}(t)[i,j]\theta_j(t) =$$ $$=\sum_{a_j \in N_i(t)} \tilde{A}(t)[i,j] (\theta_j(t) - \theta_i(t)) = \dot{\theta_i}(t)$$
\end{proof}

Fax and Murray \cite{fax2004information} introduced the following version of a Laplacian-based system in a flock with 0–1 weights: 
\begin{equation}
    \label{eq:global-orient-zero-one}
    \dot{\theta_i}(t) = \frac{1}{|N_i(t)|}\sum_{a_j \in N_i(t)} (\theta_j(t) - \theta_i(t))
\end{equation}

The following lemma gives us an equivalent definition in matrix terms:
\begin{lemma}
\label{lemma:global-orient-zero-one-matrix}
    We denote the adjacency matrix, the degree matrix and the Laplcian matrix of the influencing neighbors graph at time step $t$ by $\tilde{A}(t),\tilde{D}(t),\tilde{L}(t)$ (respectively). Denoting $\tilde{\theta}(t) = [\theta_i(t)]$ ($0 \leq i \leq n-1$), the following heading system is equivalent to Equation \ref{eq:global-orient-zero-one}: $$\dot{\tilde{\theta}}(t) = (\tilde{D}(t) ^ {-1} \tilde{A}(t) - I) \tilde{\theta}(t)$$
\end{lemma}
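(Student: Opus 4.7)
The plan is to mirror closely the proof of the preceding Lemma \ref{lemma:global-orient-linear-matrix}, exploiting the fact that $\tilde{D}(t)$ is diagonal with entries $\tilde{D}(t)[i,i] = |N_i(t)|$, so that premultiplication by $\tilde{D}(t)^{-1}$ amounts to a row-by-row normalization of $\tilde{A}(t)$. The desired equivalence is therefore a direct matrix-versus-scalar restatement of Equation \ref{eq:global-orient-zero-one}, and the work is essentially bookkeeping: unfold the $i$-th component of the right-hand side of the proposed matrix equation and match it term-by-term with Equation \ref{eq:global-orient-zero-one}.

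Concretely, I would proceed in three short steps. First, invoke the structure of the degree matrix to write $(\tilde{D}(t)^{-1})[i,i] = 1/|N_i(t)|$ and $(\tilde{D}(t)^{-1})[i,j] = 0$ for $j \neq i$, so that $(\tilde{D}(t)^{-1}\tilde{A}(t))[i,j] = \tilde{A}(t)[i,j]/|N_i(t)|$. Second, expand the $i$-th row:
\begin{equation*}
\bigl((\tilde{D}(t)^{-1}\tilde{A}(t) - I)\tilde{\theta}(t)\bigr)[i] \;=\; \frac{1}{|N_i(t)|}\sum_{j=0}^{n-1}\tilde{A}(t)[i,j]\,\theta_j(t) \;-\; \theta_i(t).
\end{equation*}
Because $\tilde{A}(t)[i,j]$ vanishes for $a_j \notin N_i(t)$ (and for $j=i$, by the no-self-loop convention of Definition \ref{def:influencing-neighbors-graph}), the sum collapses to a sum over $a_j \in N_i(t)$. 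Third, rewrite $\theta_i(t) = \frac{1}{|N_i(t)|}\sum_{a_j \in N_i(t)}\theta_i(t)$ and combine the two sums to obtain $\frac{1}{|N_i(t)|}\sum_{a_j \in N_i(t)}(\theta_j(t)-\theta_i(t))$, which is exactly $\dot{\theta_i}(t)$ by Equation \ref{eq:global-orient-zero-one}.

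The only potential obstacle is a notational one rather than a mathematical one: the paper's convention about whether $a_i$ itself belongs to $N_i(t)$ is inconsistent (the definition of $N_i(t)$ includes $a_i$, yet the digraph has no self-loops). I would resolve this up front by noting that any self-loop contribution is annihilated either by $\tilde{A}(t)[i,i]=0$ or by the zero summand $\theta_i(t)-\theta_i(t)=0$, so the equivalence holds regardless of which reading one adopts. Once that is clarified, the computation is essentially a one-liner and the lemma follows.
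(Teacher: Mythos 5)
Your proof is correct and amounts to the same verification the paper performs: the paper simply observes that Equation \ref{eq:global-orient-zero-one} is Equation \ref{eq:global-orient-linear} for a reweighted graph whose adjacency matrix is $\tilde{D}(t)^{-1}\tilde{A}(t)$ and whose degree matrix is $I$, and then cites Lemma \ref{lemma:global-orient-linear-matrix}, whereas you unfold the $i$-th row directly --- the underlying computation is identical. One small caution: your claim that the equivalence holds ``regardless of which reading one adopts'' for whether $a_i \in N_i(t)$ is not quite right, because the self-term affects not only the numerator (where it is indeed annihilated) but also the normalizer $|N_i(t)|$ and the diagonal entry $\tilde{D}(t)[i,i]$; if one counts $a_i$ in $|N_i(t)|$ in Equation \ref{eq:global-orient-zero-one} while $\tilde{D}(t)[i,i]$ is the self-loop-free row sum of $\tilde{A}(t)$, the two sides differ by $\theta_i(t)/|N_i(t)|$ in each row. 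The identity holds precisely when the same convention is used consistently for both the degree matrix and the normalizing factor, which is the reading the paper (implicitly) intends.
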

\begin{proof}
    This is a special case of an influencing neighbors graph $\mathcal{G}^*(t)$, for which $I$ is the degree matrix and $\tilde{D}(t) ^ {-1} \tilde{A}(t)$ is the adjacency matrix. Thus, based on Lemma \ref{lemma:global-orient-linear-matrix}, we have that:
    $$\dot{\tilde{\theta}}(t) = - (I - \tilde{D}(t) ^ {-1} \tilde{A}(t)) \tilde{\theta}(t) = (\tilde{D}(t) ^ {-1} \tilde{A}(t) - I) \tilde{\theta}(t)$$
\end{proof}

\section{Influencing Connected Components}
\label{sec:influencing-connected-components}
In this section, we deal with reaching a consensus among agents, which correspond to a single connected component in either the flocking neighbors graph or the influencing neighbors graph. Each kind of topology (either fixed or switching) affects the influence an influencing agent has upon some connected component. Furthermore, in each case, we shall take into consideration both the discrete-time update rules defined in Subsection \ref{sec:dt-update-rule} (Equations \ref{eq:orientation-perron} and \ref{eq:orientation-calcdiff}) and the continuous-time update rules defined in Subsection \ref{sec:ct-update-rule} (Equations \ref{eq:global-orient-linear} and \ref{eq:global-orient-zero-one}). Both kinds of update rules affect the speed at which the flocking agents converge to their desired orientation.

First, we show a simplification of the update rule defined by Equation \ref{eq:orientation-calcdiff} called the \textbf{normalized Perron matrix} and show some properties regarding it (Subsection \ref{sec:normalized-Perron-matrix}). Then, we consider the influence of the influencing agents, both in the fixed topology case (Subsection \ref{sec:fixed-topology}) and in the switching topology case (Subsection \ref{sec:switching-topology}). More accurately, in each case, we give theoretical proofs regarding that influencing agents with a \textit{Face Desired Orientation behaviour} are sufficient for guaranteeing consensus, while employing them into a flocking model that is based on the Vicsek Model \cite{vicsek1995novel}.

\subsection{Normalized Perron Matrix}
\label{sec:normalized-Perron-matrix}
When considering Equation \ref{eq:orientation-calcdiff}, we would like to also consider taking the mathematical average orientation of all agents, which is a simplification of the update rule defined by Equation \ref{eq:orientation-calcdiff}. 

The following lemma gives us a property of Algorithm \ref{alg:calc-diff}. This lemma claims that the difference between the orientations of two agents is always within $[-\pi,\pi]$ since the difference of $\pi + \varepsilon$ is equivalent to a difference of $\varepsilon - \pi$ in the opposite direction
\begin{lemma}
	\label{lemma:difference-restricted}
	For each pair of agents $a_i,a_j$, the following holds for all time step $t \geq 0$: $calcDiff(\theta_i(t),\theta_j(t)) \in [-\pi,\pi]$.
\end{lemma}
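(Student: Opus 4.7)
The plan is to do a direct case analysis on the three branches of Algorithm \ref{alg:calc-diff}, using the convention stated just before the lemma that $\theta_i(t), \theta_j(t) \in [0, 2\pi)$. This immediately gives the a priori bound $\theta_i(t) - \theta_j(t) \in (-2\pi, 2\pi)$, which is the only ingredient one needs on top of the algorithm's own guards.

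First I would observe that the algorithm partitions all possible inputs into exactly three regimes according to the sign and magnitude of $\delta := \theta_i(t) - \theta_j(t)$: (i) $\delta \in [-\pi, \pi]$, (ii) $\delta < -\pi$, and (iii) $\delta > \pi$. In regime (i), the returned value is $\delta$ itself, which is inside $[-\pi, \pi]$ by assumption, so nothing is to prove. In regime (ii), combining $\delta < -\pi$ with $\delta > -2\pi$ gives $2\pi + \delta \in (0, \pi)$, and the algorithm returns $2\pi + \delta$; symmetrically, in regime (iii), combining $\delta > \pi$ with $\delta < 2\pi$ gives $\delta - 2\pi \in (-\pi, 0)$, and the algorithm returns $\delta - 2\pi$. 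Taking the union of the three output intervals yields the claim $calcDiff(\theta_i(t), \theta_j(t)) \in [-\pi, \pi]$.

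I do not anticipate any real obstacle here: the result is a bookkeeping check that Algorithm \ref{alg:calc-diff} correctly implements the reduction of an angular difference to the fundamental domain $[-\pi, \pi]$, and the conceptual content is just the intuition stated in the lemma, namely that a difference of $\pi + \varepsilon$ in one direction is the same signed angular separation as $\varepsilon - \pi$ in the opposite direction. The only point worth writing out carefully is that the convention $\theta_i(t) \in [0, 2\pi)$ guarantees $|\delta| < 2\pi$, which is exactly what makes the translations by $\pm 2\pi$ in branches (ii) and (iii) land inside $[-\pi, \pi]$ rather than overshoot.
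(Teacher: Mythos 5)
Your proposal is correct and follows essentially the same route as the paper: a case analysis on the three branches of Algorithm \ref{alg:calc-diff}, using the convention $\theta_i(t),\theta_j(t)\in[0,2\pi)$ to get $|\theta_i(t)-\theta_j(t)|<2\pi$. The only (cosmetic) difference is that in the second and third branches the paper establishes one side of the bound by contradiction, whereas you derive both sides directly and in fact land in the slightly tighter intervals $(0,\pi)$ and $(-\pi,0)$.
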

\begin{proof}
	Let $t \geq 0$ be some time step. Let $a_i,a_j$ be two flocking agents. The proof is divided into three parts:
	\begin{enumerate}
	\item \underline{$\theta_i(t) - \theta_j (t) \in [-\pi,\pi]$:} According to lines 1-2 of Algorithm \ref{alg:calc-diff}: $$calcDiff(\theta_i(t),\theta_j(t)) = \theta_i(t) - \theta_j (t)$$
	Therefore, the proof for this case is derived instantly.
	\item \underline{$\theta_i(t) - \theta_j (t) < -\pi$:} According to lines 4-5 of Algorithm \ref{alg:calc-diff}: $$calcDiff(\theta_i(t),\theta_j(t)) = 2\pi + (\theta_i(t) - \theta_j (t)) < 2\pi - \pi = \pi$$
	where the inequality follows from our assumption in this case. We assume by contradiction that $calcDiff(\theta_i(t),\theta_j(t)) < -\pi$. Therefore:
	$$2\pi + (\theta_i(t) - \theta_j (t)) < -\pi \Rightarrow \theta_i(t) - \theta_j (t) < -3\pi$$
	Since $\theta_i(t),\theta_j(t) \in [0,2\pi)$, then: $\theta_i(t) - \theta_j (t) > 0 - 2\pi = -2\pi$.
	From the previous two inequalities we infer that $-2\pi < -3\pi$ - \underline{a contradiction}.
	\item \underline{$\theta_i(t) - \theta_j (t) > \pi$:} According to line 7 of Algorithm \ref{alg:calc-diff}: $$calcDiff(\theta_i(t),\theta_j(t)) = (\theta_i(t) - \theta_j (t)) - 2\pi > \pi - 2\pi = -\pi$$
	where the inequality follows from our assumption in this case. We assume by contradiction that $calcDiff(\theta_i(t),\theta_j(t)) > \pi$. Therefore:
	$$(\theta_i(t) - \theta_j (t)) - 2\pi > \pi \Rightarrow \theta_i(t) - \theta_j (t) > 3\pi$$
	Since $\theta_i(t),\theta_j(t) \in [0,2\pi)$, then: $\theta_i(t) - \theta_j (t) < 2\pi - 0 = 2\pi$.
	From the previous two inequalities we infer that $2\pi > 3\pi$ - \underline{a contradiction}.
	\end{enumerate}
\end{proof}

Moreover, since $\theta_i(t)$ is assumed to be within $[0,2\pi)$ for each agent $a_i$, the mathematical average orientation of all agents is restricted to be within $[0,2\pi)$ as well. Thus, we infer that for any agent $a_i$ and for any time step $t \geq 0$: $$\frac{1}{n_i(t)} \sum_{a_j \in N_i (t)}\theta_j(t) \in [0,2\pi)$$

In light of the above, we will make the following assumption for the simplification of the update rule defined by Equation \ref{eq:orientation-calcdiff}:
\begin{assumption}
\label{assump:update-rule-simplification}
	For each pair of agents $a_i,a_j$, the difference between their orientations is always within $[-\pi,\pi]$, i.e. $\theta_j(t) - \theta_i(t) \in [-\pi,\pi]$ for any time step $t \geq 0$.
\end{assumption}
For justifying this assumption, note that Algorithm \ref{alg:calc-diff} handles the special cases for the differences between the orientation of some agent $a_i$ and each of its neighbors, while restricting these differences such that they are always within $[-\pi,\pi]$ (According to Lemma \ref{lemma:difference-restricted}). Therefore, the assumption in question can be made for simplifying the update rule of each agent, while still restricting the mathematical average orientation of all agents to be within $[0, 2\pi)$ as previously mentioned.

Given this assumption, the resulting simplified update rule for each flocking agent $a_i$ is as follows:
\begin{equation}
\label{eq:simplified-calcdiff}
    \theta_i(t+1) = \frac{1}{n_i(t)} \sum_{a_j \in N_i (t)} \theta_j(t)
\end{equation}
The following lemma gives us this simplification in matrix terms:
\begin{lemma}
    \label{lemma:simple-calcdiff-matrix}
    Let the \textit{adjacency matrix} and the \textit{degree matrix} of the flocking neighbors graph at time step $t$ be $A(t),D(t)$ (respectively). Denoting $\theta(t) = [\theta_i(t)]$ ($0 \leq i \leq k-1$), the following heading system is equivalent to Equation \ref{eq:simplified-calcdiff}: $$\theta(t+1) = (I+D(t))^{-1}(I+A(t)) \theta(t)$$
\end{lemma}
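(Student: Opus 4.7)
The plan is to expand both sides row-by-row and verify they agree. The only subtlety is keeping track of the convention established in Section~\ref{sec:flocking-model} that $N_i(t)$ contains $a_i$ itself (so $n_i(t) = |N_i(t)|$ counts $a_i$), while the adjacency matrix $A(t)$ of the flocking neighbors graph has no self-loops, i.e.\ $A(t)[i,i]=0$ and $A(t)[i,j]=1$ exactly when $a_j \in N_i(t)\setminus\{a_i\}$.

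First I would compute $I + D(t)$. Since $D(t)$ is diagonal with $D(t)[i,i] = \sum_{j\neq i} A(t)[i,j] = n_i(t) - 1$ (the $-1$ accounting for $a_i$ itself being in $N_i(t)$ but not contributing to its own row of $A(t)$), the matrix $I + D(t)$ is diagonal with entries $n_i(t)$. Hence $(I + D(t))^{-1}$ is the diagonal matrix with entries $1/n_i(t)$, and in particular it is well-defined whenever every $n_i(t) \geq 1$, which holds by convention since $a_i \in N_i(t)$.

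Next I would read off $I + A(t)$: its $(i,j)$ entry equals $1$ iff $j = i$ or $a_j \in N_i(t)\setminus\{a_i\}$, i.e.\ iff $a_j \in N_i(t)$. Therefore
\[
\bigl((I + A(t))\,\theta(t)\bigr)[i] \;=\; \sum_{a_j \in N_i(t)} \theta_j(t).
\]
Multiplying on the left by the diagonal matrix $(I+D(t))^{-1}$ scales the $i$-th coordinate by $1/n_i(t)$, giving
\[
\bigl((I+D(t))^{-1}(I+A(t))\,\theta(t)\bigr)[i] \;=\; \frac{1}{n_i(t)}\sum_{a_j\in N_i(t)} \theta_j(t),
\]
which is precisely the right-hand side of Equation~\ref{eq:simplified-calcdiff}. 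Since this holds for every $i$, the two vector equations coincide.

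There is no real obstacle here; the proof is a bookkeeping check. The only place where a reader could trip is the offset between $n_i(t)$ and the degree $D(t)[i,i]$, so I would state that identity explicitly as the first computation and emphasize that the ``$+I$'' on both sides of the product is exactly what restores the missing self-contribution of $a_i$ to its own averaged orientation.
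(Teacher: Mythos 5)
Your proof is correct and follows essentially the same route as the paper's: both identify $D(t)[i,i]=n_i(t)-1$ and $A(t)[i,i]=0$, observe that $(I+D(t))^{-1}$ is diagonal with entries $1/n_i(t)$, and verify the product row-by-row against Equation~\ref{eq:simplified-calcdiff}. The only cosmetic difference is that you apply $(I+A(t))$ to $\theta(t)$ first and then rescale, while the paper writes out the entries of the product matrix directly; the computation is identical.
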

\begin{proof}
    For each flocking agent $a_i$, its neighborhood is the set of $n_i(t) \leq k$ agents including agent $a_i$ at time step $t$. In contrast, the degree of the vertex $i$ in the flocking neighbors graph in time step $t$ equals to the number agents in agent $a_i$'s neighborhood \textbf{except for itself}, i.e., the degree of the vertex $i$ equals to $D(t)[i,i]=n_i(t) - 1$. Thus, it also holds that $A(t)[i,i]=0$ for each flocking agent $a_i$. 

    Therefore, for a pair of flocking agents $a_i,a_j$, the $(i,j)$-th cell of the matrix $(I+D(t))^{-1}(I+A(t))$ equals to $\frac{1}{n_i(t)}A(t)[i,j]$ and its $(i,i)$-th cell equals to $\frac{1}{n_i(t)}$. Therefore, for each flocking agent $a_i$ the following holds: $$((I+D(t))^{-1}(I+A(t)) \theta(t))[i]=$$
    $$= \sum_{i \neq j=0} ^{k-1}\frac{1}{n_i(t)}A(t)[i,j]\theta_j(t) + \frac{1}{n_i(t)}\theta_i(t) =$$ $$= \frac{1}{n_i(t)} \sum_{a_j \in N_i (t)} \theta_j(t) = \theta_i(t+1)$$
    where the second equality stems from the definition of the adjacency matrix $A(t)$ and the last one stems from Equation \ref{eq:simplified-calcdiff}.
\end{proof}

As a matter of fact, $P(t) := (I+D(t))^{-1}(I+A(t))$ is the \textbf{normalized} Perron matrix obtained from the following \textbf{normalized} Laplacian matrix with $\varepsilon=1$: 

\begin{equation}
    \label{eq:normalized-laplacian}
    \bar{L}(t) := I - (I+D(t))^{-1}(I+A(t))
\end{equation}

Lemma \ref{lemma:simple-calcdiff-matrix} that follows presents some structural properties regarding the normalized Perron matrix. To prove them, we will be using Gershgorin disk theorem \cite{gershgorin1931uber}, which is stated as follows:
\begin{theorem}
    \label{theorem:Gershgorin}
    (\textbf{Gershgorin Disk Theorem, 1931})
    Let $M$ be a complex $n \times n$ matrix. For $1 \leq i \leq n$, let $R_{i}=\sum _{{j\neq {i}}}\left|M[i,j]\right|$ be the sum of the absolute values of the non-diagonal entries in the $i$-th row. Let  $B(M[i,i],R_{i})\subseteq \mathbb {C}$ be a closed disc centered at $M[i,i]$ with radius $R_{i}$. Such a disc is called a \textbf{Gershgorin disc}. Then, Every eigenvalue of $M$ lies within at least one of the Gershgorin discs $B(M[i,i],R_{i})$.
\end{theorem}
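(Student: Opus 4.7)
The plan is to give the classical one-line-idea proof: pick any eigenpair, locate the row where the eigenvector attains its maximum modulus entry, and read off the Gershgorin bound from the triangle inequality applied to that row of $Mx = \lambda x$. The statement is vacuous if $M$ has no eigenvalues to consider, so I will fix an arbitrary eigenvalue $\lambda \in \mathbb{C}$ of $M$ with a corresponding (nonzero) eigenvector $x = (x_1,\dots,x_n)^\top$, and argue that $\lambda$ lies in $B(M[i,i], R_i)$ for the index $i$ I am about to choose.

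First I would choose $i \in \{1,\dots,n\}$ such that $|x_i| = \max_{1 \le k \le n} |x_k|$. Since $x \ne 0$, we have $|x_i| > 0$, which is the crucial fact that lets us divide at the end. Next I would write down the $i$-th coordinate of the equation $Mx = \lambda x$, namely
\[
\sum_{j=1}^{n} M[i,j]\, x_j \;=\; \lambda\, x_i,
\]
and isolate the diagonal contribution to obtain
\[
(\lambda - M[i,i])\, x_i \;=\; \sum_{j \ne i} M[i,j]\, x_j.
\]
Then I would take absolute values on both sides, apply the triangle inequality on the right, and use the maximality of $|x_i|$ to bound each $|x_j| \le |x_i|$. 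This yields
\[
|\lambda - M[i,i]|\cdot |x_i| \;\le\; \sum_{j \ne i} |M[i,j]|\cdot |x_j| \;\le\; \Big(\sum_{j \ne i} |M[i,j]|\Big) |x_i| \;=\; R_i\, |x_i|.
\]
Dividing by $|x_i| > 0$ gives $|\lambda - M[i,i]| \le R_i$, i.e.\ $\lambda \in B(M[i,i], R_i)$, which completes the argument since $\lambda$ was arbitrary.

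There is no real obstacle here: the proof is essentially just the triangle inequality applied to the right row. The only step that requires any care is justifying the choice of $i$ and remembering that $|x_i| > 0$, so that dividing through is legitimate; everything else is mechanical. I would therefore keep the write-up short and emphasize the maximum-modulus index as the key idea, since that is the single clever choice in the argument.
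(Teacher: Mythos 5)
Your argument is correct and complete: it is the standard proof of Gershgorin's theorem via the maximum-modulus coordinate of an eigenvector, and the one point requiring care (that $|x_i|>0$ so the final division is legitimate) is handled explicitly. Note that the paper itself does not prove this statement at all --- it quotes the theorem as a classical result with a citation to Gershgorin's 1931 paper and uses it as a black box in the proof of the spectral properties of the normalized Perron matrix --- so there is no in-paper proof to compare against; your write-up simply supplies the standard argument the paper takes for granted.
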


Before proving these properties, we prove that both the following propety of the flocking neighbors graph and the influencing neighbors graph:
\begin{lemma}
\label{lemma:neighbors-graphs-balanced}
At each time step $t$, the flocking neighbors graph and the influencing neighbors graph are balanced digraphs (according to Definition \ref{def:balanced-digraphs}).
\end{lemma}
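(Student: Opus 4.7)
The plan is to reduce the balanced condition to a symmetry property of the underlying visibility relation, which comes for free from the fact that the neighborhood $N_i(t)$ is defined purely by the Euclidean distance being within the visibility radius $R$ (with a $2\pi$ visibility cone). Once the adjacency matrix is shown to be symmetric, the balanced condition is immediate.

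First I would observe that for any two agents $a_i$ and $a_j$ at time $t$, the relation $a_j \in N_i(t)$ is equivalent to $\lVert p_i(t) - p_j(t) \rVert \leq R$ because each agent's visibility cone is $2\pi$. Since the Euclidean distance is symmetric, this is also equivalent to $a_i \in N_j(t)$. Applying Definition~\ref{def:flocking-neighbors-graph}, this gives $(i,j) \in \mathcal{E}(t) \iff (j,i) \in \mathcal{E}(t)$, and the same reasoning applied to Definition~\ref{def:influencing-neighbors-graph} gives the analogous statement for $\tilde{\mathcal{E}}(t)$.

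Next I would translate this into a matrix statement: the adjacency matrices $A(t)$ and $\tilde{A}(t)$ are symmetric, i.e. $A(t)[i,j] = A(t)[j,i]$ and $\tilde{A}(t)[i,j] = \tilde{A}(t)[j,i]$ for every pair $i,j$. Then for every vertex $i$,
\[
\sum_{j \neq i} A(t)[i,j] \;=\; \sum_{j \neq i} A(t)[j,i],
\]
and likewise for $\tilde{A}(t)$, which is exactly the condition of Definition~\ref{def:balanced-digraphs}. Hence both graphs are balanced digraphs at every time step $t$.

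There is no substantial obstacle here: the entire argument turns on the symmetry of Euclidean distance together with the $2\pi$ visibility cone assumption. The only subtlety worth flagging is that the graph is formally declared a digraph even though the edge relation is symmetric; this is a modeling convention so that the later directed additions (influencing agents whose outgoing links to flocking agents need not be reciprocated) can still be written in a common framework, but at this stage both $\mathcal{G}(t)$ and $\tilde{\mathcal{G}}(t)$ reduce to symmetric digraphs and are therefore trivially balanced.
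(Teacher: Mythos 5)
Your proof is correct and follows essentially the same route as the paper's: both arguments establish that $a_j \in N_i(t) \iff a_i \in N_j(t)$, conclude that $A(t)$ (resp.\ $\tilde{A}(t)$) is symmetric entrywise, and read off the balanced condition of Definition~\ref{def:balanced-digraphs}. You are in fact slightly more careful than the paper, since you justify the symmetry of the neighbor relation via the symmetry of the Euclidean distance and the $2\pi$ visibility cone, whereas the paper simply asserts it.
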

\begin{proof}
Since the proof for both graphs is identical, we will be proving the lemma only for the flocking neighbors. Let the \textit{adjacency matrix} of the flocking neighbors graph at time step $t$ be $A(t)$. Let $i \in \mathcal{V}$. For each $i \neq j \in \mathcal{V}$, we divide the proof into two cases:
\begin{enumerate}
    \item \underline{$a_j \in N_i(t)$ -} This also means that $a_i \in N_j(t)$. According to the definition of the flocking neighbors graph, we have that $A(t)[i,j] = A(t)[j,i] = 1$.
    \item \underline{$a_j \notin N_i(t)$ -} This also means that $a_i \notin N_j(t)$. According to the definition of the flocking neighbors graph, we have that $A(t)[i,j] = A(t)[j,i] = 0$.
\end{enumerate}
Thus: $\sum_{j \neq i} A(t)[i,j] = \sum_{j \neq i} A(t)[j,i]$, which means that the flocking neighbors graph is a balanced digraph.
\end{proof}
\underline{\textbf{Note:}} Let the \textit{adjacency matrix} of the flocking neighbors graph at time step $t$ be $A(t)$. If the flocking neighbors graph was assumed to be \textbf{undirected}, it could be directly inferred that $A(t)[i,j] = A(t)[j,i]$. In other words, each undirected graph is necessarily a balanced graph.

We shall also consider the following definitions:
\begin{definition}
\label{def:stochastic}
(\textit{Row Stochastic Matrix} \cite{horn1985johnson}) 
A matrix $M$ is called \textbf{row stochastic} if it is a \textbf{nonnegative} matrix (a matrix whose entries are all nonnegative) and its row sums all equal 1 (i.e., $M1 = 1$).
\end{definition}
\begin{definition}
\label{def:column-stochastic}
(\textit{Column Stochastic Matrix} \cite{horn1985johnson}) 
A matrix $M$ is called \textbf{column stochastic} if it is a \textbf{nonnegative} matrix (a matrix whose entries are all nonnegative) and its column sums all equal 1 (i.e., $1M = 1$).
\end{definition}
\begin{definition}
\label{def:doubly-stochastic}
(\textit{Doubly Stochastic Matrix}) 
A matrix $M$ is called \textbf{doubly stochastic} if it is a \textbf{nonnegative} matrix (a matrix whose entries are all nonnegative), which is both row stochastic and column stochastic.
\end{definition}
\begin{definition}
\label{def:irreducible}
(\textit{Irreducible Matrix}) 
An adjacency matrix $A(t)$ is \textbf{irreducible} if and only if $\mathcal{G}(t)$ is strongly connected.
\end{definition}
\begin{definition}
\label{def:period}
(\textit{Period}) 
Let $M$ be a nonnegative matrix. Fix an index $i$ and define the \textbf{period of index $i$} to be the greatest common divisor of all natural numbers $j$ such that $(M ^ j)[i,i] > 0$. When $M$ is irreducible, the period of every index is the same and is called the \textbf{period of $M$}. If the period is 1, A is \textbf{aperiodic}.
\end{definition}
\begin{definition}
\label{def:aperiodic}
(\textit{Aperiodic Matrix})
Let $M$ be an irreducible nonnegative matrix. If its period is 1, $M$ is \textbf{aperiodic}.
\end{definition}
\begin{definition}
\label{def:primitive}
(\textit{Primitive Matrix}) 
A matrix $M$ is \textbf{primitive} if it is an irreducible aperiodic nonnegative matrix.
\end{definition}
\begin{lemma} 
    \label{lemma:normalized-perron-properties}
    The normalized Perron matrix $P(t) = (I+D(t))^{-1}(I+A(t))$ has the following structural properties:
    \begin{enumerate}
        \item $P(t)$ is a \textbf{row stochastic nonnegative} matrix with a trivial eigenvalue of $1$.
        \item $P(t)$ is a \textbf{column stochastic nonnegative} matrix. Combined with the first property, this makes $P(t)$ a \textbf{doubly stochastic} matrix.
        \item If $\mathcal{G}(t)$ is a strongly connected graph, then $P(t)$ is a \textbf{primitive} matrix.
        \item All eigenvalues of $P(t)$ are in a unit disk.
    \end{enumerate}
\end{lemma}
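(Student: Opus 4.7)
My plan is to verify the four properties in sequence. Properties (1) and (2) follow from direct stochasticity identities for row and column sums, (3) is a straightforward unpacking of the definition of a primitive matrix, and (4) follows from Gershgorin's disk theorem combined with row stochasticity.

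For property (1), I would first observe that $(I+D(t))^{-1}$ is a diagonal matrix with strictly positive entries $1/n_i(t)$ and $(I+A(t))$ has entries in $\{0,1\}$, so $P(t)$ is nonnegative. Computing $(I+A(t))\mathbf{1}$ entry-wise, the $i$-th coordinate is $1 + |N_i(t)\setminus\{a_i\}| = n_i(t)$, which coincides with the $i$-th coordinate of $(I+D(t))\mathbf{1}$ since $D(t)[i,i] = n_i(t) - 1$. Left-multiplying by $(I+D(t))^{-1}$ then yields $P(t)\mathbf{1} = \mathbf{1}$, which simultaneously proves row stochasticity and exhibits $\mathbf{1}$ as an eigenvector of $P(t)$ with eigenvalue $1$.

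For property (2), I would invoke Lemma \ref{lemma:neighbors-graphs-balanced}: since $\mathcal{G}(t)$ is balanced, the row and column sums of $A(t)$ agree, i.e.\ $\mathbf{1}^T A(t) = (A(t)\mathbf{1})^T$, and combined with $D(t)^T = D(t)$ and the self-loop accounting from above this gives $\mathbf{1}^T(I+A(t)) = \mathbf{1}^T(I+D(t))$. The remaining goal is to derive $\mathbf{1}^T P(t) = \mathbf{1}^T$, which together with property (1) would yield doubly stochasticity. \emph{This is the step I expect to be the main obstacle}, because $\mathbf{1}^T(I+D(t))^{-1}$ is not in general a scalar multiple of $\mathbf{1}^T$ unless the graph is regular, and a naive one-line manipulation does not seem to close the argument; I would try either to work with the alternative factorization $(I+A(t))(I+D(t))^{-1}$, which has the same spectrum as $P(t)$ and is genuinely column stochastic on balanced digraphs, or to identify an additional structural assumption that the authors implicitly use.

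For property (3), assume $\mathcal{G}(t)$ is strongly connected. Then $A(t)$ is an irreducible nonnegative matrix, and adding $I$ preserves irreducibility; premultiplication by the positive diagonal $(I+D(t))^{-1}$ preserves irreducibility as well, so $P(t)$ is irreducible. Aperiodicity is immediate from $P(t)[i,i] = 1/n_i(t) > 0$: every index admits a length-one closed walk in the digraph associated to $P(t)$, forcing its period to equal $1$. Hence $P(t)$ is nonnegative, irreducible, and aperiodic, i.e.\ primitive. For property (4), I would apply Theorem \ref{theorem:Gershgorin} to $P(t)$: by row stochasticity from (1), the $i$-th Gershgorin disc is centred at $P(t)[i,i] \in [0,1]$ with radius $\sum_{j \neq i} P(t)[i,j] = 1 - P(t)[i,i]$, so it is contained in the closed unit disc $\{z \in \mathbb{C} : |z| \leq 1\}$. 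Every eigenvalue of $P(t)$ therefore has modulus at most $1$, which is the claimed spectral bound.
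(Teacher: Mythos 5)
Your treatment of properties (1), (3) and (4) matches the paper's own proof in both substance and technique: the paper likewise derives $P(t)\mathbf{1}=\mathbf{1}$ from the entrywise identity $(I+A(t))\mathbf{1}=(I+D(t))\mathbf{1}$, reduces primitivity of $P(t)$ to primitivity of $I+A(t)$ (irreducible by strong connectivity, aperiodic because the diagonal is positive), and applies Gershgorin with off-diagonal row sums $1-\frac{1}{n_i(t)}$. Your version of (4) is in fact slightly tighter, since you conclude $|z|\le|z-P(t)[i,i]|+P(t)[i,i]\le 1$ directly, whereas the paper only exhibits containment in a radius-one disc centred at $1/n_j(t)$, which by itself does not bound the spectral radius by $1$.

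Your suspicion about property (2) is well founded, and it is the one place where the paper's proof does not hold up: the paper simply asserts that because $\mathcal{G}(t)$ is balanced ``the proof is similar to the proof of the previous property,'' but balancedness only gives $\mathbf{1}^TA(t)=(A(t)\mathbf{1})^T$, not $\mathbf{1}^TP(t)=\mathbf{1}^T$, precisely because $\mathbf{1}^T(I+D(t))^{-1}$ is not a scalar multiple of $\mathbf{1}^T$ unless the graph is regular. A concrete counterexample is the path on three vertices: there $n_1(t)=n_3(t)=2$, $n_2(t)=3$, and the first column of $P(t)$ sums to $\frac{1}{2}+\frac{1}{3}=\frac{5}{6}\ne 1$, so $P(t)$ is row stochastic but not column stochastic, hence not doubly stochastic, even though the graph is balanced and strongly connected. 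Your proposed fallback --- passing to $(I+A(t))(I+D(t))^{-1}$, which is similar to $P(t)$ via conjugation by $I+D(t)$ and is genuinely column stochastic on balanced digraphs, or restricting to regular neighbourhood structures --- is the right repair; as stated, property (2) is false in general and no completion of the paper's one-line sketch can rescue it.
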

\begin{proof}
The proof for each property is as follows:
    \begin{enumerate}
        \item Clearly, $P(t)$ is a nonnegative matrix. According to the definitions of $D(t)$ and $A(t)$:
        $$D(t)[i,j] = \begin{cases} n_i(t)-1, \quad i=j\\ 0, \quad i \neq j \end{cases} $$
        $$ A(t)[i,j] = \begin{cases} 1, \quad \{i,j\} \in \mathcal{E}(t)\\ 0, \quad otherwise \end{cases}$$
        Therefore, it holds that: 
        $$((I + A(t))1)[i,j] = (I + D(t))[i,j] = \begin{cases} n_i(t), \quad i=j\\ 0, \quad i \neq j \end{cases}$$
        which leads to $P(t)1 = 1$, i.e., $P(t)$ is \textbf{row stochastic}. 
        \item Since $G(t)$ is a balanced digraph (\ref{lemma:neighbors-graphs-balanced}), the proof is similar to the proof of the previous property.
        \item Given the stated above, we have that: 
        $$P(t)[i,j] = \frac{1}{n_i(t)} (I+A(t))[i,j] = \begin{cases} \frac{1}{n_i(t)}, \quad i=j\\ \frac{1}{n_i(t)} A(t)[i,j], \quad i \neq j \end{cases}$$
        That is, $P(t)$ is primitive if and only if $I + A(t)$ is primitive. Since $G(t)$ is a strongly connected digraph, according to Definition \ref{def:irreducible}, the matrix $I + A(t)$ is \textbf{irreducible}. Since $(I + A(t))[i,i] = 1 > 0$, then the period of $I + A(t)$ is 1, which means it is \textbf{aperiodic}. According to the first property, this matrix is also \textbf{nonnegative}, which means that $I + A(t)$ is a \textbf{primitive} matrix, as desired.
        \item Based on the Gershgorin disk theorem \cite{gershgorin1931uber}, all the eigenvalues of $P(t)$ are located in the union of the following disks: 
        $$\mathcal{B}_i(t) = \{z \in \mathbb{C}:\left| z - P(t)[i,i] \right| \leq \sum _{{j\neq {i}}}\left|P(t)[i,j]\right|\}$$
        According to the proof of the previous property: 
        $$\sum _{{j\neq {i}}}\left|P(t)[i,j]\right| = \sum _{{j\neq {i}}}\left|\frac{1}{n_i(t)} A(t)[i,j]\right| = \frac{n_i(t) - 1}{n_i(t)} = 1 - \frac{1}{n_i(t)}$$
        Therefore: 
        $$\mathcal{B}_i(t) = \Bigg\{z \in \mathbb{C}:\left| z - \frac{1}{n_i(t)} \right| \leq 1 - \frac{1}{n_i(t)}\Bigg\}$$
        Setting $j := \argmax_i n_i(t)$, clearly $\mathcal{B}_i(t) \subseteq \mathcal{B}_j(t)$. Since $1 - \frac{1}{n_i(t)} < 1$ for all $i$, then all the disks $\mathcal{B}_i(t)$ are proper subsets of the following unit disk:
        $$\Bigg\{z \in \mathbb{C}:\left| z - \frac{1}{n_j(t)} \right| \leq 1\Bigg\}$$
    \end{enumerate}
\end{proof}
Note that Lemma \ref{lemma:simple-calcdiff-matrix} also holds for the \textbf{influencing} neighbors graph (Definition \ref{def:influencing-neighbors-graph}). Denoting $\tilde{\theta}(t) = [\theta_i(t)]$ ($0 \leq i \leq n-1$), we have that:
\begin{equation}
\label{eq:influencing-heading-system}
    \tilde{\theta}(t+1) = (I+\tilde{D}(t))^{-1}(I+\tilde{A}(t)) \tilde{\theta}(t) =: \tilde{P}(t) \tilde{\theta}(t)
\end{equation}

Notice that this heading system induces a heading system for each connected component of the \textbf{influencing} neighbors graph at time step $0$. This heading system can be viewed while only considering the orientations of the agents which correspond to this specific connected component. We shall phrase this formally in the following lemma:
\begin{lemma}
    \label{lemma:connected-component-normalized-laplacian}
    Let $C$ be some connected component of $\tilde{\mathcal{G}}(0)$. We denote by $\tilde{\theta} ^ C (t)$ the orientations vector which corresponds to the agents in $C$. Then, $C$ induces a topology $\tilde{\mathcal{G}}(0)(C)$ (according to Definition \ref{def:induced-subgraph}), with a normalized Perron matrix $\tilde{P} ^ C(t)$ for which $\tilde{\theta} ^ C (t+1) = \tilde{P} ^ C(t) \tilde{\theta} ^ C (t)$ according to Equation \ref{eq:influencing-heading-system}.
\end{lemma}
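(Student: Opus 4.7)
The plan is to reduce the lemma to a simple block-structure argument on the heading system of Equation~\ref{eq:influencing-heading-system}. The key observation is that, by the very definition of a connected component, no edge of $\tilde{\mathcal{G}}(0)$ connects a vertex of $C$ to a vertex outside $C$; equivalently, for every $a_i$ with $i\in C$ we have $N_i(0)\cap(\tilde{\mathcal{V}}\setminus C)=\emptyset$. I would first make explicit (either as a running assumption, or by appealing to the fixed-topology setting in which this lemma will be used) that this separation persists for all $t\geq 0$, so that $N_i(t)\subseteq C$ for every $i\in C$ and every $t$. This is the only substantive point: once the lack of cross-edges is granted for each $t$, everything else is a routine unpacking of the matrix entries.

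Granted that, I would compute the $i$-th row of $\tilde{P}(t)\tilde{\theta}(t)$ for an index $i\in C$. From Lemma~\ref{lemma:simple-calcdiff-matrix} (applied, as noted in the paragraph just before the lemma, to the influencing neighbors graph) we have
\begin{equation*}
\tilde{\theta}_i(t+1)=\bigl(\tilde{P}(t)\tilde{\theta}(t)\bigr)[i]=\frac{1}{n_i(t)}\sum_{a_j\in N_i(t)}\tilde{\theta}_j(t).
\end{equation*}
Since $N_i(t)\subseteq C$, every index $j$ appearing on the right-hand side belongs to $C$, so the update of $\tilde{\theta}_i(t+1)$ depends only on the entries of $\tilde{\theta}^C(t)$. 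Equivalently, the entries $\tilde{P}(t)[i,j]$ with $i\in C$ and $j\notin C$ all vanish, so the block of $\tilde{P}(t)$ with row- and column-indices in $C$ governs the evolution of $\tilde{\theta}^C$ on its own.

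I would then identify this block with the normalized Perron matrix of the induced subgraph $\tilde{\mathcal{G}}(0)(C)$. Let $\tilde{A}^C(t)$ and $\tilde{D}^C(t)$ denote the adjacency and degree matrices of $\tilde{\mathcal{G}}(t)(C)$ (using Definition~\ref{def:induced-subgraph}). Because $N_i(t)\subseteq C$ for $i\in C$, the $i$-th row sums of $\tilde{A}(t)$ and of $\tilde{A}^C(t)$ agree, hence $\tilde{D}^C(t)[i,i]=\tilde{D}(t)[i,i]$ and the submatrix of $I+\tilde{D}(t)$ (resp.\ $I+\tilde{A}(t)$) indexed by $C$ equals $I+\tilde{D}^C(t)$ (resp.\ $I+\tilde{A}^C(t)$). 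Setting
\begin{equation*}
\tilde{P}^C(t):=\bigl(I+\tilde{D}^C(t)\bigr)^{-1}\bigl(I+\tilde{A}^C(t)\bigr),
\end{equation*}
the identity $\tilde{\theta}^C(t+1)=\tilde{P}^C(t)\tilde{\theta}^C(t)$ follows directly from the displayed row computation above.

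The main obstacle, as already indicated, is not algebraic but conceptual: justifying that $C$ remains a union of connected components of $\tilde{\mathcal{G}}(t)$ for every $t\geq 0$, so that the restricted system is truly closed. In the fixed-topology case this is immediate, and in the switching-topology case it would need to be invoked as an explicit hypothesis (or guaranteed by the behaviour of the influencing agents in $C$); everything else is just reading off block entries of $\tilde{P}(t)$.
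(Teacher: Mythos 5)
Your proposal is correct and follows essentially the same route as the paper, which simply defines $\tilde{P}^C(t)$ as the submatrix of $\tilde{P}(t)$ obtained by keeping the rows and columns indexed by the agents of $C$ and asserts the restricted identity $\tilde{\theta}^C(t+1)=\tilde{P}^C(t)\tilde{\theta}^C(t)$ as immediate. The extra care you take --- verifying that the absence of cross-edges makes the block closed and equal to the normalized Perron matrix of the induced subgraph, and flagging that in a switching topology a connected component of $\tilde{\mathcal{G}}(0)$ need not remain closed at later times, so the claim requires either the fixed-topology setting or an explicit separation hypothesis --- addresses real points that the paper's one-line ``clearly'' leaves unjustified.
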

\begin{proof}
    We denote by $n(C)$ the number of agents which correspond to $C$. Assuming that $a_{i_1}, \dots ,a_{i_{n(C)}}$ are those agents, we denote by $\tilde{\theta} ^ C (t) := [\theta_{i_1} (t+1), \dots, \theta_{i_{n(C)}} (t+1)]$ the orientations vector associated with the connected component $C$ at time step $t$. We denote by $\tilde{P} ^ C(t) \in \mathbb{R} ^ {n(C) \times n(C)}$ the Perron matrix which results after omitting all the rows and columns in $\tilde{P}(t)$ except for those that correspond to the indexes $i_1, \dots ,i_{n(C)}$. Clearly, we have that $\tilde{\theta} ^ C (t+1) = \tilde{P} ^ C(t) \tilde{\theta} ^ C (t)$ according to Equation \ref{eq:influencing-heading-system}.
\end{proof}

The convergence analysis of this discrete-time algorithm in Subsection \ref{sec:fixed-topology} relies on Perron-Frobenius Theorem \cite{horn1985johnson}, which is stated as follows:
\begin{theorem}
    \label{theorem:perron-frobenius}
    (\textbf{Perron-Frobenius Theorem, 1907/1912}) Let $M$ be a primitive nonnegative matrix. Then, $M$ has left and right eigenvectors $w$ and $v$ (respectively), satisfying $Mv = v, w^T M = w^T$, and $v^T w = 1$. In particular: $\lim_{t \rightarrow \infty} M^t = v w^T$.
\end{theorem}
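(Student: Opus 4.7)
The plan is to establish the three classical pillars of Perron--Frobenius for a primitive nonnegative matrix $M$: existence of a real Perron eigenvalue $\rho$ (its spectral radius) with a strictly positive eigenvector, simplicity of $\rho$, and strict dominance (every other eigenvalue has strictly smaller modulus). The limit $\lim_t M^t = v w^T$ then follows by spectral decomposition once these are in hand, specializing to $\rho = 1$ (which is consistent with the stochastic/doubly-stochastic setting established for $P(t)$ in Lemma \ref{lemma:normalized-perron-properties}).

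For existence, I would use a variational characterization. Work on the standard simplex $\Delta = \{x \geq 0 : \mathbf{1}^T x = 1\}$ and define $r(x) = \min_{i:\, x_i > 0} (Mx)_i / x_i$. This function is upper semicontinuous on $\Delta$, hence attains a maximum $\rho$ at some $v \in \Delta$. A perturbation argument (replacing $v$ by $v + \varepsilon \mathbf{1}$ and re-examining $r$) forces $Mv = \rho v$. Primitivity then upgrades $v$ to be strictly positive: picking $k$ with $M^k > 0$ entrywise gives $\rho^k v = M^k v > 0$.

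Next I would prove simplicity and strict spectral dominance. Simplicity follows by contradiction: if $u$ is independent of $v$ with $Mu = \rho u$, choose $\alpha \in \mathbb{R}$ so that $v + \alpha u \geq 0$ but has a zero component; then $v + \alpha u$ is a nonnegative Perron eigenvector with a zero coordinate, contradicting the positivity obtained above. For strict dominance---the step where primitivity is indispensable rather than mere irreducibility---suppose $Mz = \lambda z$ with $|\lambda| = \rho$. The entrywise triangle inequality forces $M|z| \geq |\lambda||z| = \rho|z|$, and since $\rho$ is already the maximum of $r$, equality holds and the phases $z_i/|z_i|$ align consistently under multiplication by $M$. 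Applying $M^k > 0$ synchronizes all phases, yielding $\lambda = \rho$; combined with simplicity, this leaves $\rho$ as the unique eigenvalue of maximal modulus.

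Finally, invoke the Jordan decomposition. With $w$ the left $\rho$-eigenvector normalized by $v^T w = 1$, decompose $M = \rho v w^T + R$, where $R$ is supported on the invariant complement of $\mathrm{span}(v)$ and every eigenvalue of $R$ has modulus strictly less than $\rho$. The two summands are annihilated by each other's spectral projectors, so $M^t = \rho^t v w^T + R^t$ with $\|R^t\| \leq C(\rho - \delta)^t$ for some $\delta > 0$; the stochastic setting gives $\rho = 1$, yielding $M^t \to v w^T$. The main obstacle is the strict spectral gap: translating the combinatorial statement ``$M^k$ becomes entrywise positive'' into a quantitative spectral statement about complex eigenvalues requires the phase-alignment argument, and this is the one place where mere irreducibility would fail (a cyclic irreducible matrix has eigenvalues of equal modulus distributed around a circle).
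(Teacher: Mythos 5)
The paper offers no proof of this statement: it is quoted as a classical result and attributed to Horn and Johnson \cite{horn1985johnson}, so there is nothing internal to compare your argument against. Judged on its own terms, your outline is the standard textbook proof --- the Collatz--Wielandt variational characterization $\rho = \max_{x \in \Delta} \min_{i:\,x_i>0} (Mx)_i/x_i$ for existence, positivity of the Perron vector via $M^k v = \rho^k v > 0$, the nonnegative-combination argument for simplicity, Wielandt's phase-alignment argument for the strict spectral gap, and a spectral splitting $M^t = \rho^t v w^T + R^t$ for the limit. You are also right to flag that the statement as written (eigenvalue exactly $1$ and a convergent limit $M^t \to v w^T$) is really the specialization to the case $\rho = 1$; for a general primitive nonnegative matrix the limit is $0$ or diverges, and the hypothesis that makes the paper's version true is the row-stochasticity established in Lemma \ref{lemma:normalized-perron-properties}.

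One genuine gap: your simplicity argument only rules out a second independent eigenvector for $\rho$, i.e.\ it establishes \emph{geometric} simplicity. The decomposition $M^t = \rho^t v w^T + R^t$ with $\|R^t\| \le C(\rho-\delta)^t$ additionally requires that $\rho$ has no Jordan block, i.e.\ \emph{algebraic} simplicity; otherwise the spectral projector onto the $\rho$-eigenspace is not $v w^T$ and the remainder grows like $t^{j}\rho^t$. The standard fix is short and uses exactly the objects in the statement: if $(M-\rho I)z = v$ for some generalized eigenvector $z$, then $w^T v = w^T(M-\rho I)z = 0$, contradicting $w^T v > 0$ (both vectors being strictly positive by the part of the argument you already have). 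Adding that one line closes the proof. A smaller technical remark: the perturbation step forcing $Mv = \rho v$ is cleaner if you push a putative strict inequality $Mv \gneq \rho v$ through $M^k > 0$ to get a strictly larger Collatz--Wielandt value, rather than perturbing $v$ by $\varepsilon \mathbf{1}$, but either route can be made rigorous.
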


\subsection{Fixed Topology}
\label{sec:fixed-topology}
In this section, we deal with the behaviour an influencing agent as well as its influence has upon a connected component in a \textbf{fixed topology}, after it is inserted into the neighborhood of a flocking agent which corresponds to some vertex in this connected component. According to Definition \ref{def:fixed-topology}, it holds that there exists some balanced digraph $\mathcal{G}=(\mathcal{V},\mathcal{E})$ such that $\mathcal{G}(t) = \mathcal{G}$ for any time step $t$. We denote this graph's adjacency matrix, degree matrix, Perron matrix and Laplacian matrix by $A,D,P,L$ (respectively). 

Similarly, considering the influencing neighbors graph, it holds that there exists some balanced digraph $\mathcal{\tilde{G}}=(\mathcal{\tilde{V}},\mathcal{\tilde{E}})$ such that $\mathcal{\tilde{G}}(t) = \mathcal{\tilde{G}}$ for any time step $t$. We denote this graph's adjacency matrix, degree matrix, Perron matrix and Laplacian matrix by $\tilde{A},\tilde{D},\tilde{P},\tilde{L}$ (respectively).

First, we shall describe the influence an influencing agent has upon some connected component, after it is inserted into the neighborhood of a flocking agent which corresponds to some vertex in this connected component. We will be considering each definition for the global orientation of both the flocking agents and the influencing agents.

\subsubsection{Discrete-Time Case}
\label{sec:dt-fixed}
\textbf{We will first consider Equation \ref{eq:orientation-perron}}. The following theorem gives us the orientation the influencing agents should adopt in order to make the flocking agents in some connected component converge to their desired orientation.
\begin{theorem}
\label{theorem:orientation-perron-influence-same-orient}
    Consider a \textbf{fixed topology} network $\mathcal{\tilde{G}}$ of agents, where the flocking agents update their global orientation according to Equation \ref{eq:orientation-perron}, with a Perron matrix $\tilde{P}= I - \varepsilon \tilde{P}$ and maximum degree $\Delta = \max_i(\sum_{j \neq i} \tilde{A}[i,j])$. Let $C$ be some connected component of $\mathcal{\tilde{G}}$. Let $\alpha_i$ be the orientation the flocking agents associated with this connected component should converge to. Let $\tilde{\theta}^C(0)$ be fixed.

    Assuming that $0 < \varepsilon < \frac{1}{\Delta}$ and $m(C)$ influencing agents are inserted into this connected component, \textbf{all} the influencing agents should \textbf{constantly} adopt the orientation $\alpha_i$ in order for the flocking agents to converge to $\alpha_i$.
\end{theorem}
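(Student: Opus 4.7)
The plan is to exploit Corollary \ref{Same Orientaion} together with a fixed-point analysis of the affine subsystem induced on the flocking agents of $C$. First I would restrict Equation \ref{eq:orientation-perron} to $C$: denoting by $\theta^C(t) \in \mathbb{R}^{k(C)}$ the orientations of the flocking agents in $C$ and by $\theta^H(t) \in \mathbb{R}^{m(C)}$ those of the $m(C)$ influencing agents, the dynamics $\tilde{P} = I - \varepsilon \tilde{L}$ gives
\begin{equation*}
  \theta^C(t+1) \;=\; M\,\theta^C(t) \;+\; N\,\theta^H(t),
\end{equation*}
where $M$ is the block of $\tilde{P}$ indexed by flocking rows and columns of $C$ and $N$ is the block linking those flocking rows to the influencing columns. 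Since $\tilde{L}\mathbf{1} = 0$, the corresponding rows of $\tilde{P}$ are row-stochastic, which yields the crucial identity $M\mathbf{1} + N\mathbf{1} = \mathbf{1}$, i.e. $(I - M)\mathbf{1} = N\mathbf{1}$.

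Next I would invoke Corollary \ref{Same Orientaion} to reduce to the case in which, at every time step, all $m(C)$ influencing agents adopt a single common orientation; since the goal is to drive $\theta^C(t)$ to the \emph{fixed} target $\alpha_i\mathbf{1}$, the natural class to scrutinize is the time-constant one, $\theta^H(t) \equiv \beta\,\mathbf{1}_{m(C)}$ for some scalar $\beta$ independent of $t$. The subsystem then becomes the affine recursion $\theta^C(t+1) = M\,\theta^C(t) + \beta\, N\mathbf{1}$, whose unique fixed point (once $I - M$ is invertible) is $(I - M)^{-1}\beta\,N\mathbf{1} = \beta\,\mathbf{1}$ by the identity above. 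Convergence of the recursion to this fixed point then reduces to proving $\rho(M) < 1$.

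To establish $\rho(M) < 1$, I would use that $0 < \varepsilon < 1/\Delta$ forces $M$ to be entrywise nonnegative (diagonals $1 - \varepsilon\,\deg_j \geq 1 - \varepsilon\Delta > 0$, off-diagonals $\varepsilon\,\tilde{A}[j,l] \geq 0$) and row-substochastic. Because $C$ is a connected component of $\tilde{\mathcal{G}}$ that \emph{contains} at least one influencing agent, at least one row of $M$ is strictly substochastic, and connectivity of $C$ guarantees that from every flocking vertex one can reach such a row along the support of $M$. A Perron--Frobenius argument (Theorem \ref{theorem:perron-frobenius}) on the irreducible blocks of $M$, combined with this strictly substochastic row, rules out the eigenvalue $1$ from the spectrum of $M$, giving $\rho(M) < 1$ and hence $\theta^C(t) \to \beta\,\mathbf{1}$.

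Necessity of $\beta = \alpha_i$ is then immediate: because the affine recursion converges to $\beta\,\mathbf{1}$ whenever all influencing agents constantly adopt the common value $\beta$, the requirement $\theta^C(t) \to \alpha_i\,\mathbf{1}$ forces $\beta = \alpha_i$; in other words, every influencing agent in $C$ must constantly hold the orientation $\alpha_i$. I expect the main technical obstacle to be upgrading substochasticity to the strict spectral bound $\rho(M) < 1$, since substochasticity alone gives only $\rho(M) \leq 1$; the fix is to pass to the Frobenius normal form of $M$ and verify that every irreducible block has some descendant row in the reachability order that is strictly substochastic, ruling out a unit-modulus Perron root on each block.
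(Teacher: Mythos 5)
Your proposal is correct, and it takes a genuinely different route from the paper's. The paper works with the whole component matrix $\tilde{P}^C$ (via Lemma \ref{lemma:connected-component-normalized-laplacian}), asserts that $\lim_{t \rightarrow \infty} (\tilde{P}^C)^t$ exists by appealing to the Perron--Frobenius theorem, and then extracts the consensus value purely by passing to the limit in the scalar update equation of one flocking agent $a_j$ with $m_j(0) \geq 1$: stationarity of the limit forces $\varepsilon m_j(0)(\alpha_i - \alpha) = 0$ and hence $\alpha = \alpha_i$. You instead isolate the controlled affine subsystem $\theta^C(t+1) = M\theta^C(t) + N\theta^H(t)$, prove $\rho(M) < 1$ from nonnegativity (which is where the hypothesis $0 < \varepsilon < \frac{1}{\Delta}$ actually gets used --- the paper's limit argument never touches it), row-substochasticity, and reachability of a strictly deficient row, and then identify the limit as the unique fixed point via $(I - M)\mathbf{1} = N\mathbf{1}$. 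What your route buys is a convergence proof for the system actually being analyzed: once the influencing agents' orientations are clamped to $\alpha_i$, the evolution is no longer $\tilde{\theta}^C(t+1) = \tilde{P}^C \tilde{\theta}^C(t)$ (the influencing rows of $\tilde{P}^C$ are overridden), so the limit the paper computes from powers of $\tilde{P}^C$ does not literally describe the clamped dynamics; your substochastic-block analysis closes that gap and also dispenses with primitivity of the whole component. The one place to be careful is the reduction to time-constant inputs: Corollary \ref{Same Orientaion} licenses a single common orientation per time step but not time-invariance, so your necessity conclusion ($\beta = \alpha_i$) holds only within the constant class you restrict to; since the theorem is itself phrased as prescribing a constant behaviour, and the paper's own argument establishes no more, this is a presentational caveat rather than a gap.
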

\begin{proof}
    Considering Lemma \ref{lemma:connected-component-normalized-laplacian}, note that:
    $$\tilde{\theta}^C(t+1) = \tilde{P}^C \tilde{\theta}^C(t) = (\tilde{P}^C)^t \tilde{\theta}^C(0)$$
    Thus, a consensus is reached in discrete-time, if the limit $\lim_{t \rightarrow \infty} (\tilde{P}^C)^t$ exists. According to Perron-Frobenius Theorem, this limit exists for primitive matrices. 
    
    Therefore, we denote $\alpha := \lim_{t \rightarrow \infty} \tilde{\theta}^C_j(t)$, for a flocking agent $a_j \in A^F$, which corresponds to the vertex $j$ in the connected component $C$. 

    Considering Subsection \ref{sec:flocking-model}-model, we note that $k_j(0) \geq 1$ since $C$ is a connected component. Moreover, since the topology is fixed, $k_j(t)=k_j(0)$ and $m_j(t)=m_j(0)$ for any time step $t$. In particular, $N_j(0)=N_j(t)$ for any time step $t$. Necessarily, there exists a flocking agent for which some influencing agent lies within its neighborhood. Without loss of generality, we assume that $a_j$ is such a flocking agent, i.e., $m_j(t) \geq 1$. The following holds according to the definition of the Perron matrix $\tilde{P}^C$:
    $$\tilde{\theta}^C_j(t+1) = \tilde{\theta}^C_j(t) +$$ 
    $$+ \varepsilon \sum_{\ell_1 \in N_j(0) \cap A^F} (\tilde{\theta}^C_{\ell_1}(t) - \tilde{\theta}^C_j(t)) + \varepsilon \sum_{\ell_2 \in N_j(0) \cap A^D} (\tilde{\theta}^C_{\ell_2}(t) - \tilde{\theta}^C_j(t))$$
    Since all the influencing agents are \textbf{constantly} adopting the orientation $\alpha_i$, we have that:
    $$\tilde{\theta}^C_j(t+1) =$$
    $$= \tilde{\theta}^C_j(t) + \varepsilon \sum_{\ell_1 \in N_j(0) \cap A^F} (\tilde{\theta}^C_{\ell_1}(t) - \tilde{\theta}^C_j(t)) + \varepsilon \sum_{\ell_2 \in N_j(0) \cap A^D} (\alpha_i - \tilde{\theta}^C_j(t)) = $$ $$= \tilde{\theta}^C_j(t) + \varepsilon \sum_{\ell_1 \in N_j(0) \cap A^F} (\tilde{\theta}^C_{\ell_1}(t) - \tilde{\theta}^C_j(t)) + \varepsilon m_j(0) (\alpha_i - \tilde{\theta}^C_j(t))$$
    Taking the limit $t \rightarrow \infty$ in both sides results in the following:
    $$\alpha = \alpha + \varepsilon \sum_{\ell_1 \in N_j(0) \cap A^F} (\alpha - \alpha) + \varepsilon m_j(0) (\alpha_i - \alpha) \Rightarrow \boxed{\alpha = \alpha_i}$$

\end{proof}

The speed of reaching a consensus is the key in design of the network topology as well as analysis of the influence the influencing agents have upon the flocking agents in a given connected component. To demonstrate this in the discrete-time case, we note that the following quantity is invariant (while recalling that the Perron matrix $\tilde{P}$ is row stochastic - $1^T \tilde{P}=\tilde{P}$):
$$AVG_1(t+1) := \frac{1}{n}1^T \tilde{\theta}(t+1) = \frac{1}{n} (1^T \tilde{P}) \theta(t) = \frac{1}{n} 1^T \tilde{\theta}(t) = AVG_1(t)$$

This invariance allows to define the \textit{disagreement vector} \cite{olfati2004consensus}:

\begin{equation}
    \label{eq:dis-vec}
    \delta(t) = \tilde{\theta}(t) - AVG_1(t) 1
\end{equation}

Similarly to Lemma \ref{lemma:connected-component-normalized-laplacian}, Equation \ref{eq:dis-vec} induces a disagreement vector for each connected component of the flocking neighbors graph. We shall phrase this formally in the following lemma:
\begin{lemma}
    \label{lemma:induced-dis-vec-linear}
    Let $C$ be some connected component of $\mathcal{\tilde{G}}$. We denote by $\tilde{\theta} ^ C (t),\delta^C(t)$ the orientations vector and disagreement vector (respectively), which correspond to the flocking agents in $C$. We also denote $Avg^C(t) := \frac{1}{|C|}1^T \tilde{\theta}^C(t)$. Then, the connected component $C$ induces the disagreement vector $\delta^C(t) = \tilde{\theta}^C(t) - Avg^C(t) 1$.
\end{lemma}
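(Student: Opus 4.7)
My plan is to mirror the derivation of Equation~(\ref{eq:dis-vec}) at the level of the component~$C$. I would first invoke Lemma~\ref{lemma:connected-component-normalized-laplacian} to obtain the induced heading system $\tilde{\theta}^C(t+1) = \tilde{P}^C(t)\,\tilde{\theta}^C(t)$, where $\tilde{P}^C(t)$ is the principal submatrix of $\tilde{P}(t)$ indexed by the $|C|$ vertices of $C$. The crucial structural observation is that $C$, being a connected component of the balanced digraph $\tilde{\mathcal{G}}$ (Lemma~\ref{lemma:neighbors-graphs-balanced}), has no edges to or from $\tilde{\mathcal{V}} \setminus C$; therefore every row and every column of $\tilde{P}(t)$ indexed by $C$ has its nonzero entries entirely inside $C$. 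Combined with the double stochasticity of $\tilde{P}(t)$ established in Lemma~\ref{lemma:normalized-perron-properties}, this forces $\tilde{P}^C(t)$ itself to be doubly stochastic.

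Once $\tilde{P}^C(t)$ is known to satisfy $1^T \tilde{P}^C(t) = 1^T$, the same one-line calculation used immediately before Equation~(\ref{eq:dis-vec}) yields
\[
Avg^C(t+1) = \frac{1}{|C|}\, 1^T \tilde{\theta}^C(t+1) = \frac{1}{|C|}\, (1^T \tilde{P}^C(t))\, \tilde{\theta}^C(t) = \frac{1}{|C|}\, 1^T \tilde{\theta}^C(t) = Avg^C(t),
\]
so the component-wise average is invariant under the induced dynamics. Defining $\delta^C(t) := \tilde{\theta}^C(t) - Avg^C(t)\,1$ then produces a well-posed disagreement vector for $C$ that is the exact analogue of Equation~(\ref{eq:dis-vec}) on the subsystem induced by $C$.

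The only step that genuinely needs care is the first one: checking that no row- or column-mass of $\tilde{P}(t)$ leaks out of $C$ when passing to the principal submatrix. This is precisely where the balanced-digraph structure together with $C$ being a \emph{full} connected component (rather than an arbitrary vertex subset) is essential, since in a balanced digraph a connected component is closed under both incoming and outgoing edges. Once that closure is in hand, the rest of the argument is pure bookkeeping on the definitions of $Avg^C$ and $\delta^C$.
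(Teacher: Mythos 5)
Your argument is correct, and it is worth noting that the paper itself offers no proof of this lemma at all: it is stated as an immediate analogue of Lemma \ref{lemma:connected-component-normalized-laplacian}, with the sentence preceding it serving as the entire justification. What you supply is therefore strictly more than the paper does, and you put your finger on exactly the point that makes the statement non-vacuous: since $C$ is a full connected component, no edge of $\tilde{\mathcal{G}}$ crosses between $C$ and its complement, so the principal submatrix $\tilde{P}^C$ retains all of the row and column mass of the corresponding rows and columns of $\tilde{P}$ and is therefore itself doubly stochastic; column-stochasticity of $\tilde{P}^C$ is precisely what makes $Avg^C(t)$ invariant and hence makes $\delta^C(t) = \tilde{\theta}^C(t) - Avg^C(t)1$ a well-posed disagreement vector whose dynamics close up on $C$. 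One small attribution slip: in the passage surrounding Equation \ref{eq:dis-vec} the relevant Perron matrix is $\tilde{P} = I - \varepsilon\tilde{L}$, not the normalized Perron matrix treated in Lemma \ref{lemma:normalized-perron-properties}, so double stochasticity should be derived from the balancedness of the graph (Lemma \ref{lemma:neighbors-graphs-balanced}), which gives $\tilde{L}1 = 0$ and $1^T\tilde{L} = 0$ and hence $\tilde{P}1 = 1$ and $1^T\tilde{P} = 1^T$, rather than cited from that lemma. This does not affect the validity of your argument, since the same closure reasoning applies verbatim to $I - \varepsilon\tilde{L}$.
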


Furthermore, the discrete-time dynamics of the disagreement vector is given as follows:
\begin{equation}
    \label{eq:dis-vec-dt}
    \delta(t+1) = \tilde{P} \delta(t)
\end{equation}

We make the following definitions:
\begin{definition}
\label{def:eigenvalues}
(\textit{Eigenvalues of a Matrix})
Let $\mathcal{M}$ be a matrix with real eigenvalues. We denote by $\lambda_1(\mathcal{M}) \leq \lambda_2(\mathcal{M}) \leq \ldots \leq \lambda_e(\mathcal{M})$ the eigenvalues of the matrix $\mathcal{M}$, ordered sequentially in an ascending order.
\end{definition}
\begin{definition}
\label{def:sym-part}
    (\textit{Symmetric Part of a Matrix})
    Let $\mathcal{M}$ be a matrix. We denote its \textit{symmetric part} by $\mathcal{M}^{Sym} = \frac{\mathcal{M}+\mathcal{M}^T}{2}$.
\end{definition}

Thus, based on \cite{godsil2013algebraic} and \textit{Theorem 3} from \cite{olfati2004consensus}, the following lemma gives us a property of second eigenvalue:

\begin{lemma}
\label{lemma:eigenvalues-sym}
Let $\mathcal{G}$ be an balanced digraph with Laplacian $L$ with a symmetric part $L^{Sym} = \frac{L+L^T}{2}$ and with Perron $P=I - \varepsilon L$ with a symmetric part $P^{Sym} = \frac{P+P^T}{2}$. Then, for all disagreement vectors $\delta$:
\begin{enumerate}
    \item $\lambda_2 = \min_{1^T\delta=0}\frac{\delta^T L \delta}{\delta^T \delta}$ with $\lambda_2 =\lambda_2(L^{Sym})$, i.e.: $$\delta^T L \delta \geq \lambda_2 ||\delta||^2$$
    \item $\mu_2 = \min_{1^T\delta=0}\frac{\delta^T P \delta}{\delta^T \delta}$ with $\mu_2 =1- \varepsilon \lambda_2$, i.e.: $$\delta^T P \delta \geq \mu_2 ||\delta||^2$$
\end{enumerate}
\end{lemma}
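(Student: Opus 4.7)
The plan is to reduce both claims to the Courant--Fischer variational characterization of eigenvalues of the symmetric matrix $L^{Sym}$, leveraging two standard observations: any quadratic form depends only on the symmetric part of its defining matrix, and $\mathbf{1}$ is a null eigenvector of $L^{Sym}$ precisely because $\mathcal{G}$ is balanced.

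First I would note that $\delta^T L \delta$ is a scalar, hence equal to its own transpose $\delta^T L^T \delta$; averaging the two expressions yields $\delta^T L \delta = \delta^T L^{Sym} \delta$. The same identity applied to $P$ gives $\delta^T P \delta = \delta^T P^{Sym} \delta$, and since $(I-\varepsilon L)^T = I-\varepsilon L^T$ we obtain $P^{Sym} = I - \varepsilon L^{Sym}$. This reduces both Rayleigh quotients in the lemma to quotients of the symmetric matrices $L^{Sym}$ and $P^{Sym}$, which admit real eigenvalues and an orthonormal eigenbasis, so Courant--Fischer applies directly.

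Next I would verify that $L^{Sym} \mathbf{1} = \mathbf{0}$. The row sums of $L$ vanish by construction of the Laplacian, so $L \mathbf{1} = \mathbf{0}$. Because $\mathcal{G}$ is balanced (Definition \ref{def:balanced-digraphs}), the column sums of $A$ coincide with its row sums, and hence the column sums of $L = D - A$ also vanish, i.e., $\mathbf{1}^T L = \mathbf{0}^T$. Adding the two identities yields $L^{Sym} \mathbf{1} = \mathbf{0}$. Writing $L^{Sym} = D - \tfrac{1}{2}(A+A^T)$ further exhibits it as the Laplacian of an undirected weighted graph with edge weights $\tfrac{1}{2}(A[i,j] + A[j,i])$, so $L^{Sym}$ is positive semi-definite and $\lambda_1(L^{Sym}) = 0$ with eigenvector $\mathbf{1}$.

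For the first claim I would then invoke Courant--Fischer: the constraint $\mathbf{1}^T \delta = 0$ places $\delta$ in the orthogonal complement of the null eigenspace of $L^{Sym}$, on which the Rayleigh quotient $\delta^T L^{Sym} \delta / \|\delta\|^2$ attains its minimum $\lambda_2(L^{Sym})$; combined with the symmetrization step this yields $\min_{\mathbf{1}^T\delta=0}\delta^T L \delta / \|\delta\|^2 = \lambda_2$ and the stated inequality. For the second claim I would substitute $P^{Sym} = I - \varepsilon L^{Sym}$ to get $\delta^T P \delta / \|\delta\|^2 = 1 - \varepsilon\,\delta^T L^{Sym} \delta / \|\delta\|^2$ and then plug in the first claim; the restriction $0 < \varepsilon < 1/\Delta$ keeps all eigenvalues of $P^{Sym}$ inside $[0,1]$ so no eigenvalue changes sign under the shift. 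The whole argument is essentially bookkeeping; the only genuine obstacle is remembering that the balanced-digraph hypothesis is needed to guarantee $\mathbf{1}^T L = \mathbf{0}^T$ in addition to $L\mathbf{1} = \mathbf{0}$, since this is what makes $\mathbf{1}$ survive as a null vector of the symmetrization, and being careful that the affine map $x \mapsto 1 - \varepsilon x$ reverses the ascending eigenvalue ordering of Definition \ref{def:eigenvalues} so that $\mu_2$ pairs with $\lambda_2$ as claimed.
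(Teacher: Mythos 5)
Your part~1 is correct, and it is exactly the standard route (the one taken in the source the paper cites for this lemma, Theorem~3 of Olfati-Saber and Murray --- the paper itself gives no proof at all): a quadratic form sees only the symmetric part, so $\delta^T L\delta = \delta^T L^{Sym}\delta$; the row sums of $L$ vanish by construction and the column sums vanish by balancedness, so $L^{Sym}1 = 0$; $L^{Sym} = D - \tfrac{1}{2}(A+A^T)$ is the Laplacian of an undirected weighted graph and hence positive semi-definite with $1$ in its kernel; Courant--Fischer on $1^{\perp}$ then gives $\min_{1^T\delta=0}\delta^T L\delta/\delta^T\delta = \lambda_2(L^{Sym})$. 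All of that is sound, and you are right that balancedness is the hypothesis doing the work.

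Part~2 does not go through as you describe, and the obstruction is precisely the order reversal you mention in your last sentence but do not resolve. Substituting $P^{Sym} = I - \varepsilon L^{Sym}$ gives $\delta^T P\delta/\delta^T\delta = 1 - \varepsilon\,\delta^T L^{Sym}\delta/\delta^T\delta$, which is a \emph{decreasing} function of the $L$-Rayleigh quotient since $\varepsilon>0$. Plugging in part~1 therefore yields $\delta^T P\delta/\delta^T\delta \leq 1-\varepsilon\lambda_2$: the quantity $\mu_2 = 1-\varepsilon\lambda_2$ is the \emph{maximum} of the $P$-Rayleigh quotient over $1^T\delta=0$ (the second-largest eigenvalue of $P^{Sym}$), not the minimum, and the inequality it supports is $\delta^T P\delta \leq \mu_2\|\delta\|^2$. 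The minimum over $1^T\delta=0$ is instead $1-\varepsilon\lambda_{\max}(L^{Sym})$, so the two clauses of part~2 as stated are mutually inconsistent and your substitution cannot establish them both. What your argument actually proves is the corrected ($\max$, $\leq$) form --- which is also the form the paper needs downstream, since Lemma~\ref{lemma:dt-conv-rate} uses the upper bound $\|\delta(t+1)\|\leq\mu_2\|\delta(t)\|$. You should state explicitly that you are proving that version, rather than asserting that the reversal makes $\mu_2$ ``pair with $\lambda_2$ as claimed''; as written, that phrase papers over a sign error in the statement itself. (The appeal to $0<\varepsilon<1/\Delta$ is not a hypothesis of the lemma and is not needed for either bound.)
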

Now, we shall define the notion of exponential convergence:
\begin{definition}
\label{def:exp-conv}
(\textit{Exponential Convergence})
We say that $f(t)$ \textit{converges exponentially} toward $f^\infty$ with a rate $r < 0$ if $||f(t) - f^\infty|| \leq O( e^{rt})$.
\end{definition}

Using Lemma \ref{lemma:eigenvalues-sym}, \textit{Corollary 2} in \cite{olfati2004consensus} characterizes the performance of consensus. It is stated as follows:
\begin{lemma}
\label{lemma:dt-conv-rate}
Consider a \textbf{fixed topology} network $\mathcal{G}$ which is a strongly connected balanced digraph. Then, the following statements hold:
\begin{enumerate}
    \item The continuous-time disagreement vector $\delta$, as the solution for Equation \ref{eq:dis-vec-ct}, converges exponentially with a rate which equals to $\log(\mu_2)$, where $\mu_2 =\lambda_2(P^{Sym})$ with $P^{Sym} = \frac{P+P^T}{2}$, i.e.: 
    $$||\delta(t+1)|| \leq \mu_2||\delta(t)||$$
    \item $\Phi(\delta) = \delta(t)^T\delta(t)$ is a smooth, positive-definite and proper function, which acts as a valid Lyapunov function for the disagreement dynamics.
\end{enumerate}
\end{lemma}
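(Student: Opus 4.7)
My plan is to split the lemma into its two assertions and handle them in the following order: first verify that $\Phi(\delta) = \delta(t)^T\delta(t)$ meets the formal requirements of a Lyapunov function, then derive the exponential contraction $\|\delta(t+1)\| \le \mu_2 \|\delta(t)\|$ along the discrete-time dynamics $\delta(t+1) = \tilde P\,\delta(t)$ (Equation \ref{eq:dis-vec-dt}), which simultaneously certifies $\Phi$ as strictly decreasing along trajectories.

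For the Lyapunov conditions the verification is routine. Smoothness holds because $\Phi$ is a quadratic polynomial in the coordinates of $\delta$; positive-definiteness holds because $\Phi(\delta) = \|\delta\|^2 \ge 0$ with equality iff $\delta = 0$; and properness holds because every sub-level set $\{\delta : \|\delta\|^2 \le c\}$ is a closed Euclidean ball, hence compact. I would flag all three in one short sentence and move on.

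The core of the argument is the contraction estimate. I would first record the invariance $\mathbf 1^T \delta(t) = 0$ for every $t \ge 0$: Lemma \ref{lemma:neighbors-graphs-balanced} makes $\mathcal G$ balanced, which together with strong connectedness gives that $\tilde P$ is doubly stochastic (Lemma \ref{lemma:normalized-perron-properties}), so $\mathbf 1^T \tilde P = \mathbf 1^T$; combined with $\mathbf 1^T\delta(0)=0$ this yields $\mathbf 1^T\delta(t) = 0$ for all $t$, placing every $\delta(t)$ in the subspace on which Lemma \ref{lemma:eigenvalues-sym} applies. I would then compute the one-step decrease
\[
\Phi(\delta(t+1)) \;=\; \delta(t)^T \tilde P^T \tilde P\, \delta(t),
\]
decompose $\tilde P = \tilde P^{\mathrm{Sym}} + \tilde P^{\mathrm{Skew}}$, and exploit the fact that $\delta^T \tilde P \delta = \delta^T \tilde P^{\mathrm{Sym}} \delta$ to reduce the quadratic form on $\mathbf 1^\perp$ to the Rayleigh quotient of the symmetric part. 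Invoking Lemma \ref{lemma:eigenvalues-sym}(2) with $\mu_2 = \lambda_2(\tilde P^{\mathrm{Sym}})$ on $\mathbf 1^\perp$ gives $\Phi(\delta(t+1)) \le \mu_2^2 \Phi(\delta(t))$, hence $\|\delta(t)\| \le \mu_2^t \|\delta(0)\|$, which is exponential convergence at rate $\log \mu_2 < 0$ in the sense of Definition \ref{def:exp-conv}. The same inequality shows $\Phi(\delta(t+1)) < \Phi(\delta(t))$ whenever $\delta(t) \ne 0$, completing the Lyapunov claim.

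The hard part is the algebraic step that links $\|\tilde P \delta\|^2$ to the symmetric part $\tilde P^{\mathrm{Sym}}$: for a generic directed Perron matrix these two quantities are not immediately comparable, and only the balanced-digraph hypothesis (via Lemma \ref{lemma:neighbors-graphs-balanced}) lets the cross term $\delta^T(\tilde P + \tilde P^T)\delta = 2\,\delta^T \tilde P^{\mathrm{Sym}}\delta$ separate cleanly from the skew-symmetric contribution when expanding $\tilde P^T \tilde P$. I expect essentially all the work of the proof to be concentrated there; once that identity is in place, the second-eigenvalue bound of Lemma \ref{lemma:eigenvalues-sym} and a standard geometric-decay telescoping finish both claims simultaneously.
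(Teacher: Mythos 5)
First, a point of comparison: the paper does not actually prove this lemma --- it is imported as \textit{Corollary 2} of \cite{olfati2004consensus}, with Lemma \ref{lemma:eigenvalues-sym} cited as the supporting ingredient, so your attempt is a genuine reconstruction rather than a retelling. Your verification of the Lyapunov conditions (smoothness, positive-definiteness, properness of $\Phi$) is routine and correct, and your observation that $\mathbf{1}^T\delta(t)=0$ is preserved for all $t$ because $\tilde{P}$ is doubly stochastic is both correct and necessary for invoking Lemma \ref{lemma:eigenvalues-sym}. (You are also right to read the displayed inequality as governing the discrete-time dynamics of Equation \ref{eq:dis-vec-dt}, despite the statement's reference to Equation \ref{eq:dis-vec-ct}.)

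The gap sits exactly where you say all the work is concentrated, and the step you sketch does not close it. You need $\|\tilde{P}\delta\|^{2}=\delta^{T}\tilde{P}^{T}\tilde{P}\,\delta\le\mu_2^{2}\|\delta\|^{2}$ on $\mathbf{1}^{\perp}$, but the identity $\delta^{T}\tilde{P}\delta=\delta^{T}\tilde{P}^{\mathrm{Sym}}\delta$ controls the quadratic form of $\tilde{P}$, not of $\tilde{P}^{T}\tilde{P}$. Writing $\tilde{P}=\tilde{P}^{\mathrm{Sym}}+\tilde{P}^{\mathrm{Skew}}$ gives $\|\tilde{P}\delta\|^{2}=\|\tilde{P}^{\mathrm{Sym}}\delta\|^{2}+\|\tilde{P}^{\mathrm{Skew}}\delta\|^{2}+2\,\delta^{T}\tilde{P}^{\mathrm{Sym}}\tilde{P}^{\mathrm{Skew}}\delta$; even if the cross term could be dispatched, the contribution $\|\tilde{P}^{\mathrm{Skew}}\delta\|^{2}\ge 0$ pushes in the wrong direction, and the balanced hypothesis does not kill it (balancedness only yields $\mathbf{1}^{T}\tilde{P}=\mathbf{1}^{T}$, not $\tilde{P}^{\mathrm{Skew}}=0$). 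What actually governs $\|\tilde{P}\delta\|/\|\delta\|$ on $\mathbf{1}^{\perp}$ is the second largest \emph{singular value} of $\tilde{P}$, and for a non-normal $\tilde{P}$ --- i.e., a balanced but non-symmetric digraph, which is precisely the setting of the lemma --- this can strictly exceed the corresponding eigenvalue of $\tilde{P}^{\mathrm{Sym}}$, since the general relation $\lambda_{\max}(A^{\mathrm{Sym}})\le\sigma_{\max}(A)$ points the wrong way for your purposes. So the Rayleigh-quotient bound of Lemma \ref{lemma:eigenvalues-sym}, which concerns $\delta^{T}\tilde{P}\delta$, cannot be transferred to $\delta^{T}\tilde{P}^{T}\tilde{P}\delta$ as you propose. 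To repair the argument you would have to bound the symmetric matrix $\tilde{P}^{T}\tilde{P}-I$ directly on $\mathbf{1}^{\perp}$ (i.e., estimate the one-step decrease $\Phi(\delta(t+1))-\Phi(\delta(t))$ as its own quadratic form), or restrict to the normal/undirected case where $\tilde{P}^{T}\tilde{P}=(\tilde{P}^{\mathrm{Sym}})^{2}$ and your chain of inequalities becomes legitimate.
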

Considering the first statement of this lemma, we infer that for any time step $t$ (by induction):
\begin{equation}
\label{eq:dis-vec-dt-vanish}
    ||\delta(t)|| \leq \mu_2 ^ t||\delta(0)||
\end{equation}

Following \textit{Lemma 3} from \cite{olfati2007consensus}, $P$ is a primitive matrix and $0 < \mu_2 < 1$. Thus, the vanishing of the disagreement vector is guaranteed. Moreover, considering the definition of exponential convergence in Definition \ref{def:exp-conv}, note that:
$$\mu_2 ^ t = e ^{t \log(\mu_2)}$$
Thus, the convergence rate equals to $\log(\mu_2)$, as stated in the first statement of the lemma. This result is consistent with the Definition \ref{def:exp-conv} since $\log(\mu_2) < 0$ due to $0 < \mu_2 < 1$.

Since $\theta(t) = \delta(t) + Avg(t) 1$, this also means that average-consensus is globally asymptotically reached with a speed faster or equal to  $\mu_2 =\lambda_2(P^{Sym})$ with $P^{Sym} = \frac{P+P^T}{2}$. Moreover, this lemma also holds for any connected component of the influencing neighbors graph $\mathcal{\tilde{G}}$ (while considering Lemma \ref{lemma:induced-dis-vec-linear}).

\textbf{We shall now consider Equation \ref{eq:orientation-calcdiff}}. The following lemma gives us the number of time steps it take for the flocking agents in some connected component to converge to their desired orientation.
\begin{lemma}
\label{lemma:orientation-calcdiff-influence}
    Consider a \textbf{fixed topology} network $\mathcal{G}$ of flocking agents given by Equation \ref{eq:orientation-calcdiff}. Let $C$ be some connected component of $\mathcal{G}$. Let $\alpha_i$ be the orientation the flocking agents associated with this connected component should converge to. Assuming $m(C)$ influencing agents are inserted into this connected component, they can influence the $|C|$ flocking agents to align to $\alpha_i$ within $Z = 1 + \Big\lceil \frac{\frac{\pi}{2}}{\frac{m(C)\pi}{|C|+m(C)}-\epsilon} \Big\rceil$ time steps.
\end{lemma}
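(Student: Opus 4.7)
The plan is to invoke Corollary \ref{Same Orientaion} so that, without loss of generality, all $m(C)$ influencing agents constantly display the target orientation $\alpha_i$, and then control how fast the flocking agents in $C$ are pulled toward $\alpha_i$ by iterating the averaging update.

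Under Assumption \ref{assump:update-rule-simplification}, Equation \ref{eq:orientation-calcdiff} reduces to the uniform average $\theta_i(t+1) = \frac{1}{n_i(t)} \sum_{a_j \in N_i(t)} \theta_j(t)$ from Equation \ref{eq:simplified-calcdiff}, so writing the signed deviation $\phi_i(t) := \theta_i(t) - \alpha_i \in [-\pi, \pi]$ (valid by Lemma \ref{lemma:difference-restricted}) one obtains $\phi_i(t+1) = \frac{1}{n_i(t)} \sum_{a_j \in N_i(t) \cap A^F} \phi_j(t)$, since each influencing neighbor contributes $0$ to the sum. The influencing agents therefore pull every flocking agent that sees them toward $\alpha_i$ by a factor of $\frac{m_i(t)}{n_i(t)}$ per step.

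I would then track $M(t) := \max_{a_i \in C} |\phi_i(t)|$ and argue (i) that the first averaging round already drops $M(1)$ to at most $\pi/2$, which accounts for the $\pi/2$ in the numerator and the leading ``$1+$'' in $Z$; and (ii) that from then on $M(t+1) \leq M(t) - \left(\frac{m(C)\pi}{|C|+m(C)} - \epsilon\right)$ as long as $M(t) > 0$. The worst-case ratio $\frac{m_i(t)}{n_i(t)} \geq \frac{m(C)}{|C|+m(C)}$ is witnessed by the most adversarial connectivity pattern within $C$ (a single flocking agent having every other agent of $C$ and every influencing agent as neighbors), and combining this pull with the angular diameter $\pi$ supplied by Lemma \ref{lemma:difference-restricted} produces the claimed absolute decrement. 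Iterating the resulting inequality and taking the ceiling of $\pi/2$ divided by that decrement yields exactly $Z$.

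The hard part will be obtaining a uniform additive decrement for the flocking agents in $C$ that have no direct influencing neighbor: for those agents the pull is not felt in a single update, and $|\phi_i(t)|$ shrinks only as the influence propagates along the directed edges inside $C$. The $-\epsilon$ term in the denominator is there precisely to absorb the amortized propagation delay across the component, and making this slack quantitatively precise while keeping $M(t)$ majorized by a linearly decreasing envelope is where the argument requires the most care.
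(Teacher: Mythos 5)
Your proposal diverges completely from the paper's argument: the paper does not give a self-contained convergence analysis at all, but simply invokes \emph{Lemma 4} of Genter and Stone \cite{genter2013ad} and checks that its two hypotheses (the inequality $2\pi > (Z-2)\frac{m(C)\pi}{|C|+m(C)}-\epsilon$ and a bound on $\theta_i(t)-\alpha_i$ involving $\beta_{ij}(t)$) are automatically satisfied. Your attempt to reprove the bound from scratch via a maximum-deviation potential $M(t)$ is a legitimately different and more transparent route, but as written it has a genuine gap at exactly the point you flag yourself: the claimed uniform additive decrement $M(t+1) \leq M(t) - \bigl(\frac{m(C)\pi}{|C|+m(C)}-\epsilon\bigr)$ is false for a flocking agent with no influencing neighbor. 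If such an agent's flocking neighbors all sit at deviation $M(t)$, its updated deviation is still exactly $M(t)$, so the maximum over the component does not move at all during that step. The $-\epsilon$ slack cannot repair this: it is a fixed per-step constant, whereas the propagation delay before the influence reaches a distant agent grows with the graph distance from the nearest influencing agent, and hence with the diameter of $C$. Since the formula for $Z$ contains no dependence on $|C|$'s diameter (only on the ratio $\frac{m(C)}{|C|+m(C)}$), no amortization argument of the kind you sketch can close the gap; the envelope $M(t)$ simply is not majorized by a linearly decreasing function with that slope.

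The underlying reason is that the cited Lemma 4 is formulated for the single-neighborhood setting, where every flocking agent directly sees all $m(C)$ influencing agents and therefore feels the pull $\frac{m(C)}{|C|+m(C)}$ at every step; under that hypothesis your decrement argument does go through, and the first-round drop to $\pi/2$ plus the iteration gives exactly $Z$. To make your proof (or indeed the paper's) valid for a general connected component, you would either need to add the hypothesis that every flocking agent of $C$ has at least one influencing neighbor contributing the stated fraction, or accept a bound that degrades with the component's diameter. As it stands, your proposal is a proof plan whose central inequality is unestablished and, for general $C$, unprovable.
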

\begin{proof}
    According to \textit{Lemma 4} in \cite{genter2013ad}, this can be done if $2\pi > (Z-2)\frac{m(C)\pi}{|C|+m(C)}-\epsilon$ and $\theta_i(t) - \alpha_i \leq (Z-1)\frac{m(C)\pi}{|C|+m(C)}-\epsilon + \beta_{ij}(t) + \theta_i(t) - \theta_i(t+1) + \pi$. Clearly, the first inequality always holds. For the second one, noting that $\beta_{ij}(t) \leq \pi$ by definition and that $\theta_i(t)$ is on both of its sides, it is equivalent to the inequality $\theta_i(t+1) - \alpha_i \leq (Z-1)\frac{m(C)\pi}{|C|+m(C)}-\epsilon + 2\pi$. Since $\theta_i(t+1) - \alpha_i \leq 2\pi$, this also holds.
\end{proof}

For characterizing the behaviour each influencing agent should adopt, we consider the simplification of the update rule defined by Equation \ref{eq:orientation-calcdiff}. Since we consider a fixed topology, according to Definition \ref{def:fixed-topology}, it holds that there exists some balanced digraph $\tilde{\mathcal{G}}=(\tilde{\mathcal{V}},\tilde{\mathcal{E}})$ such that $\tilde{\mathcal{G}}(t) = \tilde{\mathcal{G}}$ for any time step $t$. We denote this graph's adjacency matrix and degree matrix by $\tilde{A},\tilde{D}$ (respectively). Thus, by Equation \ref{eq:simplified-calcdiff}:
\begin{equation}
    \label{eq:fixed-simplified-calcdiff}
    \tilde{\theta}(t+1) = (I+\tilde{D})^{-1}(I+\tilde{A}) \tilde{\theta}(t) := \tilde{P} \tilde{\theta}(t)
\end{equation}

We would like to characterize the group decision value of a connected component in the influencing neighbors graph at time step $0$. The following lemma give us this desired characterization:

The following theorem gives us the orientation the influencing agents should adopt in order to make the flocking agents in some connected component converge to their desired orientation.
\begin{theorem}
\label{theorem:orientation-normalized-perron-influence}
    Consider a \textbf{fixed topology} network $\tilde{\mathcal{G}}$ of agents given by Equation \ref{eq:fixed-simplified-calcdiff}, with a Perron matrix $\tilde{P} = (I+\tilde{D})^{-1}(I+\tilde{A})$. Let $C$ be some connected component of $\tilde{\mathcal{G}}$. We denote by $|C|$ the number of agents which correspond to $C$ and assume that $a_{i_1}, \dots ,a_{i_{|C|}}$ are those agents. Let $\alpha_i$ be the orientation the agents associated with this connected component should converge to. Let $\theta(0)$ be fixed.

    Assuming that $m(C)$ influencing agents are inserted into this connected component, in order for the flocking agents to converge to $\alpha_i$, \textbf{all} the influencing agents should \textbf{constantly} adopt the orientation $\alpha_i$ in order for the flocking agents to converge to $\alpha_i$.
\end{theorem}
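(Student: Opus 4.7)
The plan is to mirror the argument used in the proof of Theorem \ref{theorem:orientation-perron-influence-same-orient}, but with the $\varepsilon$-Perron matrix $I - \varepsilon \tilde L$ replaced by the normalized Perron matrix $\tilde P = (I+\tilde D)^{-1}(I+\tilde A)$ that governs Equation \ref{eq:fixed-simplified-calcdiff}. Lemmas \ref{lemma:connected-component-normalized-laplacian} and \ref{lemma:normalized-perron-properties} together with Perron--Frobenius (Theorem \ref{theorem:perron-frobenius}) do essentially all of the heavy lifting; what remains is a short algebraic identification of the consensus value.

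First I would restrict the dynamics to the component $C$. By Lemma \ref{lemma:connected-component-normalized-laplacian}, $\tilde\theta^C(t+1) = \tilde P^C \tilde\theta^C(t) = (\tilde P^C)^t \tilde\theta^C(0)$, where $\tilde P^C$ is the normalized Perron matrix of the induced subgraph $\tilde{\mathcal G}(C)$. Since $\tilde{\mathcal G}$ is balanced (Lemma \ref{lemma:neighbors-graphs-balanced}) and $C$ is one of its connected components, $\tilde{\mathcal G}(C)$ is strongly connected, and Lemma \ref{lemma:normalized-perron-properties}(3) then gives that $\tilde P^C$ is primitive. Perron--Frobenius therefore yields the existence of $\lim_{t\to\infty}(\tilde P^C)^t$, so the flocking coordinates of $\tilde\theta^C(t)$ all approach a common limit $\alpha$. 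Since $m(C)\ge 1$, some flocking agent $a_j \in A^F \cap C$ must have $m_j(0) \ge 1$ (otherwise the influencing agents in $C$ would be isolated from all flocking agents of $C$, contradicting $C$ being a single connected component). Expanding row $j$ of Equation \ref{eq:fixed-simplified-calcdiff}, separating flocking from influencing neighbors, and using that every influencing agent constantly adopts $\alpha_i$, yields
\[
\tilde\theta^C_j(t+1) \;=\; \frac{1}{n_j(0)}\!\!\sum_{a_{\ell_1}\in N_j(0)\cap A^F}\!\!\tilde\theta^C_{\ell_1}(t) \;+\; \frac{m_j(0)}{n_j(0)}\,\alpha_i.
\]
Letting $t \to \infty$ and using the fact that every flocking limit in $C$ equals $\alpha$, this collapses via $n_j(0) = k_j(0) + m_j(0)$ to $m_j(0)\,\alpha = m_j(0)\,\alpha_i$, so $\alpha = \alpha_i$, as required.

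The main obstacle I expect lies in a subtlety that is also papered over in the proof of Theorem \ref{theorem:orientation-perron-influence-same-orient}: the influencing coordinates are not actually driven by $\tilde P^C$, since they are frozen at $\alpha_i$, so Perron--Frobenius applied to $\tilde P^C$ does not literally yield convergence of the mixed trajectory. The cleanest fix is to split $\tilde P^C$ into block form isolating flocking from (frozen) influencing coordinates; the flocking block $M$ then produces an affine iteration $\theta^F(t+1) = M\theta^F(t) + b$ whose unique fixed point is $\alpha_i\mathbf 1$. To conclude convergence one argues, via a Gershgorin estimate in the spirit of Lemma \ref{lemma:normalized-perron-properties}(4), that $\rho(M) < 1$: $M$ is obtained from the row-stochastic matrix $\tilde P^C$ by deleting the rows and columns corresponding to the influencing agents, so at least one of its rows (row $j$ above) has sum strictly less than $1$, and irreducibility inherited from strong connectivity of $\tilde{\mathcal G}(C)$ pushes this strict deficit across all coordinates. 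This spectral estimate is the step where I would expect to spend the most care.
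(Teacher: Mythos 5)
Your main argument is essentially identical to the paper's own proof of this theorem: restrict the dynamics to the component via Lemma \ref{lemma:connected-component-normalized-laplacian}, invoke Perron--Frobenius for the existence of $\lim_{t\to\infty}(\tilde P^C)^t$, pick a flocking agent $a_j$ with $m_j(0)\ge 1$, expand row $j$ of the update with the influencing neighbors frozen at $\alpha_i$, and pass to the limit to obtain $m_j(0)\alpha=m_j(0)\alpha_i$. The subtlety you flag in your closing paragraph is genuine, and the paper's proof shares it: the paper simultaneously asserts $\tilde\theta^C(t+1)=\tilde P^C\tilde\theta^C(t)$ and that the influencing coordinates are held constant at $\alpha_i$, which are incompatible unless the influencing rows of $\tilde P^C$ are overwritten, so Perron--Frobenius applied to $\tilde P^C$ does not by itself yield convergence of the actual mixed trajectory. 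Your block-decomposition repair (affine iteration $\theta^F(t+1)=M\theta^F(t)+b$ with $\rho(M)<1$, whose unique fixed point is $\alpha_i\mathbf 1$) is the right fix and is strictly more rigorous than what the paper does. One caveat on your spectral step: $M$ need not be irreducible, since the flocking-only subgraph of $C$ can be disconnected when two flocking agents are joined only through an influencing agent. What you actually need is weaker and does hold: every flocking vertex of $C$ can reach, within the flocking-only subgraph, some vertex adjacent to an influencing agent (each flocking subcomponent must touch an influencing agent for $C$ to be connected), and propagating that row-sum deficit along such paths gives $\rho(M)<1$.
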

\begin{proof}
    Considering Lemma \ref{lemma:connected-component-normalized-laplacian}, note that:
    $$\tilde{\theta}^C(t+1) = \tilde{P}^C \tilde{\theta}^C(t) = (\tilde{P}^C)^t \tilde{\theta}^C(0)$$
    Thus, a consensus is reached in discrete-time, if the limit $\lim_{t \rightarrow \infty} (\tilde{P}^C)^t$ exists. By Lemma \ref{lemma:normalized-perron-properties}, $\tilde{P} ^ C$ is row stochastic, that is: $\tilde{P} ^ C 1 = 1$. By Perron-Frobenius Theorem, $\tilde{P} ^ C$ has a left eigenvector $w \in \mathbb{R} ^ {|C|}$ for which:
    \begin{enumerate}
        \item $w^T \tilde{P} ^ C = w^T$
        \item $1^T w = 1 \Rightarrow \sum_{j = 1} ^ {|C|} w_j = 1$
        \item $\lim_{t \rightarrow \infty} (\tilde{P} ^ C) ^t = 1 w^T$
    \end{enumerate} 
    
    Therefore, we denote $\alpha := \lim_{t \rightarrow \infty} \tilde{\theta}^C_j(t)$, for a flocking agent $a_j \in A^F$, which corresponds to the vertex $j$ in the connected component $C$.

    Considering Subsection \ref{sec:flocking-model}-model, we note that $k_j(0) \geq 1$ since $C$ is a connected component. Moreover, since the topology is fixed, $k_j(t)=k_j(0)$ and $m_j(t)=m_j(0)$ for any time step $t$. In particular, $N_j(0)=N_j(t)$ for any time step $t$. Necessarily, there exists a flocking agent for which some influencing agent lies within its neighborhood. Without loss of generality, we assume that $a_j$ is such a flocking agent, i.e., $m_j(t) \geq 1$. The following holds according to Equation \ref{eq:simplified-calcdiff} and Lemma \ref{lemma:simple-calcdiff-matrix}:
    $$\tilde{\theta}^C_j(t+1) = \frac{1}{n_j(0)} \sum_{\ell_1 \in N_j(0) \cap A^F} \tilde{\theta}^C_{\ell_1}(t) +  \frac{1}{n_j(0)} \sum_{\ell_2 \in N_j(0) \cap A^D} \tilde{\theta}^C_{\ell_2}(t)$$
    Since all the influencing agents are \textbf{constantly} adopting the orientation $\alpha_i$, we have that:
    $$\tilde{\theta}^C_j(t+1) = \frac{1}{n_j(0)} \sum_{\ell_1 \in N_j(0) \cap A^F} \tilde{\theta}^C_{\ell_1}(t) + \frac{1}{n_j(0)} \sum_{\ell_2 \in N_j(0) \cap A^D} \alpha_i = $$ $$= \frac{1}{n_j(0)} \sum_{\ell_1 \in N_j(0) \cap A^F} \tilde{\theta}^C_{\ell_1}(t) + \frac{m_j(0)}{n_j(0)} \alpha_i$$
    Taking the limit $t \rightarrow \infty$ in both sides results in the following:
    $$\alpha = \frac{1}{n_j(0)} \sum_{\ell_1 \in N_j(0) \cap A^F} \alpha + \frac{m_j(0)}{n_j(0)} \alpha_i \Rightarrow$$
    $$\Rightarrow n_j(0)\alpha = k_j(0)\alpha + m_j(0)\alpha_i \Rightarrow$$ $$\Rightarrow (n_j(0) - k_j(0))\alpha = m_j(0)\alpha_i \Rightarrow \boxed{\alpha = \alpha_i}$$
    where the last transition follows from the fact that $n_j(0) = k_j(0) + m_j(0)$.

\end{proof}

\subsubsection{Continuous-Time Case}
\label{sec:ct-fixed}
\textbf{We will first consider Equation \ref{eq:global-orient-linear}}. Similarly to Lemma \ref{lemma:connected-component-normalized-laplacian}, Equation \ref{eq:global-orient-linear} induces a heading system for each connected component of the influencing neighbors graph at time step $0$. We shall phrase this formally in the following lemma:
\begin{lemma}
    \label{lemma:connected-component-global-orient-linear}
    Let $C$ be some connected component of $\mathcal{\tilde{G}}(0)$. We denote by $\tilde{\theta} ^ C (t)$ the orientations vector which corresponds to the flocking agents in $C$. Then, $C$ induces a topology $\tilde{\mathcal{G}}(0)(C)$ (according to Definition \ref{def:induced-subgraph}), with a Laplacian matrix $\tilde{L} ^ C(t)$ for which $\dot{\tilde{\theta}} ^ C (t+1) = -\tilde{L} ^ C (t) \tilde{\theta} ^ C (t)$ according to Equation \ref{eq:global-orient-linear}.
\end{lemma}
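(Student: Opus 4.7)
The plan is to mirror the proof of Lemma \ref{lemma:connected-component-normalized-laplacian}, substituting the Laplacian $\tilde{L}(t)$ for the Perron matrix $\tilde{P}(t)$ and replacing the discrete update with the continuous-time linear ODE of Equation \ref{eq:global-orient-linear}. The core idea is to exhibit the block-diagonal structure that $\tilde{L}(t)$ inherits from the connected-component decomposition of $\tilde{\mathcal{G}}(0)$, so that restricting the system to the coordinates of $C$ yields a closed ODE driven by the Laplacian of the induced subgraph.

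First I would enumerate the $n(C)$ agents of $C$ as $a_{i_1},\dots,a_{i_{n(C)}}$, stack their orientations into $\tilde{\theta}^C(t) := [\theta_{i_1}(t),\dots,\theta_{i_{n(C)}}(t)]^T$, and define $\tilde{L}^C(t) \in \mathbb{R}^{n(C)\times n(C)}$ as the principal submatrix of $\tilde{L}(t)$ obtained by keeping only the rows and columns indexed by $i_1,\dots,i_{n(C)}$. By Definition \ref{def:induced-subgraph}, this is exactly the Laplacian of the induced subgraph $\tilde{\mathcal{G}}(0)(C)$.

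The key step is the block-diagonal observation. Because $C$ is a connected component of $\tilde{\mathcal{G}}(0)$, no agent in $C$ has a neighbor outside $C$, so $a_j \notin N_{i_r}(t)$ and hence $\tilde{A}(t)[i_r,j]=0$ whenever $i_r \in \{i_1,\dots,i_{n(C)}\}$ and $j \notin \{i_1,\dots,i_{n(C)}\}$. Applying Equation \ref{eq:global-orient-linear} coordinate-wise to each $a_{i_r} \in C$ therefore gives
$$\dot{\theta}_{i_r}(t) = \sum_{a_j \in N_{i_r}(t)} \tilde{A}(t)[i_r,j](\theta_j(t)-\theta_{i_r}(t)) = \sum_{a_j \in N_{i_r}(t)\cap C} \tilde{A}(t)[i_r,j](\theta_j(t)-\theta_{i_r}(t)),$$
and by Lemma \ref{lemma:global-orient-linear-matrix} applied to the induced subgraph $\tilde{\mathcal{G}}(0)(C)$ the right-hand side equals the $r$-th entry of $-\tilde{L}^C(t)\tilde{\theta}^C(t)$. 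Assembling the $n(C)$ scalar equations yields $\dot{\tilde{\theta}}^C(t) = -\tilde{L}^C(t)\tilde{\theta}^C(t)$, as claimed.

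The main subtlety, rather than a genuine obstacle, is justifying that the zero-off-block-diagonal pattern of $\tilde{L}(t)$ inherited from $\tilde{\mathcal{G}}(0)$ persists at every time $t$, so that the restricted dynamics truly decouple from the agents outside $C$. In the fixed-topology setting of Subsection \ref{sec:fixed-topology} this is immediate since $\tilde{\mathcal{G}}(t)=\tilde{\mathcal{G}}(0)$; otherwise one restricts attention to the subsystem on $C$ and uses the current-time induced Laplacian, for which the same derivation above applies verbatim.
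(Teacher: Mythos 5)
Your proposal is correct and takes essentially the same route the paper intends: the paper states this lemma without a separate proof, deferring to the restriction argument used for Lemma \ref{lemma:connected-component-normalized-laplacian} (extract the principal submatrix of $\tilde{L}(t)$ on the indices of $C$ and observe the restricted system closes). Your added justification that the off-block entries of $\tilde{A}(t)$ vanish because no edge of a connected component leaves it, together with the remark that this persists for all $t$ in the fixed-topology setting where the lemma is applied, only makes explicit what the paper dismisses with ``Clearly.''
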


The following theorem gives us the orientation the influencing agents should adopt in order to make the flocking agents in some connected component converge to their desired orientation. 

\begin{theorem}
\label{theorem:orientation-linear-influence-same-orient}
    Consider a \textbf{fixed topology} network $\mathcal{\tilde{G}}$ of agents (both flocking an influencing) given by Equation \ref{eq:global-orient-linear}, with a Laplacian matrix $\tilde{L}$. Let $C$ be some connected component of $\mathcal{\tilde{G}}$. Let $\alpha_i$ be the orientation the agents associated with this connected component should converge to. Let $\theta(0)$ be fixed.

    Assuming that $m(C)$ influencing agents are inserted into this connected component, \textbf{all} the influencing agents should \textbf{constantly} adopt the orientation $\alpha_i$ in order for the flocking agents to converge to $\alpha_i$.
\end{theorem}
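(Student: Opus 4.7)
The plan is to mirror the discrete-time argument of Theorem \ref{theorem:orientation-perron-influence-same-orient}, but adapted to the continuous-time linear dynamics $\dot{\tilde{\theta}}^C(t)=-\tilde{L}^C\tilde{\theta}^C(t)$ guaranteed by Lemma \ref{lemma:connected-component-global-orient-linear}. First I would restrict attention to $C$, noting that the influencing agents in $C$ have $\theta_\ell(t)\equiv\alpha_i$ by the \emph{Face Desired Orientation} behaviour (their orientations are maintained externally, not by the ODE). I would then introduce the shifted error $e_j(t):=\theta_j(t)-\alpha_i$ for every flocking $a_j\in C$, so that $e_\ell\equiv 0$ for every influencing $a_\ell\in C$.

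Next, using $\tilde{L}^C\mathbf{1}=0$ together with the fact that influencing agents have $\dot{e}_\ell=0$, I would rewrite the dynamics entry-wise: for a flocking $a_j\in C$ with $k_j(0)$ flocking neighbours and $m_j(0)$ influencing neighbours,
$$
\dot{e}_j(t)=\sum_{a_k\in N_j(0)\cap A^F}\bigl(e_k(t)-e_j(t)\bigr)\;-\;m_j(0)\,e_j(t).
$$
Stacking over the flocking agents of $C$ this becomes $\dot{e}_F(t)=-M\,e_F(t)$, where $M=L_F+\operatorname{diag}(m_j(0))$ is the \emph{grounded Laplacian}: $L_F$ is the Laplacian of the subgraph induced by the flocking agents of $C$, and the diagonal correction records the boundary coupling to the influencing agents.

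The main technical step is to show $M\succ 0$. I would compute $e_F^{T}Me_F=e_F^{T}L_F e_F+\sum_{j}m_j(0)e_j^{2}$, a sum of two nonnegative quadratic forms. The expression vanishes only if $e_F$ is constant on every connected component of the flocking-only subgraph and $e_j=0$ whenever $m_j(0)>0$. Here I would invoke the hypothesis that $C$ is a single connected component of $\tilde{\mathcal{G}}$ containing the inserted influencing agents: if some flocking sub-component $F_q\subseteq C$ had $m_j(0)=0$ for every $j\in F_q$, then $F_q$ would share no edge with any influencing agent and would therefore separate from the rest of $C$ in $\tilde{\mathcal{G}}$, contradicting that $C$ is a single component. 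Hence every flocking sub-component touches an influencing agent, forcing $e_F=0$, and $M\succ 0$.

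With $M$ positive definite, I would close the argument with the Lyapunov function $V(t)=\tfrac{1}{2}\|e_F(t)\|^{2}$, for which $\dot{V}(t)=-e_F(t)^{T}Me_F(t)\le -2\lambda_1(M)V(t)$, so that $e_F(t)\to 0$ exponentially and $\tilde{\theta}^{C}_j(t)\to\alpha_i$ for every flocking agent in $C$. The main obstacle I anticipate is the positive-definiteness step: the connectivity-plus-boundary argument above is the only place where the assumption that influencing agents are actually injected into $C$ is used, and it must be handled carefully because the flocking-only induced subgraph is not assumed to be connected even when $C$ is.
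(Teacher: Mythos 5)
Your proposal is correct, but it takes a genuinely different route from the paper. The paper's proof first invokes \textit{Theorem 3} of \cite{olfati2004consensus} to assert that the limit $\lim_{t\rightarrow\infty}e^{-t\tilde{L}^C}$ exists and that a consensus value $\alpha$ is reached; it then identifies $\alpha$ by picking a single flocking agent $a_j$ with $m_j(0)\geq 1$, writing its entry-wise dynamics with the influencing neighbours frozen at $\alpha_i$, and passing to the limit in $\dot{\tilde{\theta}}_j^C(t)\rightarrow 0$ to get $m_j(0)(\alpha_i-\alpha)=0$. In other words, the paper only characterizes the consensus value \emph{conditional} on convergence, which it outsources to a cited result. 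You instead prove convergence from scratch: shifting by $\alpha_i$, eliminating the frozen influencing agents, and reducing the system to $\dot{e}_F=-Me_F$ with $M$ the grounded Laplacian, whose positive definiteness you establish by the connectivity-plus-boundary argument (every flocking sub-component of $C$ must touch an influencing agent, else $C$ would disconnect). This buys you three things the paper's proof does not deliver: an explicit exponential rate $\lambda_1(M)$; a self-contained argument that does not presuppose the existence of a consensus limit; and, importantly, an argument whose hypotheses actually match the setting --- the cited theorem concerns agents that all evolve by the Laplacian dynamics on a strongly connected digraph, whereas here the influencing agents' orientations are held fixed externally (if they obeyed the ODE, the consensus would be the average of initial headings, not $\alpha_i$), so the paper's appeal to that theorem is strained while your grounded-Laplacian argument closes that gap. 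The one point to state explicitly is that $L_F$ is symmetric (which follows from Lemma \ref{lemma:neighbors-graphs-balanced}: the neighbour relation is mutual, so the digraph is effectively undirected), since the quadratic-form decomposition $e_F^{T}Me_F=e_F^{T}L_Fe_F+\sum_j m_j(0)e_j^{2}$ and the Lyapunov estimate both rely on it.
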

\begin{proof}
    Considering Lemma \ref{lemma:connected-component-global-orient-linear}, note that the solution of the heading system induced by the connected component $C$ is as follows:
    $$\tilde{\theta}^C(t+1) = e^{- t \tilde{L}^C} \cdot \tilde{\theta}^C(0)$$
    Thus, a consensus is reached in discrete-time, if the limit $\lim_{t \rightarrow \infty} e^{- t \tilde{L}^C}$ exists. By \textit{Theorem 3} in \cite{olfati2004consensus}, this limit exists.
    
    Therefore, we denote $\alpha := \lim_{t \rightarrow \infty} \tilde{\theta}^C_j(t)$, for a flocking agent $a_j \in A^F$, which corresponds to the vertex $j$ in the connected component $C$. Clearly, this means that: $\lim_{t \rightarrow \infty} \dot{\tilde{\theta}}_j ^ C (t) = 0$.

    Considering Subsection \ref{sec:flocking-model}-model, we note that $k_j(0) \geq 1$ since $C$ is a connected component. Moreover, since the topology is fixed, $k_j(t)=k_j(0)$ and $m_j(t)=m_j(0)$ for any time step $t$. In particular, $N_j(0)=N_j(t)$ for any time step $t$. Necessarily, there exists a flocking agent for which some influencing agent lies within its neighborhood. Without loss of generality, we assume that $a_j$ is such a flocking agent, i.e., $m_j(t) \geq 1$. Let $\tilde{A} ^ C$ be the adjacency matrix which corresponds to the topology induced by the connected component $C$. The following holds according to Equation \ref{eq:global-orient-linear} and Lemma \ref{lemma:global-orient-linear-matrix}:
    $$\dot{\tilde{\theta}}_j ^ C (t+1) = \sum_{\ell_1 \in N_j(0) \cap A^F} \tilde{A} ^ C[j,\ell_1] (\tilde{\theta}^C_{\ell_1}(t) - \tilde{\theta}^C_j(t)) +$$
    $$ +\sum_{\ell_2 \in N_j(0) \cap A^D} \tilde{A} ^ C[j,\ell_2] (\tilde{\theta}^C_{\ell_2}(t) - \tilde{\theta}^C_j(t))$$
    Since all the influencing agents are \textbf{constantly} adopting the orientation $\alpha_i$, we have that:
    $$\dot{\tilde{\theta}}_j ^ C (t+1) = \sum_{\ell_1 \in N_j(0) \cap A^F} \tilde{A} ^ C[j,\ell_1] (\tilde{\theta}^C_{\ell_1}(t) - \tilde{\theta}^C_j(t)) +$$
    $$ + \sum_{\ell_2 \in N_j(0) \cap A^D} \tilde{A} ^ C[j,\ell_2] (\alpha_i - \tilde{\theta}^C_j(t)) =$$ $$= \sum_{\ell_1 \in N_j(0) \cap A^F} \tilde{A} ^ C[j,\ell_1] (\tilde{\theta}^C_{\ell_1}(t) - \tilde{\theta}^C_j(t)) + m_j(0)(\alpha_i - \tilde{\theta}^C_j(t))$$
    Taking the limit $t \rightarrow \infty$ in both sides results in the following:
    $$0 = \sum_{\ell_1 \in N_j(0) \cap A^F} \tilde{A} ^ C[j,\ell_1] (\alpha - \alpha) + m_j(0)(\alpha_i - \alpha) \Rightarrow \boxed{\alpha = \alpha_i}$$
    where the last transition follows from the fact that $m_j(0) \geq 1$.
\end{proof}

    

As discussed in Subsection \ref{sec:dt-fixed}, the speed of reaching a consensus is the key in design of the network topology as well as analysis of the influence the influencing agents have upon the flocking agents in a given connected component. For the sake of this analysis, we will be considering the disagreement vector presented in Equation \ref{eq:dis-vec}. The continuous-time dynamics of the disagreement vector is given as follows:
\begin{equation}
    \label{eq:dis-vec-ct}
    \dot{\delta}(t) = -\tilde{L} \delta(t)
\end{equation}

Using Lemma \ref{lemma:eigenvalues-sym}, \textit{Theorem 8} in \cite{olfati2004consensus} characterizes the performance of consensus. It is stated as follows:
\begin{lemma}
Consider a \textbf{fixed topology} network $\mathcal{G}$ which is a strongly connected digraph. Then, the following statements hold:
\begin{enumerate}
    \item The continuous-time disagreement vector $\delta$, as the solution for Equation \ref{eq:dis-vec-ct}, converges exponentially with a rate which equals to $-\lambda_2$, where $\lambda_2 =\lambda_2(L^{Sym})$ with $L^{Sym} = \frac{L+L^T}{2}$, i.e.: 
    $$||\delta(t)|| \leq ||\delta(0)||e^{-\lambda_2t}$$
    \item $V(\delta) = \frac{1}{2} ||\delta||^2$ is a smooth, positive-definite and proper function, which acts as a valid Lyapunov function for the disagreement dynamics.
\end{enumerate}
\end{lemma}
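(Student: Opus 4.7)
The plan is to exhibit $V(\delta) = \tfrac{1}{2}\|\delta\|^2$ as a Lyapunov function for the dynamics $\dot{\delta}(t) = -\tilde{L}\delta(t)$ and then extract the exponential rate by a standard Grönwall-type argument. The crucial preparatory observation is that the disagreement vector always satisfies $1^T\delta(t) = 0$: indeed, by definition $\delta(t) = \tilde{\theta}(t) - AVG_1(t)\,1$, so $1^T\delta(t) = 1^T\tilde{\theta}(t) - n\,AVG_1(t) = 0$. This places $\delta(t)$ in the subspace where the Rayleigh-type bound of Lemma \ref{lemma:eigenvalues-sym} applies, and it is invariant under the flow because $1^T\tilde{L} = 0$ (recalling that $\tilde{\mathcal{G}}$ is balanced, by Lemma \ref{lemma:neighbors-graphs-balanced}).

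Granted this, I would compute
\begin{equation*}
\dot{V}(\delta(t)) \;=\; \delta(t)^T \dot{\delta}(t) \;=\; -\delta(t)^T \tilde{L}\,\delta(t).
\end{equation*}
Since $\delta^T \tilde{L}\delta$ is a scalar it equals its own transpose, hence equals $\delta^T \tilde{L}^{Sym}\delta$ where $\tilde{L}^{Sym} = \tfrac{1}{2}(\tilde{L}+\tilde{L}^T)$. Applying the first part of Lemma \ref{lemma:eigenvalues-sym} together with $1^T\delta = 0$ yields
\begin{equation*}
\delta(t)^T \tilde{L}\,\delta(t) \;=\; \delta(t)^T \tilde{L}^{Sym}\delta(t) \;\geq\; \lambda_2\|\delta(t)\|^2 \;=\; 2\lambda_2 V(\delta(t)),
\end{equation*}
so $\dot{V}(\delta(t)) \leq -2\lambda_2 V(\delta(t))$. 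Grönwall's inequality then gives $V(\delta(t)) \leq V(\delta(0))\,e^{-2\lambda_2 t}$, which upon taking square roots delivers the advertised bound $\|\delta(t)\| \leq \|\delta(0)\|e^{-\lambda_2 t}$, i.e.\ exponential convergence with rate $-\lambda_2$ in the sense of Definition \ref{def:exp-conv}.

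For the second statement, smoothness and properness of $V(\delta) = \tfrac{1}{2}\|\delta\|^2$ are immediate, and positive-definiteness on the relevant disagreement subspace $\{\delta : 1^T\delta = 0\}$ follows because $V(\delta) \geq 0$ with equality only at $\delta = 0$; the inequality $\dot{V} \leq -2\lambda_2 V \leq 0$ derived above confirms it as a valid Lyapunov function for the disagreement dynamics.

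The main obstacle, and the step I would single out, is the justification that one may replace $\tilde{L}$ by $\tilde{L}^{Sym}$ inside the quadratic form together with the invocation of Lemma \ref{lemma:eigenvalues-sym}: this requires both the balanced-digraph property (to guarantee that $\delta$ stays in the kernel complement of $1$ and that $\lambda_2(\tilde{L}^{Sym})$ is strictly positive when $\tilde{\mathcal{G}}$ is strongly connected), and a careful handling of the fact that $\tilde{L}$ itself need not be symmetric. Once that algebraic identity is in hand, the rest of the argument is essentially the scalar Grönwall calculation.
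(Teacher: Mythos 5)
Your proof is correct and is essentially the argument the paper relies on: the paper itself does not prove this lemma but imports it as \emph{Theorem 8} of Olfati-Saber and Murray, and the identical computation ($\dot{V} = -\delta^T L \delta = -\delta^T L^{Sym}\delta \leq -2\lambda_2 V$, then integrate and take square roots) appears verbatim in the paper's own proof of the analogous switching-topology rate bound for $\lambda_2^{**}$ in Section~\ref{sec:ct-switching}. Your preparatory observations --- that $1^T\delta(t)=0$ holds and is preserved by the flow because the graph is balanced, and that one may pass from $\tilde{L}$ to $\tilde{L}^{Sym}$ inside the quadratic form before invoking Lemma~\ref{lemma:eigenvalues-sym} --- are exactly the points the paper leaves implicit, so nothing is missing.
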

Since $\theta(t) = \delta(t) + Avg(t) 1$, the first statement of this lemma also means that (average-)consensus is globally asymptotically reached with a speed faster or equal to $\lambda_2 =\lambda_2(L^{Sym})$ with $L^{Sym} = \frac{L+L^T}{2}$. Moreover, this lemma also holds for any connected component of the influencing neighbors graph $\mathcal{\tilde{G}}$ (while considering Lemma \ref{lemma:induced-dis-vec-linear}).

The following lemma formally states the difference which arises between the discrete-time case and the continuous-time case:
\begin{lemma}
    \label{lemma:faster-conv}
    The continuous-time disagreement vector (given in Equation \ref{eq:dis-vec-ct}) converges exponentially \textbf{faster} than the continuous-time disagreement vector (given in Equation \ref{eq:dis-vec-dt-vanish}).
\end{lemma}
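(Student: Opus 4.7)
The plan is to translate both convergence bounds into the common exponential form of Definition \ref{def:exp-conv} and then compare the two decay rates directly as scalars. Rewriting the DT bound from Equation \ref{eq:dis-vec-dt-vanish} as $||\delta(t)|| \leq \mu_2^t ||\delta(0)|| = e^{(\log\mu_2) t}||\delta(0)||$, the effective DT rate in the sense of Definition \ref{def:exp-conv} is $r_{DT} := \log(1-\varepsilon\lambda_2)$, while the lemma immediately preceding the claim supplies the CT rate $r_{CT} := -\lambda_2$. Both are strictly negative under the standing hypotheses: $\lambda_2 > 0$ follows from the strong-connectivity assumption invoked in Lemma \ref{lemma:dt-conv-rate}, and $\mu_2 = 1-\varepsilon\lambda_2 \in (0,1)$ follows from $0 < \varepsilon < 1/\Delta$ combined with the Gershgorin-based spectral control used in Lemma \ref{lemma:normalized-perron-properties}. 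Faster exponential convergence in the sense of Definition \ref{def:exp-conv} then reduces to the inequality $r_{CT} < r_{DT}$.

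Once the problem has been reduced to that scalar comparison, the remaining task is the single inequality
\[
-\lambda_2 < \log(1-\varepsilon\lambda_2), \qquad \text{equivalently} \qquad 1-\varepsilon\lambda_2 > e^{-\lambda_2}.
\]
I would prove it by truncating the alternating Taylor series $e^{-\lambda_2} = 1 - \lambda_2 + \lambda_2^2/2 - \cdots$ to obtain the bound $e^{-\lambda_2} \leq 1 - \lambda_2 + \lambda_2^2/2$ valid for $\lambda_2 > 0$. This reduces the target inequality to $\varepsilon \lambda_2 < \lambda_2 - \lambda_2^2/2$, i.e.\ $\varepsilon < 1 - \lambda_2/2$, which is secured in the admissible regime $\varepsilon < 1/\Delta$ once $\lambda_2$ is controlled against $\Delta$ through the symmetric-part argument behind Lemma \ref{lemma:eigenvalues-sym}. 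Substituting $r_{CT} < r_{DT} < 0$ into the $O(e^{rt})$ envelope of Definition \ref{def:exp-conv} then yields the claimed strictly faster exponential decay of the CT disagreement vector relative to its DT counterpart.

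The main obstacle is making the scalar inequality $1-\varepsilon\lambda_2 > e^{-\lambda_2}$ hold uniformly across the admissible parameter range: the loose bound $\lambda_2 \leq 2\Delta$ available from Gershgorin is not by itself strong enough to force $\varepsilon < (1-e^{-\lambda_2})/\lambda_2$, so the delicate piece is pairing $\lambda_2(L^{Sym})$ with $\Delta$ precisely enough through the symmetric-part machinery of Definition \ref{def:sym-part} and Lemma \ref{lemma:eigenvalues-sym}. The structural part of the argument---recasting the two bounds as exponentials and invoking Definition \ref{def:exp-conv}---is essentially bookkeeping, and almost all of the real work goes into this quantitative rate-comparison step.
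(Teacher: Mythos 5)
Your reduction is exactly the paper's: both arguments read off the discrete-time rate $\log(\mu_2)=\log(1-\varepsilon\lambda_2)$ and the continuous-time rate $-\lambda_2$ from Lemma \ref{lemma:dt-conv-rate} and its continuous-time counterpart, and reduce the claim to the scalar inequality $e^{-\lambda_2}<1-\varepsilon\lambda_2$. The divergence is in how that inequality is discharged, and this is where your proposal has a genuine gap. The second-order Taylor truncation gives $e^{-\lambda_2}\le 1-\lambda_2+\lambda_2^2/2$, so your sufficient condition becomes $\varepsilon<1-\lambda_2/2$, which is vacuous whenever $\lambda_2\ge 2$ --- and that happens for perfectly ordinary strongly connected topologies (the complete graph on $s$ vertices has $\lambda_2(L)=s$), so no amount of pairing $\lambda_2(L^{Sym})$ with $\Delta$ via Lemma \ref{lemma:eigenvalues-sym} will rescue that route. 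Moreover, the obstacle you flag at the end is not an artifact of the truncation: solving $e^{-\lambda_2}<1-\varepsilon\lambda_2$ for $\varepsilon$ exactly gives $\varepsilon<\frac{1-e^{-\lambda_2}}{\lambda_2}$, and since $e^{-x}>1-x$ for $x>0$ this threshold is strictly below $1$ and tends to $0$ as $\lambda_2$ grows. For example, for $K_{10}$ one has $\lambda_2=10$ and the threshold is about $0.0999$, while $1/\Delta=1/9\approx 0.111$, so $\varepsilon=0.105$ is admissible under $0<\varepsilon<1/\Delta$ yet violates the inequality. The comparison therefore genuinely requires an additional smallness hypothesis on $\varepsilon$; you were right to be suspicious, and the step is not closable as stated.

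For comparison, the paper disposes of the same inequality by rearranging it to $\varepsilon>\frac{e^{-\lambda_2}-1}{\lambda_2}$ and observing that the right-hand side is negative, so that any $\varepsilon>0$ suffices. But moving $\varepsilon\lambda_2$ across and dividing by $\lambda_2>0$ actually produces the upper bound $\varepsilon<\frac{1-e^{-\lambda_2}}{\lambda_2}$, not a lower bound; the direction of the inequality has been flipped. So the difficulty you ran into is real and is present in the paper's own argument: the correct conclusion is that the continuous-time disagreement dynamics converges faster \emph{provided} $\varepsilon<\frac{1-e^{-\lambda_2}}{\lambda_2}$, a condition that should be added to the hypotheses (and which is not implied by $0<\varepsilon<1/\Delta$ alone). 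If you make that restriction explicit, your first paragraph plus the exact rearrangement gives a clean and correct proof without any Taylor expansion.
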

\begin{proof}
    According to Lemma \ref{lemma:dt-conv-rate}, the disagreement vector in the discrete-time case exponentially converges with a rate of $\log(\mu_2)$, whereas in the continuous-time case, it converges exponentially with a rate of $-\lambda_2$. Given that $\tilde{P} = I - \varepsilon \tilde{L}$, we have that $\mu_2 = 1- \varepsilon \lambda_2$, which means that  $\log(\mu_2) = \log(1- \varepsilon \lambda_2)$. Note that the following equivalences hold:
    $$-\lambda_2 < \log(1- \varepsilon \lambda_2) \iff e^{-\lambda_2} < 1- \varepsilon \lambda_2 \iff \varepsilon > \frac{e^{-\lambda_2} - 1}{\lambda_2}$$
    Since $e^{-\lambda_2} < 1$, then $\frac{e^{-\lambda_2} - 1}{\lambda_2} < 0$. Given that $\varepsilon > 0$, we have that the last inequality holds. This concludes the lemma.
\end{proof}

\textbf{We shall now consider Equation \ref{eq:global-orient-zero-one}}. Based on this equation, we note that the following quantity is invariant:
$$AVG_2(t) = \frac{\sum_i |N_i(t)|\theta_i(t)}{\sum_i |N_i(t)|}$$

Based on Lemma \ref{lemma:global-orient-zero-one-matrix}, we would like to characterize the group decision value of a connected component in the influencing neighbors graph. The following lemma give us this desired characterization:
\begin{lemma}
    \label{theorem:group-decision-influencing-zero-one}
    Let $C$ be some connected component of $\mathcal{\tilde{G}}$. We denote by $|C|$ the number of agents which correspond to $C$ and assume that $a_{i_1}, \dots ,a_{i_{|C|}}$ are those agents. Then, the group decision value is: $$\alpha = \sum_{j = 1} ^ {|C|} w_j \theta_{i_j}(0);\quad w_j = \frac{|N_{i_j}(t)|}{\sum_\ell |N_{i_\ell}(t)|}$$ 
\end{lemma}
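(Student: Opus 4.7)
The plan is to decompose the dynamics onto the connected component $C$, verify the invariance asserted just before the lemma, invoke the consensus result from the continuous-time fixed-topology analysis, and then solve for the common limit from the invariance.

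First I would apply an argument parallel to Lemma \ref{lemma:connected-component-global-orient-linear}: because the influencing neighbors graph is a fixed topology and $C$ is a connected component, no agent outside $C$ appears in any $N_{i_j}(t)$ for $j=1,\dots,|C|$, so the heading system restricted to the agents of $C$ is self-contained, governed by the induced Laplacian $\tilde L^C$. In particular, the equation from Lemma \ref{lemma:global-orient-zero-one-matrix} restricts to $\dot{\tilde\theta}^C(t) = (({\tilde D}^C)^{-1}\tilde A^C - I)\tilde\theta^C(t)$, which is still of the form $-\tilde L'\tilde\theta^C$ for a (weighted) graph Laplacian $\tilde L'$ on $C$ whose underlying digraph has the same edge set as $\tilde{\mathcal G}(C)$.

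Next I would justify the invariance of $AVG_2$ restricted to $C$. Because the topology is fixed, $|N_{i_j}(t)| = |N_{i_j}|$ is a constant. Using Equation \ref{eq:global-orient-zero-one}, I would compute
\begin{equation*}
\frac{d}{dt}\sum_{j=1}^{|C|} |N_{i_j}|\,\theta_{i_j}(t)
= \sum_{j=1}^{|C|} \sum_{a_\ell \in N_{i_j}} (\theta_\ell(t) - \theta_{i_j}(t))
= \sum_{j=1}^{|C|}\sum_{a_\ell\in N_{i_j}}\theta_\ell(t)\;-\;\sum_{j=1}^{|C|} |N_{i_j}|\,\theta_{i_j}(t).
\end{equation*}
Swapping the order of summation in the first term and using the symmetry $a_\ell \in N_{i_j} \iff a_{i_j}\in N_\ell$ (inherent in the visibility-radius construction, and the reason $\tilde{\mathcal G}$ is balanced by Lemma \ref{lemma:neighbors-graphs-balanced}), the multiplicity of each $\theta_{i_j}(t)$ in the first sum is exactly $|N_{i_j}|$, so the two sums cancel. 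Hence $\sum_j |N_{i_j}|\theta_{i_j}(t) = \sum_j |N_{i_j}|\theta_{i_j}(0)$ for all $t\ge 0$.

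Then I would apply the continuous-time consensus result (the fixed-topology CT lemma cited just before the statement, based on Theorem 8 of \cite{olfati2004consensus}): the induced subgraph on $C$ is strongly connected (as every connected component of a balanced digraph with symmetric edge set is), so the disagreement vector $\delta^C(t)$ vanishes exponentially and there exists $\alpha\in\mathbb{R}$ with $\lim_{t\to\infty}\theta_{i_j}(t) = \alpha$ for every $j=1,\dots,|C|$. Passing to the limit in the invariance identity,
\begin{equation*}
\alpha \sum_{j=1}^{|C|} |N_{i_j}| \;=\; \lim_{t\to\infty}\sum_{j=1}^{|C|} |N_{i_j}|\,\theta_{i_j}(t) \;=\; \sum_{j=1}^{|C|} |N_{i_j}|\,\theta_{i_j}(0),
\end{equation*}
and dividing by $\sum_\ell |N_{i_\ell}|$ yields exactly $\alpha = \sum_{j=1}^{|C|} w_j\,\theta_{i_j}(0)$ with the stated weights.

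The one step I expect to require care is the invariance computation, since the paper states it as a fact without proof and its correctness hinges on the symmetry of the visibility relation (equivalently, on $\tilde{\mathcal G}$ being balanced with identical in- and out-neighborhoods, not just equal degrees). Everything else is a routine combination of the restriction-to-a-component argument from Lemma \ref{lemma:connected-component-global-orient-linear} and the exponential-convergence statement already recorded in this subsection.
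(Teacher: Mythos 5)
Your proposal is correct, but it takes a genuinely different route from the paper. The paper's own proof is a single sentence: it observes that $C$ induces a strongly connected, balanced digraph and then cites \emph{Theorem 1} of \cite{olfati2007consensus}, which characterizes the group decision value as $w^T\theta(0)$ for $w$ the normalized left null vector of the Laplacian. You instead derive the result from first principles: you restrict the dynamics to $C$, prove the conservation of $\sum_j |N_{i_j}|\,\theta_{i_j}(t)$ by an explicit summation swap using the symmetry $a_\ell \in N_{i_j} \iff a_{i_j} \in N_\ell$, invoke exponential consensus on the strongly connected component, and solve for $\alpha$ from the conserved quantity. Your approach buys an actual explanation of \emph{why} the weights are $|N_{i_j}|/\sum_\ell |N_{i_\ell}|$ --- your invariance computation is precisely the verification that the degree vector is a left null vector of the normalized Laplacian $I - \tilde{D}^{-1}\tilde{A}$ --- and it also supplies the proof of the invariance of $AVG_2$ that the paper asserts without argument just before the lemma. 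One point of precision: the convergence step should not lean on the lemma derived from \emph{Theorem 8} of \cite{olfati2004consensus}, since that statement concerns the unnormalized dynamics $\dot{\delta} = -\tilde{L}\delta$ of Equation \ref{eq:dis-vec-ct} and its rate estimate uses balancedness, whereas the weighted digraph underlying $I - \tilde{D}^{-1}\tilde{A}$ is generally not balanced (its row sums of off-diagonal weights are $1 - 1/|N_i|$, which need not equal the column sums). All you need, however, is that a strongly connected digraph Laplacian yields consensus with $\lim_{t\to\infty} e^{-\tilde{L}'t} = 1w^T$, which is exactly \emph{Theorem 3} of \cite{olfati2004consensus} as invoked elsewhere in the paper, so this is a matter of citing the right result rather than a gap in the argument.
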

\begin{proof}
     The connected component $C$ of $\mathcal{\tilde{G}}$ induces an strongly connected, balanced digraph. Therefore, according to \textit{Theorem 1} in \cite{olfati2007consensus}, the statement in the lemma follows.
\end{proof}
Given this lemma, the following theorem gives us the orientation the influencing agents should adopt in order to make the flocking agents in some connected component converge to their desired orientation.
\begin{lemma}
\label{lemma:orientation-zero-one-influence}
    Consider a \textbf{fixed topology} network $\mathcal{\tilde{G}}$ of agents given by Equation \ref{eq:global-orient-zero-one}. Let $C$ be some connected component of $\mathcal{\tilde{G}}$. We denote by $|C|$ the number of agents which correspond to $C$ and assume that $a_{i_1}, \dots ,a_{i_{|C|}}$ are those agents. Let $\alpha_i$ be the orientation the agents associated with this connected component should converge to. Let $\tilde{\theta}(0)$ be fixed.

    Assuming that $m(C)$ influencing agents are inserted into this connected component, \textbf{all} the influencing agents should \textbf{constantly} adopt the orientation $\alpha_i$ in order for the flocking agents to converge to $\alpha_i$.
\end{lemma}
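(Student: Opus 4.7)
The proof plan mirrors the approach of Theorems \ref{theorem:orientation-perron-influence-same-orient}, \ref{theorem:orientation-normalized-perron-influence}, and \ref{theorem:orientation-linear-influence-same-orient}, adapted to the binary-weighted continuous-time rule of Equation \ref{eq:global-orient-zero-one}. First I would restrict the global heading system $\dot{\tilde{\theta}}(t) = (\tilde{D}^{-1}\tilde{A} - I)\tilde{\theta}(t)$ from Lemma \ref{lemma:global-orient-zero-one-matrix} to the connected component $C$, exactly as done in Lemma \ref{lemma:connected-component-global-orient-linear}: dropping the rows and columns not indexed by $a_{i_1},\dots,a_{i_{|C|}}$ yields a closed autonomous ODE $\dot{\tilde{\theta}}^C(t) = \bigl((\tilde{D}^C)^{-1}\tilde{A}^C - I\bigr)\tilde{\theta}^C(t)$ on the induced subgraph, whose solution is $\tilde{\theta}^C(t) = \exp\!\bigl(t((\tilde{D}^C)^{-1}\tilde{A}^C - I)\bigr)\,\tilde{\theta}^C(0)$. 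Since Lemma \ref{theorem:group-decision-influencing-zero-one} guarantees that the agents in $C$ reach consensus, I may denote the common limit by $\alpha := \lim_{t\to\infty}\tilde{\theta}^C_j(t)$, and I will use the crucial consequence that $\dot{\tilde{\theta}}^C_j(t) \to 0$ for every $j$ in $C$.

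Next I would pick a flocking agent $a_j\in C$ that has at least one influencing neighbour; such a $j$ must exist because the $m(C)\geq 1$ influencing agents dropped into $C$ are, by the very definition of a connected component of $\tilde{\mathcal{G}}$, adjacent to some flocking vertex of $C$. Writing Equation \ref{eq:global-orient-zero-one} for this $a_j$ and splitting the neighbourhood sum into $N_j(0)\cap A^F$ and $N_j(0)\cap A^D$ gives
\[
\dot{\tilde{\theta}}^C_j(t) = \frac{1}{|N_j(0)|}\sum_{\ell_1\in N_j(0)\cap A^F}\bigl(\tilde{\theta}^C_{\ell_1}(t)-\tilde{\theta}^C_j(t)\bigr) + \frac{1}{|N_j(0)|}\sum_{\ell_2\in N_j(0)\cap A^D}\bigl(\tilde{\theta}^C_{\ell_2}(t)-\tilde{\theta}^C_j(t)\bigr).
\]
Using the hypothesis that every influencing agent constantly adopts $\alpha_i$, the second sum collapses to $\tfrac{m_j(0)}{|N_j(0)|}\bigl(\alpha_i - \tilde{\theta}^C_j(t)\bigr)$.

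Finally I would pass to the limit $t\to\infty$: the left-hand side tends to $0$, each flocking summand becomes $\alpha-\alpha=0$, and the influencing contribution approaches $\tfrac{m_j(0)}{|N_j(0)|}(\alpha_i-\alpha)$. Cancelling the nonzero coefficient $m_j(0)/|N_j(0)|$ (nonzero because $m_j(0)\geq 1$ by the choice of $j$) yields $\alpha = \alpha_i$, which is exactly the desired conclusion.

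The main obstacle, as in the earlier theorems of this subsection, is the first step: rigorously establishing that the restricted ODE on $C$ converges, so that the limit $\alpha$ is well-defined and $\dot{\tilde{\theta}}^C_j(t)\to 0$ can be invoked. The second and third steps are pure algebraic substitution in the continuous-time rule. Convergence itself rests on the fact that the induced graph $\tilde{\mathcal{G}}(0)(C)$ is a strongly connected balanced digraph (Lemma \ref{lemma:neighbors-graphs-balanced}) augmented by influencing agents acting as constant sources, so Lemma \ref{theorem:group-decision-influencing-zero-one} (equivalently \textit{Theorem 1} of \cite{olfati2007consensus}) applies and pins the dynamics toward the common value $\alpha_i$.
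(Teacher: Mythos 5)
Your proposal is correct and follows essentially the same approach the paper intends: the paper's own proof is just the one-line remark that the argument is ``similar to the proof of Theorem~\ref{theorem:orientation-normalized-perron-influence},'' and your elaboration fills in exactly the shared skeleton used throughout Section~\ref{sec:fixed-topology} (restrict to the component, invoke the convergence result so that $\alpha$ is well-defined and $\dot{\tilde{\theta}}^C_j(t)\to 0$, split the neighbourhood sum, and let the influencing term pin $\alpha=\alpha_i$). The only cosmetic difference is that you model the argument on the continuous-time Laplacian case (Theorem~\ref{theorem:orientation-linear-influence-same-orient}) rather than the discrete-time one the paper cites, which is arguably the more natural template since Equation~\ref{eq:global-orient-zero-one} is a continuous-time rule.
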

\begin{proof}
    The proof is similar to the proof of Theorem \ref{theorem:orientation-normalized-perron-influence}.
\end{proof}

\subsection{Switching Topology}
\label{sec:switching-topology}
In this section, we deal with the influence an influencing agents has upon a connected component in a \textbf{switching topology}, after it is inserted into the neighborhood of a flocking agent which corresponds to some vertex in this connected component. The notion of a switching topology is formally defined in Definition \ref{def:switching-topolgy}. Clearly, for all $t \geq 0$, we denote the adjacency matrix, the degree matrix, the Laplacian matrix and the Perron matrix of the influencing neighbors graph $\mathcal{\tilde{G}}_{\sigma(t)}$ by $\tilde{A}_{\sigma(t)},\tilde{D}_{\sigma(t)},\tilde{L}_{\sigma(t)}, \tilde{P}_{\sigma(t)}$ (respectively).

As we noticed in Subsection \ref{sec:fixed-topology}, this assumption affects the influence an influencing agent has upon some connected component, after it is inserted into the neighborhood of a flocking agent which corresponds to some vertex in this connected component. We will be considering each definition for the global orientation of the flocking agents (Equations \ref{eq:orientation-perron} and \ref{eq:orientation-calcdiff}).

Our goal is showing that for a large class of switching signals and for any initial set of agent headings, the flocking agents converge to a disagreement. Let us consider two extreme cases:
\begin{enumerate}
	\item \underline{Convergence can never occur -} When the $i$th agent starts so far away from the rest, it never acquires any neighbors. Mathematically, this means not only that $\mathcal{\tilde{G}}_{\sigma(t)}$ is never connected for any time step $t \geq 0$, but also that vertex $i$ remains an isolated vertex for all $t$. We could encounter this situation when the visibility radius $R$ is very small.
	\item \ul{All the flocking agents remain neighbors of all the other flocking agents in their respective connected component for all time -} In this case, $\sigma$ would remain fixed along such a trajectory at that value in $q \in \mathcal{P}$ for which $\mathcal{\tilde{G}}_q$ has constant set of connected components. This is a similar situation to the fixed topology case.
\end{enumerate}

The situation of perhaps the greatest interest is between these two extremes, when $\mathcal{\tilde{G}}_{\sigma(t)}$ does not necessarily have a constant set of connected components for any $t \geq 0$, but when no strictly proper subset of the graph's vertices is isolated from the rest for all time. Establishing disagreement in this case is challenging because $\sigma$ changes with time and the heading systems according to Equations \ref{eq:orientation-perron} and \ref{eq:orientation-calcdiff} are not time-invariant. It is this case which we intend to study.

To do this, we shall introduce several concepts. This concepts deal with the connectivity between the agents across time and they are given in the following definitions.
\begin{definition}
\label{def:union-graphs}
(\textit{Union of a Collection of Simple Graphs}) The union of a collection of simple graphs $\{\mathcal{G}_{p_1},...,\mathcal{G}_{p_\gamma}\}$, each with vertex set $\mathcal{V}$, is the simple graph $\cup_{\lambda = 1}^{\gamma}\mathcal{G}_{p_\lambda}$ with vertex set $\mathcal{V}$ and edge set equaling $\cup_{\lambda = 1}^{\gamma}\mathcal{E}_{p_\lambda}$.
\end{definition}
\begin{definition}
\label{def:jointly-connected}
	(\textit{Jointly Connected}) We say that a collection $\{\mathcal{G}_{p_1},...,\mathcal{G}_{p_\gamma}\}$ is \textit{jointly connected} if the union of its members is a connected graph.
\end{definition}
\begin{definition}
\label{def:linked-together}
	(\textit{Linked Together Across a Time Interval}) We say that $s$ flocking agents are \textit{linked together across a time interval} $[t,\tau]$ if the collection $\{\mathcal{G}_{\sigma(t)},...,\mathcal{G}_{\sigma(\tau)}\}$ encountered along the interval, is jointly connected. 
\end{definition}

\subsubsection{Discrete-Time Case}
\label{sec:dt-switching}
\textbf{We will first consider Equation \ref{eq:orientation-perron}}. In this case, we consider the following discrete-time heading system for the \textbf{influencing} neighbors graph (which stems from Equation \ref{eq:orientation-perron}):
\begin{equation}
\label{eq:perron-switching}
    \tilde{\theta}(t+1) = \tilde{P}_{\sigma(t)} \tilde{\theta}(t) \quad ; \quad t \geq 0
\end{equation}

Similarly to Lemma \ref{lemma:connected-component-normalized-laplacian}, Equation \ref{eq:perron-switching} induces a heading system for each connected component of the flocking neighbors graph at time step $0$. We shall phrase this formally in the following lemma:
\begin{lemma}
    \label{lemma:connected-component-heading}
    Let $C$ be some connected component of $\mathcal{\tilde{G}}(0)$. We denote by $\tilde{\theta} ^ C (t)$ the orientations vector which corresponds to the flocking agents in $C$. Then, $C$ induces a topology $\tilde{\mathcal{G}}(0)(C)$ (according to Definition \ref{def:induced-subgraph}), with a Perron matrix $\tilde{P}_{\sigma(t)} ^ C$ for which $\tilde{\theta} ^ C (t+1) = \tilde{P}_{\sigma(t)} ^ C \tilde{\theta} ^ C (t)$ according to Equation \ref{eq:perron-switching}.
\end{lemma}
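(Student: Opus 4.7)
The plan is to mirror the argument of Lemma \ref{lemma:connected-component-normalized-laplacian}, but carried out pointwise in $t$ so that the time dependence of the Perron matrix is carried through. The claim is essentially a bookkeeping statement: it says that, when one extracts the rows and columns of $\tilde P_{\sigma(t)}$ corresponding to the vertices of $C$, the resulting principal submatrix still governs the update of the restricted orientation vector. So I would first fix the $|C|$ agents of $C$, label them $a_{i_1},\ldots,a_{i_{|C|}}$, and set $\tilde{\theta}^C(t) := [\theta_{i_1}(t),\ldots,\theta_{i_{|C|}}(t)]^T$ as in the previous lemma.

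Next I would define $\tilde P_{\sigma(t)}^C \in \mathbb{R}^{|C|\times|C|}$ to be the principal submatrix of $\tilde P_{\sigma(t)}$ obtained by deleting every row and column whose index does not belong to $\{i_1,\ldots,i_{|C|}\}$. This is the Perron matrix associated with the induced subgraph $\tilde{\mathcal{G}}(0)(C)$ from Definition \ref{def:induced-subgraph}, applied to the restriction of $\tilde{\mathcal{G}}_{\sigma(t)}$ to the vertices in $C$. The identity $\tilde{\theta}^C(t+1) = \tilde P_{\sigma(t)}^C\,\tilde{\theta}^C(t)$ then follows by reading off Equation \ref{eq:perron-switching} coordinate by coordinate: for each $i_r$, the $i_r$-th row of $\tilde P_{\sigma(t)}$ contributes nonzero entries only at positions corresponding to $N_{i_r}(t)$, and those positions all lie inside $\{i_1,\ldots,i_{|C|}\}$, so deleting the irrelevant rows/columns leaves the update rule for $\theta_{i_r}(t+1)$ unchanged.

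The only nontrivial point, and the one I expect to be the main obstacle, is justifying that $N_{i_r}(t)\subseteq C$ for every $t\geq 0$, since without this the extra columns deleted from $\tilde P_{\sigma(t)}$ could contribute to the evolution of $\theta_{i_r}$. In the fixed topology case of Lemma \ref{lemma:connected-component-normalized-laplacian} this was automatic because neighborhoods never changed; here I would invoke the framing set up just before the lemma, where the switching is taken over graphs whose connected components remain those of $\tilde{\mathcal{G}}(0)$ and interaction only occurs within each component. Under this (standing) assumption, no edges of $\tilde{\mathcal{G}}_{\sigma(t)}$ ever cross from $C$ to its complement, so the restriction of $\tilde P_{\sigma(t)}$ is block diagonal with $C$-block equal to $\tilde P_{\sigma(t)}^C$, and the identity $\tilde{\theta}^C(t+1)=\tilde P_{\sigma(t)}^C\,\tilde{\theta}^C(t)$ drops out of Equation \ref{eq:perron-switching} by projection onto the coordinates of $C$. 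This yields the lemma by the same reasoning used in Lemma \ref{lemma:connected-component-normalized-laplacian}, now repeated for each value of the switching signal.
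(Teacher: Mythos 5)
Your proposal takes the same route as the paper, which offers no separate argument for this lemma beyond ``similarly to Lemma~\ref{lemma:connected-component-normalized-laplacian}'': you extract the principal submatrix of $\tilde{P}_{\sigma(t)}$ on the indices of $C$ and read off the restricted update coordinatewise. That part is exactly the intended bookkeeping and is correct as far as it goes.

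The one point to be careful about is the justification you give for $N_{i_r}(t)\subseteq C$ for all $t$. You attribute to the paper a standing assumption that the switching ranges over graphs whose connected components remain those of $\tilde{\mathcal{G}}(0)$; the text immediately preceding the lemma says the opposite --- the case of interest is precisely the one in which $\tilde{\mathcal{G}}_{\sigma(t)}$ ``does not necessarily have a constant set of connected components.'' So the closure property you need is not supplied by the paper's framing; it is an additional hypothesis without which the restricted system $\tilde{\theta}^C(t+1)=\tilde{P}_{\sigma(t)}^C\,\tilde{\theta}^C(t)$ is simply not the true dynamics of the agents in $C$: an agent of $C$ that acquires a neighbor outside $C$ at some time $t$ picks up a nonzero off-block entry in $\tilde{P}_{\sigma(t)}$, and the principal submatrix then fails to be row stochastic and omits a genuine contribution to $\theta_{i_r}(t+1)$. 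You have correctly located the obstacle; the honest resolution is to state explicitly, as a hypothesis of the lemma, that $C$ remains a union of connected components of $\tilde{\mathcal{G}}_{\sigma(t)}$ for every $t$ (equivalently, no edge of $\tilde{\mathcal{E}}_{\sigma(t)}$ joins $C$ to its complement), rather than to claim it follows from the surrounding discussion. With that hypothesis added, your block-diagonal argument closes the proof.
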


In essence, we would like to consider the conditions under which convergence is guaranteed for each connected component of the influencing neighbors graph at time step $0$. Based on \textit{Theorem 2} in \cite{jadbabaie2003coordination}, the following theorem gives us these conditions.
\begin{theorem}
    Let $C$ be some connected component of $\mathcal{\tilde{G}}(0)$. Let $\tilde{\theta}(0)$ be fixed and let $\sigma(t):\mathbb{N} \rightarrow \mathcal{Q}_n$ be a switching signal for which there exists an infinite sequence of contiguous, nonempty, bounded, time-intervals $[t_i,t_{i+1}]$, $i \geq 0$, starting at $t_0$, with the property that across each such interval, the flocking agents which correspond to $C$ are linked together. Then, alignment is asymptotically reached for the flocking agents which correspond to the connected component $C$. That is: $$\lim_{t \rightarrow \infty} \tilde{\theta} ^ C (t) = \alpha 1$$
\end{theorem}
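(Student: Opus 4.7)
The plan is to reduce this to a classical statement about inhomogeneous products of stochastic matrices by exploiting the machinery already built up. By Lemma \ref{lemma:connected-component-heading}, the dynamics restricted to $C$ obeys $\tilde{\theta}^C(t+1)=\tilde{P}^C_{\sigma(t)}\tilde{\theta}^C(t)$, so iterating gives $\tilde{\theta}^C(t) = \tilde{P}^C_{\sigma(t-1)}\cdots\tilde{P}^C_{\sigma(0)}\tilde{\theta}^C(0)$. Proving alignment $\tilde{\theta}^C(t)\to \alpha 1$ therefore amounts to showing that this left-hand product converges to a rank-one matrix of the form $1w^T$ with $w^T 1 = 1$, from which $\alpha = w^T\tilde{\theta}^C(0)$.

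The first step is to collect the pointwise properties of each factor $\tilde{P}^C_{\sigma(t)} = I - \varepsilon \tilde{L}^C_{\sigma(t)}$. Under the standing assumption $0<\varepsilon<1/\Delta$, each such matrix is row stochastic with strictly positive diagonal (so it is aperiodic on its support), and each nonzero off-diagonal entry is exactly $\varepsilon$, while every diagonal entry is at least $1-\varepsilon(|C|-1)>0$. These uniform lower bounds are independent of $t$, which will be crucial later for getting a contraction that does not degrade across intervals.

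The heart of the proof is to exploit the \emph{linked together} hypothesis. For each $i\geq 0$, form the block product $Q_i = \tilde{P}^C_{\sigma(t_{i+1}-1)}\cdots \tilde{P}^C_{\sigma(t_i)}$. Because the collection $\{\tilde{\mathcal{G}}_{\sigma(t_i)}(C),\ldots,\tilde{\mathcal{G}}_{\sigma(t_{i+1}-1)}(C)\}$ is jointly connected and every factor has a strictly positive diagonal, a standard argument shows that $Q_i$ is a \emph{scrambling} matrix (any two rows share a column of strictly positive entries). Equivalently, its coefficient of ergodicity satisfies $\tau(Q_i)\leq 1-\beta$ for a constant $\beta>0$ that depends only on $\varepsilon$, $|C|$, and the uniform interval-length bound. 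To finish, I would track the Lyapunov quantity $V(t) = \max_j\tilde{\theta}^C_j(t) - \min_j\tilde{\theta}^C_j(t)$, which is non-increasing under any row-stochastic step and satisfies $V(t_{i+1})\leq (1-\beta)V(t_i)$ by the scrambling property. Thus $V(t)\to 0$ geometrically along the subsequence $\{t_i\}$, hence along all $t$, which forces $\tilde{\theta}^C(t)$ into the consensus subspace $\mathrm{span}(1)$.

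The main obstacle I expect is the uniform lower bound $\beta>0$ on the ergodicity coefficient. Without it, one only gets $V(t_{i+1})<V(t_i)$ with no guarantee that the ratio stays bounded away from 1, and the product could fail to converge. Securing $\beta$ requires combining three ingredients: the uniform bound on interval length (from the hypothesis), the uniform lower bound $\varepsilon$ on positive entries of each factor, and a combinatorial estimate on how many factor multiplications can be needed before connectivity of the union is reflected as a positive entry in the product. Once this uniformity is verified, the Lyapunov contraction concludes the argument, and the limit value $\alpha$ is identified as $w^T \tilde{\theta}^C(0)$ for the common left eigenvector $w$ of the limiting rank-one product.
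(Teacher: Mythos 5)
The paper never actually proves this theorem: it imports it as \emph{Theorem 2} of Jadbabaie et al.\ \cite{jadbabaie2003coordination}, and the surrounding text only records that the question reduces to convergence of the infinite left product $\prod_{\tau}\tilde{P}^C_{\sigma(\tau)}$ (Equation \ref{eq:solution}) and that the cited proof rests on Wolfowitz's lemma on products of primitive stochastic matrices. Your proposal is therefore a genuinely different and more self-contained route: a direct contraction argument on the spread $V(t)=\max_j\tilde{\theta}^C_j(t)-\min_j\tilde{\theta}^C_j(t)$ via scrambling blocks and the coefficient of ergodicity. If completed, that route would buy an explicit geometric rate, which the Wolfowitz-based citation does not give.

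However, there is a genuine gap at your central step. Joint connectivity of $\{\tilde{\mathcal{G}}_{\sigma(t_i)}(C),\dots,\tilde{\mathcal{G}}_{\sigma(t_{i+1})}(C)\}$ together with positive diagonals does \emph{not} make the single block $Q_i$ scrambling. Positive diagonals only guarantee that the support graph of $Q_i$ \emph{contains} the union graph plus self-loops, and in the worst case it equals it. Concretely, take $|C|\geq 4$ agents arranged in a path $1-2-\cdots-|C|$ at every step of the interval (the interval may even have length one, since the hypothesis only requires the union over the interval to be connected): rows $1$ and $|C|$ of $Q_i$ are then supported on $\{1,2\}$ and $\{|C|-1,|C|\}$ respectively, share no positive column, and the ergodicity coefficient is $\tau(Q_i)=1$. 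So the claimed one-interval contraction $V(t_{i+1})\leq(1-\beta)V(t_i)$ fails as stated. The standard repair is to group $|C|-1$ consecutive jointly connected intervals into a super-interval: the composition of $|C|-1$ connected graphs with self-loops on $|C|$ vertices is complete, so the product over a super-interval is a strictly positive (hence scrambling) matrix whose entries are bounded below by a power of $\min\{\varepsilon,\,1-\varepsilon\Delta\}$ determined by the uniform interval-length bound; your Lyapunov contraction then runs along the super-interval subsequence and the rest of the argument is fine. As written, though, the proof is incomplete at exactly the point you flagged as the main obstacle, and the uniformity you worry about is obtained only after this regrouping.
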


Given some connected component $C$ in $\mathcal{G}(0)$, the solution can be explicitly expressed as follows:
\begin{equation}
    \label{eq:solution}
    \tilde{\theta} ^ C (t) = \Big(\prod_{\tau=0}^t \tilde{P}_{\sigma(\tau)} ^ C\Big) \tilde{\theta} ^ C (0)
\end{equation}

Therefore, the convergence of the heading system of $C$ as it is given in Lemma \ref{lemma:connected-component-heading} depends on whether the infinite product of non-negative stochastic primitive matrices $\tilde{P}_{\sigma(t)} ^ C, \dots, \tilde{P}_{\sigma(1)} ^ C, \tilde{P}_{\sigma(0)} ^ C$ has a limit. This problem has been studied by several mathematicians including Wolfowitz \cite{wolfowitz1963products}. The proof in \cite{jadbabaie2003coordination} relies on Wolfowitz's lemma:
\begin{lemma}
\label{lemma:wolfowitz}
    (\textbf{Wolfowitz, 1963}) Let $\mathcal{M} = \{M_1, \dots , M_g\}$ be a finite set of primitive stochastic matrices such that for any sequence of matrices $M_{q_1}, \dots , M_{q_h} \in \mathcal{M}$ with $h \geq 1$, the product $M_{q_1} \dots  M_{q_h}$ is a primitive matrix. Then, there exists a row vector $w$ such that: $$\lim_{h \rightarrow \infty}M_{q_1} \dots  M_{q_h} = 1w$$
\end{lemma}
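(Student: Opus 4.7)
The plan is to introduce a scalar \emph{coefficient of ergodicity} $\tau$ for row stochastic matrices and reduce the statement to a geometric decay of $\tau$ along the infinite product. Concretely, I would use Dobrushin's coefficient
\[
\tau(M) := 1 - \min_{i,j} \sum_k \min\bigl(M[i,k],\, M[j,k]\bigr) \in [0,1].
\]
Two structural properties are then verified by direct computation: (i) $\tau(M) = 0$ if and only if every pair of rows of $M$ is identical, that is, $M = 1 w$ for some row vector $w$; and (ii) the submultiplicative inequality $\tau(M_1 M_2) \le \tau(M_1)\,\tau(M_2)$ holds for any two row stochastic matrices.

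The crux of the proof is to upgrade the per-product primitivity hypothesis to a \emph{uniform} strict contraction. Using the classical Wielandt-type fact that an $n \times n$ primitive stochastic matrix becomes entrywise positive after at most $(n-1)^2+1$ of its own powers, combined with the finiteness of $\mathcal{M}$, I would extract an integer $N$ and a constant $\delta > 0$ such that every length-$N$ product $M_{q_1}\cdots M_{q_N}$ has, for every pair of rows, some common column of mass at least $\delta$. This translates directly to the uniform bound
\[
\tau\bigl(M_{q_1}\cdots M_{q_N}\bigr) \le 1 - \delta =: c < 1,
\]
independently of the choice of indices $q_1,\dots,q_N \in \{1,\dots,g\}$.

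Grouping the infinite product into consecutive blocks of length $N$ and iterating property (ii) then gives $\tau(M_{q_1}\cdots M_{q_{hN}}) \le c^{h} \to 0$ as $h \to \infty$, so by (i) every accumulation point of the partial products $P_h := M_{q_1}\cdots M_{q_h}$, which lie in the compact set of row stochastic matrices, has the form $1 v$ for some probability row vector $v$. To promote accumulation into genuine convergence, I would invoke the standard monotonicity argument: in the appropriate reading of the product, each entry $P_{h+1}[i,j]$ is a convex combination of $\{P_h[k,j]\}_k$, so $\min_i P_h[i,j]$ is non-decreasing and $\max_i P_h[i,j]$ is non-increasing in $h$. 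Both are monotone and bounded, hence convergent, and their gap is dominated by $\tau(P_h) \to 0$, forcing a common limit $w_j$. This yields $P_h \to 1 w$ with $w = (w_1,\dots,w_n)$ a probability row vector.

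The main obstacle is the uniform contraction step: individual primitivity only yields qualitative strict positivity for some possibly large and product-dependent power, so the finiteness of $\mathcal{M}$ must be leveraged carefully to bound, simultaneously over all index sequences of length $N$, the mixing quality of the resulting block product. Once this uniform contraction constant $c < 1$ is established, the remainder is a packaging of the standard contraction and compactness machinery for stochastic matrices.
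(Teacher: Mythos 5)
The paper does not actually prove this lemma: it is quoted as a classical external result of Wolfowitz (1963) (via Jadbabaie et al.), so there is no in-paper argument to compare against, and your proposal must be judged on its own terms. The architecture you choose --- Dobrushin's coefficient $\tau$, its submultiplicativity, the equivalence $\tau(M)=0$ if and only if $M=1w$, a uniform contraction constant over blocks of fixed length, and the min/max monotonicity argument upgrading weak ergodicity to convergence --- is the standard Hajnal--Seneta route and is sound in outline; it is essentially how modern treatments prove Wolfowitz-type theorems.

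There is, however, one genuine gap, located exactly at the step you flag as the crux. Wielandt's bound controls powers of a \emph{single} primitive matrix ($A^{(n-1)^2+1}>0$); it says nothing about heterogeneous words $M_{q_1}\cdots M_{q_N}$, and the hypothesis that every word is primitive does not let you deduce from Wielandt that all words of some common length $N$ are scrambling --- applying Wielandt to a length-$N$ word $W$ only controls the particular longer words $W^{(n-1)^2+1}$, an exponentially sparse subset of all words of that length. The argument that actually works (and is the real content of Wolfowitz's paper) is combinatorial: there are at most $2^{n^2}$ zero patterns, so among the prefixes of any word of length exceeding $2^{n^2}$ two share a pattern, say $A$ and $AB$; then $A$ and $AB^k$ share a pattern for all $k$, and since $B$ is primitive, $AB^k\to 1v$, which is scrambling, hence $A$ is scrambling, hence so is every word extending it. Only after this do you get that every word of length $N:=2^{n^2}+1$ is scrambling, at which point your appeal to the finiteness of the alphabet (there are only $g^N$ such words) correctly yields the uniform $\delta>0$ and $c<1$, and the rest of your contraction machinery goes through. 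A smaller caveat: your convex-combination monotonicity is valid for backward products $P_{h+1}=M_{q_{h+1}}P_h$, not for the forward reading literally written in the statement; the statement inherits this ordering ambiguity from the source (Wolfowitz's own theorem only asserts that all sufficiently long words have nearly equal rows), and for forward products weak ergodicity alone does not force convergence, so the ``appropriate reading'' you invoke is in fact necessary, not optional.
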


The following lemma characterizes which orientation the influencing agents should constantly adopt for guaranteeing the convergence of the flocking agents to their desired orientation, while using Wolfowitz's lemma. 
\begin{theorem}
\label{theorem:orientation-perron-influence-switching-same-orient}
    Consider a \textbf{switching topology} network of agents given by Equation \ref{eq:perron-switching}, with a Perron matrix $\tilde{P}(t) = I - \tilde{\varepsilon} \tilde{L}(t)$ and maximum degree $\Delta(t) = \max_i(\sum_{j \neq i} \tilde{A}(t)[i,j])$. Let $C$ be some connected component of $\mathcal{\tilde{G}}(0)$. We denote by $|C|$ the number of agents which correspond to $C$ and assume that $a_{i_1}, \dots ,a_{i_{|C|}}$ are those agents. Let $\alpha_i$ be the orientation the flocking agents associated with this connected component should converge to. Let $\theta(0)$ be fixed and let $\sigma(t):\mathbb{N} \rightarrow \mathcal{Q}_n$ be a switching signal for which there exists an infinite sequence of contiguous, nonempty, bounded, time-intervals $[t_i,t_{i+1}]$, $i \geq 0$, starting at $t_0$, with the property that across each such interval, the agents which correspond to $C$ are linked together.

    Assuming that $m(C)$ influencing agents are inserted into this connected component, \textbf{all} the influencing agents should \textbf{constantly} adopt the orientation $\alpha_i$ in order for the flocking agents to converge to $\alpha_i$.
\end{theorem}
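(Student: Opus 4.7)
The plan is to mirror the fixed-topology argument from Theorem \ref{theorem:orientation-perron-influence-same-orient}, replacing the appeal to Perron--Frobenius with Wolfowitz's Lemma \ref{lemma:wolfowitz} so as to accommodate the time-varying Perron matrices $\tilde{P}_{\sigma(t)}^C$ induced on $C$.

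First I would apply Lemma \ref{lemma:connected-component-heading} to restrict the dynamics to the $|C|$ agents of $C$ and obtain the representation
$$\tilde{\theta}^C(t) = \Big(\prod_{\tau=0}^{t-1} \tilde{P}_{\sigma(\tau)}^C\Big)\, \tilde{\theta}^C(0),$$
exactly as in Equation \ref{eq:solution}. By Lemma \ref{lemma:normalized-perron-properties}, each factor $\tilde{P}_{\sigma(\tau)}^C$ is a row stochastic nonnegative matrix with strictly positive diagonal. I would then group the factors along the linking intervals $[t_i,t_{i+1}]$: by the joint-connectivity assumption together with the positive diagonals, each block product $M_i := \tilde{P}_{\sigma(t_{i+1}-1)}^C \cdots \tilde{P}_{\sigma(t_i)}^C$ is irreducible and aperiodic, hence primitive, and only finitely many such blocks occur because there are only finitely many graphs on $|C|$ vertices. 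This is precisely the construction used by Jadbabaie et al.\ (\textit{Theorem 2} of \cite{jadbabaie2003coordination}), already invoked in the theorem immediately preceding our statement. Wolfowitz's Lemma \ref{lemma:wolfowitz} then yields a row vector $w$ with
$$\lim_{t \to \infty}\prod_{\tau=0}^{t-1} \tilde{P}_{\sigma(\tau)}^C \;=\; \mathbf{1}\, w^T,$$
so $\lim_{t \to \infty}\tilde{\theta}^C(t) = \alpha\, \mathbf{1}$ for a single scalar $\alpha$.

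It remains to identify $\alpha$ with $\alpha_i$. The linking-together hypothesis, combined with $m(C) \geq 1$, forces some flocking agent $a_j \in C$ to acquire an influencing neighbor at arbitrarily large times $t_k \to \infty$. For each such $t_k$, expanding Equation \ref{eq:orientation-perron} and splitting the neighborhood into its flocking and influencing parts gives
$$\tilde{\theta}^C_j(t_k+1) - \tilde{\theta}^C_j(t_k) = \tilde{\varepsilon}\!\!\sum_{\ell \in N_j(t_k)\cap A^F}\!(\tilde{\theta}^C_\ell(t_k) - \tilde{\theta}^C_j(t_k)) + \tilde{\varepsilon}\, m_j(t_k)\,(\alpha_i - \tilde{\theta}^C_j(t_k)).$$
Passing to the limit along this subsequence, both the left-hand side and the first sum on the right vanish (since $\tilde{\theta}^C \to \alpha\, \mathbf{1}$), while $\tilde{\varepsilon}\, m_j(t_k) \geq \tilde{\varepsilon} > 0$, so the identity reduces to $\alpha_i - \alpha = 0$, i.e.\ $\alpha = \alpha_i$, as required.

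The main obstacle is the verification of Wolfowitz's primitivity hypothesis for the block products $M_i$: one must check that joint connectivity of the graphs over each interval $[t_i,t_{i+1}]$, together with the self-loops coming from the identity summand in the normalized Perron matrix, lifts to irreducibility plus aperiodicity of the matrix product. A secondary subtlety is that Definition \ref{def:linked-together} is phrased in terms of the flocking neighbors graph, whereas the dynamics above is carried by the influencing neighbors graph restricted to $C$; this is handled by observing that within $C$ the two graphs coincide on the flocking vertices, while pinning the influencing vertices to $\alpha_i$ only simplifies the final limit identification.
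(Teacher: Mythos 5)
Your proposal is correct and follows essentially the same route as the paper's proof: invoke Wolfowitz's Lemma \ref{lemma:wolfowitz} on the product of the component-restricted Perron matrices to obtain $\lim_{t\to\infty}\tilde{\theta}^C(t)=\alpha 1$, then identify $\alpha=\alpha_i$ by expanding the update of a flocking agent that has an influencing neighbor and passing to the limit. Two of your refinements are worth keeping, since they are actually tighter than what the paper writes. First, you justify Wolfowitz's primitivity hypothesis by grouping factors into block products over the linking intervals (the Jadbabaie-style construction); the paper applies the lemma without checking this. Second, your limit identification extracts a subsequence $t_k\to\infty$ along which $m_j(t_k)\geq 1$ and observes that $\tilde{\theta}^C_j(t_k+1)-\tilde{\theta}^C_j(t_k)\to 0$ and that the flocking sum $\sum_{\ell}(\tilde{\theta}^C_\ell(t_k)-\tilde{\theta}^C_j(t_k))\to 0$ term by term (finitely many bounded terms); the paper instead asserts that $\lim_{t\to\infty}m_j(t)>0$ exists, which is not guaranteed for a switching topology, and derives the vanishing of the flocking sum by ``subtracting'' two same-direction inequalities and sandwiching, a step that does not literally follow as written. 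Your version closes both of these gaps while reaching the same conclusion, so it is a sound proof of the statement.
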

\begin{proof}
    According to Wolfowitz's lemma, we get that $\lim_{t \rightarrow \infty} \tilde{\theta} ^ C (t) = 1(w\tilde{\theta} ^ C(0)) = \alpha 1$ with $\alpha = w\tilde{\theta} ^ C(0)$. The vector $w$ depends on the switching sequence and cannot be determined a priori. In other words, the group decision is an undetermined quantity in the convex hull of all initial states. Thus, a consensus is reached.

    Since across each interval $[t_i,t_{i+1}]$, $i \geq 0$, the agents which correspond to $C$ are linked together, there exists a flocking agent $a_j \in A^F$, for which $\lim_{t \rightarrow \infty} m_j(t) > 0$. In other words, after consensus is reached, the flocking agent $a_j$ has at least one influencing agent in its neighborhood.

    Considering Subsection \ref{sec:flocking-model}-model, the following holds according to the definition of the Perron matrix $\tilde{P}^C_{\sigma(t)}$:
    $$\tilde{\theta}^C_j(t+1) = \tilde{\theta}^C_j(t) + \varepsilon \sum_{\ell_1 \in N_j(t) \cap A^F} (\tilde{\theta}^C_{\ell_1}(t) - \tilde{\theta}^C_j(t))+$$
    $$ + \varepsilon \sum_{\ell_2 \in N_j(t) \cap A^D} (\tilde{\theta}^C_{\ell_2}(t) - \tilde{\theta}^C_j(t))$$
    Since all the influencing agents are \textbf{constantly} adopting the orientation $\alpha_i$, we have that:
    $$\tilde{\theta}^C_j(t+1) = \tilde{\theta}^C_j(t) + \varepsilon \sum_{\ell_1 \in N_j(t) \cap A^F} (\tilde{\theta}^C_{\ell_1}(t) - \tilde{\theta}^C_j(t)) +$$
    $$+ \varepsilon \sum_{\ell_2 \in N_j(t) \cap A^D} (\alpha_i - \tilde{\theta}^C_j(t)) \Rightarrow$$
    $$\tilde{\theta}^C_j(t+1) = \tilde{\theta}^C_j(t) + \varepsilon \Bigg[\sum_{\ell_1 \in N_j(t) \cap A^F} \tilde{\theta}^C_{\ell_1}(t) - k_j(t)\tilde{\theta}^C_j(t) \Bigg] +$$
    \begin{equation}
    \label{eq:sum-perron}
    + \varepsilon m_j(t) (\alpha_i - \tilde{\theta}^C_j(t))
    \end{equation}
    
    Since $N_j(t) \cap A^F \subseteq A^F$ and $k_j(t) \leq k$, we have that:
    $$\sum_{\ell_1 \in N_j(t) \cap A^F} \tilde{\theta}^C_{\ell_1}(t) \leq \sum_{\ell_1 \in A^F} \tilde{\theta}^C_{\ell_1}(t) \quad,\quad k_j(t)\tilde{\theta}^C_j(t) \leq k\tilde{\theta}^C_j(t)$$
    Taking the limit $t \rightarrow \infty$ in both sides of both inequalities results in the following:
    $$\lim_{t \rightarrow \infty} \sum_{\ell_1 \in N_j(t) \cap A^F} \tilde{\theta}^C_{\ell_1}(t) \leq k\alpha \quad,\quad \lim_{t \rightarrow \infty} k_j(t)\tilde{\theta}^C_j(t) \leq k\alpha$$
    Subtracting between the second inequality from the first one, we have that:
    $$0 \leq \lim_{t \rightarrow \infty} \Bigg[\sum_{\ell_1 \in N_j(t) \cap A^F} \tilde{\theta}^C_{\ell_1}(t) - k_j(t)\tilde{\theta}^C_j(t)\Bigg] \leq 0$$
    According to the "Sandwich Theorem", we have that:
    $$\lim_{t \rightarrow \infty} \Bigg[\sum_{\ell_1 \in N_j(t) \cap A^F} \tilde{\theta}^C_{\ell_1}(t) - k_j(t)\tilde{\theta}^C_j(t)\Bigg] = 0$$
    Thus, taking the limit $t \rightarrow \infty$ in both sides of Equation \ref{eq:sum-perron} results in the following:
    $$\alpha = \alpha + 0 + \varepsilon \lim_{t \rightarrow \infty} m_j(t) (\alpha_i - \alpha) \Rightarrow \boxed{\alpha = \alpha_i}$$
    where the last transition follows from the fact that $\lim_{t \rightarrow \infty} m_j(t) > 0$.
\end{proof}

\textbf{We shall now consider Equation \ref{eq:orientation-calcdiff}}. In this case, we will be considering the simplification of the update rule for the influencing neighbors graph, defined by Equation \ref{eq:influencing-heading-system}. 

The following lemma characterizes which orientation the influencing agents should constantly adopt for guaranteeing the convergence of the flocking agents to their desired orientation, while using Wolfowitz's lemma. 
\begin{theorem}
\label{theorem:orientation-normalized-perron-influence-switching-same-orient}
    Consider a \textbf{switching topology} network of agents given by Equation \ref{eq:fixed-simplified-calcdiff}, with a Perron matrix $\tilde{P}(t) = (I + \tilde{D}(t)) ^ {-1} (I + \tilde{A}(t))$. Let $C$ be some connected component of $\mathcal{\tilde{G}}(0)$. We denote by $|C|$ the number of agents which correspond to $C$ and assume that $a_{i_1}, \dots ,a_{i_{|C|}}$ are those agents. Let $\alpha_i$ be the orientation the flocking agents associated with this connected component should converge to. Let $\theta(0)$ be fixed. 

    Let $\sigma(t):\mathbb{N} \rightarrow \mathcal{Q}_n$ be a switching signal for which there exists an infinite sequence of contiguous, nonempty, bounded, time-intervals $[t_i,t_{i+1}]$, $i \geq 0$, starting at $t_0$, with the property that across each such interval, the agents which correspond to $C$ are linked together.

    Assuming that $m(C)$ influencing agents are inserted into this connected component, \textbf{all} the influencing agents should \textbf{constantly} adopt the orientation $\alpha_i$ in order for the flocking agents to converge to $\alpha_i$.
\end{theorem}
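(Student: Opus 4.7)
The plan is to adapt the two-stage argument from Theorem \ref{theorem:orientation-perron-influence-switching-same-orient}, replacing the Perron matrix $I - \varepsilon \tilde{L}(t)$ with the normalized Perron matrix $\tilde{P}(t) = (I+\tilde{D}(t))^{-1}(I+\tilde{A}(t))$. First I would establish that a consensus is reached for the agents in $C$, producing some limit value $\alpha$; then I would identify $\alpha$ with the influencing orientation $\alpha_i$ by inspecting the update equation at a flocking agent that keeps seeing an influencing neighbor.

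For the convergence stage, Lemma \ref{lemma:connected-component-normalized-laplacian} specializes the dynamics to
$$\tilde{\theta}^C(t+1) = \tilde{P}^C_{\sigma(t)} \tilde{\theta}^C(t), \quad \text{so} \quad \tilde{\theta}^C(t) = \Bigl(\prod_{\tau=0}^{t-1} \tilde{P}^C_{\sigma(\tau)}\Bigr) \tilde{\theta}^C(0).$$
By Lemma \ref{lemma:normalized-perron-properties}, each $\tilde{P}^C_{\sigma(\tau)}$ is a stochastic nonnegative matrix with strictly positive diagonal, so any finite product taken over a window on which the agents of $C$ are linked together is both aperiodic (because of the positive self-weights) and irreducible (because the union graph over that window is connected). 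Since the switching signal takes values in a finite set, Wolfowitz's Lemma \ref{lemma:wolfowitz} applies and yields a row vector $w$ with
$$\lim_{t \to \infty} \tilde{\theta}^C(t) = \mathbf{1}\,\bigl(w\, \tilde{\theta}^C(0)\bigr) = \alpha \mathbf{1}.$$

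For the identification stage, the linked-together condition on each $[t_i,t_{i+1}]$ implies that some flocking-to-influencing edge is present infinitely often somewhere inside $C$; fix a flocking agent $a_j$ serving as the flocking endpoint of such an edge, so that $m_j(t) \geq 1$ along an infinite subsequence of times. Equation \ref{eq:simplified-calcdiff} combined with the Face Desired Orientation behaviour gives
$$\tilde{\theta}^C_j(t+1) = \frac{1}{n_j(t)} \sum_{\ell_1 \in N_j(t) \cap A^F} \tilde{\theta}^C_{\ell_1}(t) + \frac{m_j(t)}{n_j(t)} \alpha_i.$$
Since every $\tilde{\theta}^C_\ell(t) \to \alpha$ by the first stage, I would sandwich the first sum between $k_j(t)\min_\ell \tilde{\theta}^C_\ell(t)$ and $k_j(t)\max_\ell \tilde{\theta}^C_\ell(t)$, both of which share the limit $k_j(t)\alpha$. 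Passing to the limit along the subsequence with $m_j(t) \geq 1$ and using $n_j(t) = k_j(t) + m_j(t)$ yields $\alpha = \tfrac{k_j(t)}{n_j(t)}\alpha + \tfrac{m_j(t)}{n_j(t)}\alpha_i$, whence $m_j(t)\,\alpha = m_j(t)\,\alpha_i$ and therefore $\alpha = \alpha_i$.

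The main obstacle is verifying the hypotheses of Wolfowitz's lemma in this switching setting: although the normalized Perron matrix is well-behaved at a single time step, one has to argue that the \emph{product} over each bounded window $[t_i,t_{i+1}]$ is primitive, and do so uniformly enough for the infinite product to converge. The positive diagonal of $\tilde{P}^C_{\sigma(t)}$ hands us aperiodicity for free, but the delicate point is turning the linked-together assumption on the union of the graphs along the window into irreducibility of the matrix product: for every ordered pair $(i,j)$ of vertices in $C$ one must chain together edges present at different moments inside the window to exhibit a positive entry of the product. Once this is done, the identification $\alpha = \alpha_i$ in the second stage is a direct transcription of the corresponding step in Theorem \ref{theorem:orientation-perron-influence-switching-same-orient}, with $\varepsilon$ replaced by the normalization factor $1/n_j(t)$.
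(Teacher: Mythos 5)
Your proposal follows essentially the same two-stage route as the paper's proof: first invoking Wolfowitz's lemma on the product of normalized Perron matrices restricted to $C$ to obtain a consensus value $\alpha$, then identifying $\alpha = \alpha_i$ by passing to the limit in the update equation at a flocking agent that retains an influencing neighbor. Your version is, if anything, slightly more careful than the paper's at the two delicate points (checking primitivity of the product over each linked window, and sandwiching the neighbor sum via $\min$/$\max$ rather than subtracting same-direction inequalities), but the argument is the same.
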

\begin{proof}
    Given a connected component $C$ in $\tilde{\mathcal{G}}(0)$, according to Lemma \ref{lemma:normalized-perron-properties}, the normalized Perron matrix $\tilde{P}^C _{\sigma(t)}$ is a row stochastic primitive nonnegative matrix. Therefore, both Equation \ref{eq:solution} and Wolfowitz's lemma \cite{wolfowitz1963products} apply to this normalized Perron matrix. That is, we get that $\lim_{t \rightarrow \infty} \tilde{\theta} ^ C (t) = 1(w\tilde{\theta} ^ C(0)) = \alpha 1$ with $\alpha = w\theta ^ C(0)$ and $\lim_{t \rightarrow \infty}\prod_{\tau=0}^t \tilde{P}_{\sigma(\tau)} = 1w$. The vector $w$ depends on the switching sequence and cannot be determined a priori.

    Since across each interval $[t_i,t_{i+1}]$, $i \geq 0$, the agents which correspond to $C$ are linked together, there exists a flocking agent $a_j \in A^F$, for which $\lim_{t \rightarrow \infty} m_j(t) > 0$. In other words, after consensus is reached, the flocking agent $a_j$ has at least one influencing agent in its neighborhood.

    Considering Subsection \ref{sec:flocking-model}-model, the following holds according to the definition of the normalized Perron matrix: $$\tilde{\theta}^C_j(t+1) = \frac{1}{n_j(t)} \sum_{\ell_1 \in N_j(t) \cap A^F} \tilde{\theta}^C_{\ell_1}(t) +  \frac{1}{n_j(t)} \sum_{\ell_2 \in N_j(t) \cap A^D} \tilde{\theta}^C_{\ell_2}(t)$$
    Since all the influencing agents are \textbf{constantly} adopting the orientation $\alpha_i$, we have that:
    $$\tilde{\theta}^C_j(t+1) = \frac{1}{n_j(t)} \sum_{\ell_1 \in N_j(t) \cap A^F} \tilde{\theta}^C_{\ell_1}(t) + \frac{1}{n_j(t)} \sum_{\ell_2 \in N_j(t) \cap A^D} \alpha_i = $$ $$= \frac{1}{n_j(t)} \sum_{\ell_1 \in N_j(t) \cap A^F} \tilde{\theta}^C_{\ell_1}(t) + \frac{m_j(t)}{n_j(t)} \alpha_i \Rightarrow$$ $$n_j(t) \tilde{\theta}^C_j(t+1) = \sum_{\ell_1 \in N_j(t) \cap A^F} \tilde{\theta}^C_{\ell_1}(t) + m_j(t) \alpha_i \Rightarrow$$
    \begin{equation}
        \label{eq:sum-normalized-perron}
        m_j(t) \tilde{\theta}^C_j(t+1) + k_j(t) \tilde{\theta}^C_j(t+1) - \sum_{\ell_1 \in N_j(t) \cap A^F} \tilde{\theta}^C_{\ell_1}(t) = m_j(t) \alpha_i
    \end{equation}
    
    where the last transition follows from the fact that $n_j(t) = k_j(t)+m_j(t)$. Since $N_j(t) \cap A^F \subseteq A^F$ and $k_j(t) \leq k$, we have that:
    $$\sum_{\ell_1 \in N_j(t) \cap A^F} \tilde{\theta}^C_{\ell_1}(t) \leq \sum_{\ell_1 \in A^F} \tilde{\theta}^C_{\ell_1}(t) \quad,\quad k_j(t)\tilde{\theta}^C_j(t+1) \leq k\tilde{\theta}^C_j(t+1)$$
    Taking the limit $t \rightarrow \infty$ in both sides of both inequalities results in the following:
    $$\lim_{t \rightarrow \infty} \sum_{\ell_1 \in N_j(t) \cap A^F} \tilde{\theta}^C_{\ell_1}(t) \leq k\alpha \quad,\quad \lim_{t \rightarrow \infty} k_j(t)\tilde{\theta}^C_j(t+1) \leq k\alpha$$
    Subtracting between the first inequality from the second one, we have that:
    $$0 \leq \lim_{t \rightarrow \infty} \Bigg[k_j(t)\tilde{\theta}^C_j(t+1) - \sum_{\ell_1 \in N_j(t) \cap A^F} \tilde{\theta}^C_{\ell_1}(t)\Bigg] \leq 0$$
    According to the "Sandwich Theorem", we have that:
    $$\lim_{t \rightarrow \infty} \Bigg[k_j(t)\tilde{\theta}^C_j(t+1) - \sum_{\ell_1 \in N_j(t) \cap A^F} \tilde{\theta}^C_{\ell_1}(t)\Bigg] = 0$$
    Thus, taking the limit $t \rightarrow \infty$ in both sides of Equation \ref{eq:sum-normalized-perron} results in the following:
    $$\alpha \lim_{t \rightarrow \infty} m_j(t) + 0 = \alpha_i \lim_{t \rightarrow \infty} m_j(t) \Rightarrow \boxed{\alpha = \alpha_i}$$
    where the last transition follows from the fact that $\lim_{t \rightarrow \infty} m_j(t) > 0$.
\end{proof}

\subsubsection{Continuous-Time Case}
\label{sec:ct-switching}

\textbf{We will first consider Equation \ref{eq:global-orient-linear}}. In this case, we consider the following continuous-time heading system for the \textbf{influencing} neighbors graph (which stems from Lemma \ref{lemma:global-orient-linear-matrix}):
\begin{equation}
\label{eq:global-orient-linear-matrix-switching}
    \dot{\tilde{\theta}}(t) = -\tilde{L}_{\sigma(t)} \tilde{\theta}(t) \quad ; \quad t \geq 0
\end{equation}

Similarly to Equation \ref{eq:dis-vec-ct}, the continuous-time \textbf{switching} dynamics of the disagreement vector is given as follows:
\begin{equation}
    \label{eq:dis-vec-ct-switching}
    \dot{\delta}(t) = -\tilde{L}_{\sigma(t)} \tilde{\delta}(t)  \quad ; \quad t \geq 0
\end{equation}

When considering a connected component $C$ of $\mathcal{\tilde{G}}(0)$, we would like only to consider the entire set of strongly connected and balanced digraphs with $|C|$ vertices. Thus, we make the following definitions:
\begin{definition}
\label{def:L(G)}
    (\textit{$L(\mathcal{G})$})
    Given a graph $\mathcal{G}=(\mathcal{V},\mathcal{E})$, we denote its Laplacian matrix by $L(G)$.
\end{definition}
\begin{definition}
	(\textit{The Collection of All Strongly Connected And Balanced Digraphs With $s$ Vertices})
	$\Gamma_s = \{\mathcal{G}=(\mathcal{V},\mathcal{E})|rank(L(G))=s-1 \wedge 1^T L(G) = 0\}$.
\end{definition}

Note that $\Gamma_s$ is finite since at most a graph of order $s$ is complete and has $s(s-1)$ edges (there are $\binom{s}{2}=\frac{1}{2}s(s-1)$ distinct pairs of vertices). Therefore: $|\Gamma_s| \leq s(s-1)$. For a connected component $C$ of $\mathcal{\tilde{G}}(0)$, the subgraph induced by $C$ is clearly in $\Gamma_{|C|}$.

Similarly to Lemma \ref{lemma:connected-component-normalized-laplacian}, Equation \ref{eq:global-orient-linear-matrix-switching} induces a heading system for each connected component of the flocking neighbors graph at time step $0$. This case differs from the one introduced in Subsection \ref{sec:dt-switching} since we would like the switching signal for the heading system induced by a connected component $C$ of $\mathcal{\tilde{G}}(0)$ to be restricted as a $[0,\infty) \rightarrow \Gamma_{|C|}$ function. We shall phrase this formally in the following lemma:
\begin{lemma}
    \label{lemma:connected-component-heading-linear-switching}
    Let $C$ be some connected component of $\mathcal{\tilde{G}}(0)$. We denote by $\tilde{\theta} ^ C (t)$ and $\delta^C(t)$ the orientations vector and the disagreement vector (respectively), which corresponds to the agents in $C$. We assume that the induced subgraph $\mathcal{\tilde{G}}(t)(C)$ is strongly connected for any time step $t$. Then, the Laplacian matrix $\tilde{L}_{\sigma(t)}$ and the switching signal $\sigma(t)$ induce a Laplacian matrix $\tilde{L}_{\sigma(t)} ^ C$ and a switching signal $\sigma^C(t):[0,\infty) \rightarrow \Gamma_{|C|}$, for which: $$\dot{\tilde{\theta} }^ C (t+1) = -\tilde{L}_{\sigma^C(t)} ^ C \tilde{\theta} ^ C (t) \quad , \quad \dot{\delta} ^ C (t+1) = -\tilde{L}_{\sigma^C(t)} ^ C \delta ^ C (t)$$
\end{lemma}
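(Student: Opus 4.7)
The plan is to mirror the construction used in Lemma \ref{lemma:connected-component-normalized-laplacian}: build $\tilde{L}^C_{\sigma(t)}$ and $\sigma^C(t)$ explicitly by restriction to the indices $i_1,\dots,i_{|C|}$, and then verify the two claimed identities one after the other.

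First I would define $\tilde{L}^C_{\sigma(t)} \in \mathbb{R}^{|C| \times |C|}$ as the principal submatrix of $\tilde{L}_{\sigma(t)}$ obtained by keeping exactly the rows and columns indexed by $i_1,\dots,i_{|C|}$; by construction this matrix coincides with the Laplacian of the induced subgraph $\mathcal{\tilde{G}}(t)(C)$ of Definition \ref{def:induced-subgraph}. I would then let $\sigma^C(t):[0,\infty)\to \Gamma_{|C|}$ be the switching signal assigning to each $t$ the element of $\Gamma_{|C|}$ equal to $\mathcal{\tilde{G}}(t)(C)$. To confirm that $\sigma^C$ really takes values in $\Gamma_{|C|}$, the strong connectivity condition $\mathrm{rank}(L)=|C|-1$ is given by hypothesis, while the balanced condition $1^T L=0$ follows from Lemma \ref{lemma:neighbors-graphs-balanced} together with the symmetry of the visibility relation $a_j \in N_i(t) \iff a_i \in N_j(t)$, which guarantees that restricting the balanced influencing neighbors graph to $C$ preserves balancedness.

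Next I would read off the two dynamical identities. For each agent $a_{i_\ell}\in C$, Lemma \ref{lemma:global-orient-linear-matrix} applied to Equation \ref{eq:global-orient-linear-matrix-switching} yields $\dot{\tilde{\theta}}_{i_\ell}(t) = -(\tilde{L}_{\sigma(t)}\tilde{\theta}(t))[i_\ell]$; because the symmetric visibility relation together with $C$ being a connected component keeps $N_{i_\ell}(t)\subseteq C$ at every $t$, the $i_\ell$-th row of the full Laplacian acts only on coordinates indexed by $C$. Stacking these equations yields $\dot{\tilde{\theta}}^C(t)=-\tilde{L}^C_{\sigma^C(t)}\tilde{\theta}^C(t)$. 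The disagreement identity then drops out by subtracting $Avg^C(t)\cdot 1$ from both sides and using $\tilde{L}^C_{\sigma^C(t)}\cdot 1 = 0$ (which is the balanced property just verified), so the constant term is annihilated and $\dot{\delta}^C(t)=-\tilde{L}^C_{\sigma^C(t)}\delta^C(t)$.

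The main obstacle is not the algebra, which is essentially tautological once the restriction is set up, but rather the justification that interactions of agents in $C$ with agents outside $C$ can legitimately be discarded; the strong connectivity hypothesis on $\mathcal{\tilde{G}}(t)(C)$ alone does not formally preclude edges across the cut, and one has to invoke (either explicitly, or as a modelling convention inherited from the fact that $C$ is a connected component at time $0$) a closedness property of $C$ under the neighbor relation throughout time. Once that point is pinned down, the proof is a direct continuous-time switching analogue of Lemma \ref{lemma:connected-component-normalized-laplacian}.
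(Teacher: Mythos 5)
Your proposal is correct and takes essentially the same approach the paper intends: the paper states Lemma \ref{lemma:connected-component-heading-linear-switching} without proof, merely remarking that it follows ``similarly to Lemma \ref{lemma:connected-component-normalized-laplacian}'', whose proof is exactly your restriction of the matrix to the rows and columns indexed by the agents of $C$; your verification of membership in $\Gamma_{|C|}$ and of the disagreement identity via $\tilde{L}^C_{\sigma^C(t)}1=0$ fills in details the paper omits. The closedness issue you flag --- that strong connectivity of $\mathcal{\tilde{G}}(t)(C)$ does not by itself rule out edges from $C$ to its complement at times $t>0$ --- is a genuine implicit assumption of the paper's component-wise decomposition throughout Section \ref{sec:switching-topology}, not a defect introduced by your argument, and you are right that it must be adopted as a modelling convention for the restricted dynamics to be exact.
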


Given this lemma and based on \textit{Theorem 9} in \cite{olfati2004consensus}, we have the following lemma:
\begin{lemma}
\label{lemma:linear-converge-switching-ct}
     Let $C$ be some connected component of $\mathcal{\tilde{G}}(0)$. We denote by $|C|$ the number of agents which correspond to $C$ and assume that $a_{i_1}, \dots ,a_{i_{|C|}}$ are those agents. We assume that the induced subgraph $\mathcal{\tilde{G}}(t)(C)$ is strongly connected for any time step $t$. Let $\sigma^C(t):[0,\infty) \rightarrow \Gamma_{|C|}$ be an arbitrary switching signal. Then, the following statements hold:
     \begin{enumerate}
         \item The switching heading system given in Lemma \ref{lemma:connected-component-heading-linear-switching} globally asymptotically converges to an average-consensus, i.e. the group decision value is: $$\alpha = \frac{1}{|C|}\sum_{j = 1} ^ {|C|} \theta_{i_j}(0)$$ 
         \item $V(\delta^C) = \frac{1}{2} ||\delta^C||^2$ is a smooth, positive-definite and proper function, which acts as a valid Lyapunov function for the disagreement dynamics of $\delta^C$. 
         \item While considering Definitions \ref{def:sym-part}, \ref{def:eigenvalues} and \ref{def:L(G)}, the disagreement vector vanishes exponentially with the least rate of:
         $$\lambda_2^* = \min_{\mathcal{G} \in \Gamma_{|C|}}\lambda_2(L(\mathcal{G})^{Sym})$$
         i.e., the inequality $||\delta^C(t)|| \leq ||\delta^C(0)||e^{-\lambda_2^* t}$ holds.
     \end{enumerate}
\end{lemma}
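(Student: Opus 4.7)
The plan is to extend the fixed-topology Lyapunov argument (the lemma following Equation \ref{eq:dis-vec-ct}) by exploiting two structural facts: $V(\delta^C) = \frac{1}{2}\|\delta^C\|^2$ is a \emph{common} Lyapunov function for every mode, and the admissible mode set $\Gamma_{|C|}$ is finite. I would carry out the proof in three linked steps.

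\textbf{Step 1 (average invariance).} By Lemma \ref{lemma:neighbors-graphs-balanced} the induced subgraph at each time is balanced, and by definition of $\Gamma_{|C|}$ every reachable Laplacian satisfies $1^T \tilde{L}_{\sigma^C(t)}^C = 0$. Hence
$$\frac{d}{dt}\,Avg^C(t) = -\frac{1}{|C|}\, 1^T \tilde{L}_{\sigma^C(t)}^C\, \tilde{\theta}^C(t) = 0,$$
so $Avg^C(t) \equiv \frac{1}{|C|}\sum_{j=1}^{|C|} \theta_{i_j}(0) =: \alpha$, and $\delta^C(t) = \tilde{\theta}^C(t) - \alpha\, 1$ stays in the hyperplane $\{\delta : 1^T\delta = 0\}$ for all $t$.

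\textbf{Step 2 (Lyapunov decrease).} Smoothness, positive-definiteness (restricted to $\{1^T\delta = 0\}$), and properness of $V$ are immediate. Differentiating along $\dot{\delta}^C = -\tilde{L}_{\sigma^C(t)}^C \delta^C$ and using that a real quadratic form equals the one induced by its symmetric part,
$$\dot V = -(\delta^C)^T\,(\tilde{L}_{\sigma^C(t)}^C)^{Sym}\,\delta^C.$$
Because $1^T \delta^C = 0$ by Step~1, Lemma \ref{lemma:eigenvalues-sym} gives $\dot V \leq -\lambda_2\bigl((\tilde{L}_{\sigma^C(t)}^C)^{Sym}\bigr)\,\|\delta^C\|^2 \leq -2\lambda_2^*\, V$, where the last inequality uses $\sigma^C(t) \in \Gamma_{|C|}$ together with the definition of $\lambda_2^*$. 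This bound is \emph{uniform} in the mode, so it holds on every interval of constancy of $\sigma^C$.

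\textbf{Step 3 (exponential convergence).} Integrating across switches, using continuity of $V$, gives $V(t) \leq V(0)\,e^{-2\lambda_2^* t}$, hence $\|\delta^C(t)\| \leq \|\delta^C(0)\|\,e^{-\lambda_2^* t}$. Combined with Step~1 this yields $\tilde{\theta}^C(t) \to \alpha\, 1$, establishing all three statements of the lemma simultaneously.

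The main obstacle I anticipate is the standard subtlety of Lyapunov analysis for switched systems: $\tilde{L}_{\sigma^C(t)}^C$ is only piecewise continuous, so $\dot V$ is undefined at switching instants. The resolution is that $V$ itself does not depend on $\sigma^C$, so $V$ is continuous across switches and the mode-uniform bound $\dot V \leq -2\lambda_2^*\, V$ integrates through them without loss. The finiteness of $\Gamma_{|C|}$ is essential here: it guarantees that the infimum defining $\lambda_2^*$ is attained and is strictly positive, because each $\mathcal{G} \in \Gamma_{|C|}$ is strongly connected and balanced, so $L(\mathcal{G})^{Sym}$ has a simple zero eigenvalue with eigenvector $1$ and its next eigenvalue is positive.
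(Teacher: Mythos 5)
Your proposal is correct and follows essentially the same route as the paper: the paper states this lemma without a written proof, deferring to \emph{Theorem 9} of Olfati-Saber and Murray \cite{olfati2004consensus}, whose argument is exactly your common-Lyapunov-function scheme (average invariance from $1^T L = 0$, the bound $\dot V \leq -2\lambda_2^* V$ via Lemma \ref{lemma:eigenvalues-sym} and finiteness of $\Gamma_{|C|}$, then integration across switches). Indeed, the paper's immediately following lemma on the refined rate $\lambda_2^{**}$ reproduces your Steps 2--3 verbatim, so your write-up simply supplies the details the paper outsources to the citation.
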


Considering the third statement in the previous lemma, the following lemma gives us a \textbf{faster} speed rate for the vanishing of the disagreement vector:
\begin{lemma}
    Let $C$ be some connected component of $\mathcal{\tilde{G}}(0)$. We assume that the induced subgraph $\mathcal{\tilde{G}}(t)(C)$ is strongly connected for any time step $t$. Recalling that $\mathcal{G}_{\sigma(t)}(C)$ is the subgraph of $\mathcal{G}_{\sigma(t)}$ induced by $C$, the disagreement vector vanishes exponentially with the least rate of:
    $$\lambda_2^{**} = \min_{t \in [0,\infty)}\lambda_2(L(\mathcal{G}_{\sigma(t)}(C))^{Sym})$$
    i.e., the inequality $||\delta^C(t)|| \leq ||\delta^C(0)||e^{-\lambda_2^{**} t}$ holds. Moreover, this speed rate is \textbf{faster} then the one presented in the previous lemma.
\end{lemma}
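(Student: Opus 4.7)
The plan is to mirror the Lyapunov argument that established the previous lemma, but to exploit the fact that at each instant $t$ the dynamics are governed only by the particular Laplacian $\tilde L^C_{\sigma^C(t)}$ that is actually active, rather than by a worst case taken over the entire collection $\Gamma_{|C|}$. I would keep the same candidate Lyapunov function $V(\delta^C)=\tfrac{1}{2}\|\delta^C\|^2$ supplied by Lemma \ref{lemma:linear-converge-switching-ct}, which is already known to be smooth, positive-definite and proper, and whose derivative along the trajectory $\dot\delta^C(t)=-\tilde L^C_{\sigma^C(t)}\delta^C(t)$ equals
$$\dot V(\delta^C(t)) \;=\; -(\delta^C(t))^T\,\tilde L^C_{\sigma^C(t)}\,\delta^C(t) \;=\; -(\delta^C(t))^T\,\bigl(\tilde L^C_{\sigma^C(t)}\bigr)^{Sym}\,\delta^C(t),$$
since any quadratic form only sees the symmetric part of its matrix.

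Next I would invoke the first statement of Lemma \ref{lemma:eigenvalues-sym} applied to the balanced, strongly connected induced subgraph $\mathcal{G}_{\sigma(t)}(C)\in\Gamma_{|C|}$; the hypothesis $1^T\delta^C(t)=0$ needed for that lemma holds at every $t$ because the average-consensus invariant from Lemma \ref{lemma:linear-converge-switching-ct} together with the definition $\delta^C=\tilde\theta^C-Avg^C\,1$ forces $1^T\delta^C\equiv 0$ throughout the trajectory. This yields the instantaneous bound
$$\dot V(\delta^C(t)) \;\le\; -\lambda_2\!\bigl(L(\mathcal{G}_{\sigma(t)}(C))^{Sym}\bigr)\,\|\delta^C(t)\|^2 \;\le\; -\,2\lambda_2^{**}\,V(\delta^C(t)),$$
where the second inequality just takes the minimum over $t\in[0,\infty)$ according to the definition of $\lambda_2^{**}$. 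Integrating (or applying Gr\"onwall) gives $V(\delta^C(t))\le V(\delta^C(0))\,e^{-2\lambda_2^{**} t}$, which after taking square roots is exactly the claimed decay $\|\delta^C(t)\|\le\|\delta^C(0)\|\,e^{-\lambda_2^{**} t}$.

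For the comparison with the previous lemma I would simply note the set inclusion
$$\bigl\{\,L(\mathcal{G}_{\sigma(t)}(C))^{Sym}\,:\,t\ge 0\,\bigr\} \;\subseteq\; \bigl\{\,L(\mathcal{G})^{Sym}\,:\,\mathcal{G}\in\Gamma_{|C|}\,\bigr\},$$
because the switching signal visits only a subcollection of $\Gamma_{|C|}$. Minimising $\lambda_2$ over a smaller set can only increase (or leave unchanged) the minimum, so $\lambda_2^{**}\ge\lambda_2^{*}$, which gives $e^{-\lambda_2^{**} t}\le e^{-\lambda_2^{*} t}$ for all $t\ge 0$; the new bound therefore decays at least as fast as the old one.

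The only delicate point is justifying that the minimum in the definition of $\lambda_2^{**}$ is actually attained (and in particular is strictly positive), rather than being a mere infimum that could collapse to zero. This is handled by recalling that $\Gamma_{|C|}$ is finite (as already observed in the paper, $|\Gamma_s|\le s(s-1)$), so the range of the map $t\mapsto L(\mathcal{G}_{\sigma(t)}(C))^{Sym}$ is also finite; the minimum of finitely many positive numbers (each Laplacian here comes from a strongly connected balanced digraph, so $\lambda_2$ of its symmetric part is positive) is attained and strictly positive, which is precisely what exponential decay requires.
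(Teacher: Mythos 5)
Your proposal is correct and follows essentially the same route as the paper: the Lyapunov function $V(\delta^C)=\tfrac{1}{2}\|\delta^C\|^2$, the bound $\dot V \le -2\lambda_2^{**}V$ via the symmetric part of the active Laplacian, integration, and the set inclusion $\{\mathcal{G}_{\sigma(t)}(C)\mid t\ge 0\}\subseteq\Gamma_{|C|}$ for the comparison with $\lambda_2^*$. In fact you are slightly more careful than the paper on two points: you correctly conclude $\lambda_2^{**}\ge\lambda_2^{*}$ from minimizing over a smaller set (the paper writes $\lambda_2^{**}\le\lambda_2^{*}$, which is inconsistent with its own inclusion and appears to be a typo), and you explicitly justify both the condition $1^T\delta^C=0$ and the attainment and positivity of the minimum defining $\lambda_2^{**}$.
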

\begin{proof}
    For some connected component $C$ of $\mathcal{\tilde{G}}(0)$, calculating $\dot{V}(\delta^C)$ as it is given in the second statement of Lemma \ref{lemma:linear-converge-switching-ct}, for all $\delta^C \neq 0$, we have that:
    $$\dot{V}(\delta^C) = -(\delta^C)^T L(\mathcal{G}_{\sigma(t)}(C)) \delta^C = -(\delta^C)^T L(\mathcal{G}_{\sigma(t)}(C))^{Sym} \delta^C \leq $$ $$\leq -\lambda_2(L(\mathcal{G}_{\sigma(t)}(C))^{Sym}) ||\delta^C||^2 \leq -\lambda_2^{**}||\delta^C||^2 = -2\lambda_2^{**}V(\delta^C) < 0$$
    Therefore: $\dot{V}(\delta^C) \leq -2\lambda_2^{**}V(\delta^C)$. Integrating both sides of the inequality gives us:
    $$V(\delta^C(t)) \leq V(\delta^C(0)) e^{-2\lambda_2^{**}t} \Rightarrow ||\delta^C(t)|| \leq ||\delta^C(0)||e^{-\lambda_2^{**}t}$$
    Note that: 
    $$\{\mathcal{G}_{\sigma(t)}(C)|t \in [0,\infty)\} \subseteq \Gamma_{|C|}$$
    Then, $\lambda_2^{**}$ always exists since $\Gamma_{|C|}$ is finite and $\lambda_2^{**} \leq \lambda_2^*$, meaning that the value of $\lambda_2^{**}$ gives a more strict bound on the vanishing of the disagreement vector $\delta$. In other words, it is a \textbf{faster} speed rate.
\end{proof}

\textbf{\underline{Note:}} Throughout all the lemmas in this section until this point, we assumed that the subgraph induced by each connected component is strongly connected for any time step $t$. This assumption gave us the guarantee that an average-consensus is reached. Generally, this is quite a restrictive condition since it does not apply to most topologies. In essence, we would like to consider the conditions under which convergence is still guaranteed for each connected component of the influencing neighbors graph at time step $0$, even when this assumption does not hold.

As discussed in \cite{jadbabaie2003coordination}, \cite{ren2005consensus}, \cite{morse1996supervisory}, such continuous time control laws might lead to chattering since neighbor relations can suddenly change, while the agents' positions also change. A solution to this problem is similar to what is described in \cite{cao2010dwell}, which introduces \textbf{dwell time}, that is, the Laplacian matrix $\tilde{L}_{\sigma(t)}$ is piecewise constant. In our context, this means that each agent is constrained to changing its control law only at discrete time, that is, the Laplacian matrix $\tilde{L}_{\sigma(t)}$ is piecewise constant. This avoids the chattering problem. Thus, Equation \ref{eq:global-orient-linear-matrix-switching} can be rewritten as:
\begin{equation}
\label{eq:global-orient-linear-matrix-switching-dwell}
    \dot{\tilde{\theta}}(t) = -\tilde{L}_{\sigma(t_i)} \tilde{\theta}(t) \quad ; \quad t \in [t_i,t_i+\tau_i]
\end{equation}

where $t_0$ is the initial time and $t_1,t_2,\dots$ is an infinite time sequence at which the influencing neighbors graph change, resulting in a change in $\tilde{L}_{\sigma(t_i)}$. Let $\tau_i = t_{i+1} -t_i$ be the \textbf{dwell time}, $i \geq 0$. Note that the solution to Equation \ref{eq:global-orient-linear-matrix-switching-dwell} is given by (after integrating both sides of the Equation \ref{eq:global-orient-linear-matrix-switching-dwell}):
\begin{equation}
    \label{eq:product-exp}
    \tilde{\theta}(t) = e^{-\tilde{L}_{\sigma(t_\ell)}(t-t_\ell)}\Big(\prod_{i=0}^{\ell-1}e^{-\tilde{L}_{\sigma(t_i)}(\tau_i)}\Big)\tilde{\theta}(0)
\end{equation}

where $\ell$ is the largest nonnegative integer for which $t_\ell \leq t$. \textbf{We will assume that the dwell time is constant $\tau_i = \tau$ for all $i$}. Note that we are also assuming that the agents are all synchronized by specifying the same infinite time sequence for all the agents. This means that switching signal is piecewise constant with successive switching times separated by $\tau$ time units. Based on \textit{Theorem 5} in \cite{jadbabaie2003coordination}, we infer the following theorem:
\begin{theorem}
    Let $C$ be some connected component of $\mathcal{\tilde{G}}(0)$. Let $\tilde{\theta} (0)$, $t_0$ and $\tau > 0$ be fixed. Let $\sigma(t):\mathbb{N} \rightarrow \mathcal{Q}_n$ be a piecewise constant switching signal whose switching times $t_1,t_2,\dots$ satisfy $t_{i+1}-t_i \geq \tau$, $i \geq 1$. We assume that the infinite sequence of contiguous, nonempty, bounded, nonoverlapping time-intervals $[t_{i_j},t_{i_j+\ell_j})$, $j \geq 1$, has the property that across each such interval, the agents which correspond to $C$ are linked together. Then, alignment is asymptotically reached for the agents which correspond to the connected component $C$. That is: $$\lim_{t \rightarrow \infty} \tilde{\theta} ^ C (t) = \alpha 1$$
\end{theorem}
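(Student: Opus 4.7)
The plan is to reduce the continuous-time dwell-time system to an infinite product of matrix exponentials and then invoke Wolfowitz's lemma, in direct analogy with the discrete-time arguments of Theorem \ref{theorem:orientation-perron-influence-switching-same-orient} and Theorem \ref{theorem:orientation-normalized-perron-influence-switching-same-orient}. First I would restrict to the induced subgraph on $C$ via Lemma \ref{lemma:connected-component-heading-linear-switching}, rewriting the heading system on each dwell interval $[t_i, t_{i+1})$ as $\dot{\tilde{\theta}}^C(t) = -\tilde{L}^C_{\sigma(t_i)} \tilde{\theta}^C(t)$. Integrating exactly as in Equation \ref{eq:product-exp} gives
$$\tilde{\theta}^C(t) = e^{-\tilde{L}^C_{\sigma(t_\ell)}(t-t_\ell)} \Big(\prod_{i=0}^{\ell-1} M_i \Big) \tilde{\theta}^C(0), \qquad M_i := e^{-\tilde{L}^C_{\sigma(t_i)}\tau},$$
so that proving asymptotic alignment reduces to showing the left-infinite product $\prod_{i} M_i$ converges to $\mathbf{1}w$ for some row vector $w$.

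The next step is to verify the hypotheses of Wolfowitz's lemma (Lemma \ref{lemma:wolfowitz}) for the family of $M_i$'s. Since $-\tilde{L}^C_{\sigma(t_i)}$ is a Metzler matrix (nonnegative off-diagonal entries), its exponential is entrywise nonnegative; since $\tilde{L}^C_{\sigma(t_i)}\mathbf{1} = 0$, we obtain $M_i \mathbf{1} = \mathbf{1}$, i.e.\ row-stochasticity. Primitivity of each $M_i$ (when the active subgraph on $C$ is itself strongly connected) follows by expanding the power series of $e^{-\tilde{L}^C_{\sigma(t_i)}\tau}$: the strictly positive diagonal entries guarantee aperiodicity while reachability in the active digraph yields irreducibility. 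Because $\Gamma_{|C|}$ is finite, only finitely many distinct matrices $M_i$ can appear in the entire sequence.

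The crucial step is to show that the product $M_{i_j+\ell_j-1} \cdots M_{i_j}$ over one linked-together interval is itself primitive stochastic. The support pattern of such a product corresponds to directed paths through the underlying sequence of active graphs, and the linked-together hypothesis ensures that this union is connected on $|C|$ vertices; combined with the positive diagonals of each factor, which supply self-loops for aperiodicity, this yields primitivity of the interval product. The dwell time $\tau > 0$ is essential here, as it gives a uniform positive lower bound on the nonzero entries of each $M_i$ and so prevents the products from degenerating. Wolfowitz's lemma then produces a row vector $w$ with $\lim_{h \to \infty} P_{q_1} \cdots P_{q_h} = \mathbf{1}w$, and since the trailing factor $e^{-\tilde{L}^C_{\sigma(t_\ell)}(t-t_\ell)}$ is uniformly bounded and row stochastic for $t - t_\ell \in [0,\tau]$, we conclude $\lim_{t \to \infty} \tilde{\theta}^C(t) = \mathbf{1}(w\tilde{\theta}^C(0)) = \alpha \mathbf{1}$ with $\alpha = w\tilde{\theta}^C(0)$. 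I expect the primitivity check for the interval products to be the main obstacle, since one must argue that ``linked together'' in the graph-union sense translates faithfully into irreducibility of a product of matrix exponentials, rather than merely of $0/1$ adjacency matrices, and the dwell-time bound is what makes this translation quantitative.
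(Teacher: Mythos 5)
Your strategy is sound and in fact coincides with the machinery the paper itself assembles around this statement: the paper offers no proof of this theorem at all, merely citing Theorem~5 of Jadbabaie et al., and then introduces exactly the ingredients you use --- the product-of-exponentials solution in Equation~\ref{eq:product-exp}, the primitivity of products of exponentials of Laplacians over a jointly connected collection (Lemma~\ref{lemma:exp-product-primitive}), and Wolfowitz's lemma (Lemma~\ref{lemma:wolfowitz}) --- in order to prove the subsequent Theorem~\ref{theorem:orientation-linear-influence-switching}. So your proposal is best read as supplying the proof the paper omits, by the route the paper implicitly intends. Your observations that $e^{-\tilde{L}\tau}$ is entrywise nonnegative (since $-\tilde{L}$ is Metzler) and row stochastic (since $\tilde{L}1=0$), and that positive diagonal entries propagate the union of the supports through a product, are correct and are the right reasons the argument works.

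One step needs more care than your write-up gives it. Wolfowitz's lemma, as stated in Lemma~\ref{lemma:wolfowitz}, requires a \emph{finite} set of stochastic matrices \emph{every} finite product of which is primitive; but an individual $M_i=e^{-\tilde{L}^C_{\sigma(t_i)}\tau}$ need not be primitive, because the active graph on a single dwell interval need not be connected (this is the whole point of the linked-together hypothesis), and an arbitrary finite product of the $M_i$ need not be primitive either. The correct application is to first group the factors according to the linked-together intervals $[t_{i_j},t_{i_j+\ell_j})$, prove that each grouped product is primitive (your support/positive-diagonal argument, i.e.\ the content of Lemma~\ref{lemma:exp-product-primitive}), verify that the set of possible grouped products is finite --- which uses both the boundedness of the intervals and the finiteness of $\Gamma_{|C|}$ --- and then check that finite products of these grouped blocks are again primitive (they are: each block is a primitive stochastic matrix with positive diagonal, so a product of blocks has support containing the union of their strongly connected supports together with all self-loops) before invoking Wolfowitz on the grouped sequence. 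You correctly flag the interval products as ``the crucial step,'' so you have identified the right difficulty, but the proposal as written slightly conflates the level of the individual $M_i$ (whose primitivity you assert only conditionally) with the level of the grouped blocks; making the grouping explicit is what closes the argument.
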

For analyzing the consensus value, note that by Equation \ref{eq:product-exp} the convergence depends on the product $e^{-\tilde{L}_{\sigma(t_\ell)}(t-t_\ell)}\Big(\prod_{i=0}^{\ell-1}e^{-\tilde{L}_{\sigma(t_i)}(\tau_i)}\Big)$. The following lemma, which is based on \textit{Lemma 3.11} in \cite{ren2005consensus}, show the case in which this product is a primitive matrix:
\begin{lemma}
\label{lemma:exp-product-primitive}
Let $t_0$ be the initial time and $t_1,t_2,\dots$ be an infinite time sequence. Let $\tau_i = t_{i+1} -t_i >  0$, $i \geq 0$, and assume they are bounded. If the union of the digraphs $\{\mathcal{G}_{i_1},\dots,\mathcal{G}_{i_j}\} \subseteq \Gamma_s$ is strongly connected, then the product $e^{-\tilde{L}_{\sigma(t_\ell)}(t-t_\ell)}\Big(\prod_{i=0}^{\ell-1}e^{-\tilde{L}_{\sigma(t_i)}(\tau_i)}\Big)$ is a primitive matrix.
\end{lemma}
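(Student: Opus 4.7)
The plan is to establish primitivity of
\[
Q := e^{-\tilde{L}_{\sigma(t_\ell)}(t-t_\ell)}\prod_{i=0}^{\ell-1}e^{-\tilde{L}_{\sigma(t_i)}\tau_i}
\]
by verifying its three defining properties in order: non-negativity, aperiodicity, and irreducibility. The delicate point is that only the \emph{union} $\mathcal{G}^{*}:=\mathcal{G}_{i_1}\cup\cdots\cup\mathcal{G}_{i_j}$ of the digraphs appearing in the switching sequence is assumed strongly connected, so I must avoid any step that would need each individual factor of $Q$ to be entrywise positive on its own.

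First I would analyse a single factor $S=e^{-\tilde{L}\tau}$ with $\tau>0$ and $\tilde{L}$ the Laplacian of one of the digraphs in the sequence. Picking $c\geq \max_i \tilde{L}[i,i]$, the matrix $cI-\tilde{L}$ is entrywise non-negative and commutes with $cI$, so
\[
S = e^{-c\tau}\sum_{k=0}^{\infty}\frac{\tau^k}{k!}(cI-\tilde{L})^k,
\]
which exhibits $S$ as a positive scalar times a sum of non-negative matrices. The $k=0$ term contributes $e^{-c\tau}>0$ to every diagonal entry, and $\tilde{L}\,1=0$ gives $S\,1=1$, so $S$ is a non-negative, row-stochastic matrix with strictly positive diagonal; moreover $S[u,v]>0$ whenever $(u,v)$ is an edge of the underlying digraph, since the $k=1$ term contributes $\tau\cdot(-\tilde{L})[u,v]=\tau>0$ at that position. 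These properties are preserved by matrix multiplication, so $Q$ is a non-negative row-stochastic matrix, and $Q[u,u]$ is bounded below by the product of the diagonal entries of its factors (one specific term in the sum defining $Q[u,u]$), giving $Q$ strictly positive diagonal. Consequently $Q^m[u,u]>0$ for every $m\geq 1$, so $Q$ is aperiodic, and only irreducibility remains.

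The main obstacle is irreducibility, which I would prove by showing that for every pair of distinct vertices $u,v$ some power $M$ satisfies $Q^M[u,v]>0$. Using strong connectivity of $\mathcal{G}^{*}$, fix a directed path $u=w_0\to w_1\to\cdots\to w_r=v$ in $\mathcal{G}^{*}$; for each edge $(w_{a-1},w_a)$ pick an index $k_a$ with $(w_{a-1},w_a)\in\mathcal{E}(\mathcal{G}_{i_{k_a}})$ and a time slot $p_a\in\{0,\dots,\ell\}$ with $\mathcal{G}_{\sigma(t_{p_a})}=\mathcal{G}_{i_{k_a}}$, which exists because $\mathcal{G}_{i_1},\dots,\mathcal{G}_{i_j}$ are precisely the distinct digraphs appearing in the sequence. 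I then pass to $Q^r$, viewed as a product over $r$ concatenated copies of the original sequence of factors. Within the $a$-th copy I stay at $w_{a-1}$ by using the positive diagonal entry of every factor except the one at time $p_a$, where I invoke the positive off-diagonal entry $e^{-\tilde{L}_{\sigma(t_{p_a})}\tau_{p_a}}[w_{a-1},w_a]$ to advance to $w_a$, and then stay at $w_a$ until the end of that copy. Multiplying these strictly positive contributions isolates one strictly positive term in the sum defining $Q^r[u,v]$, so $Q^r[u,v]>0$. The decisive reason for passing to $Q^r$ rather than using $Q$ itself is that within a single pass through the sequence the times $p_1,\dots,p_r$ need not be in increasing order, so the full path cannot in general be realised in one copy; across $r$ disjoint copies, however, each edge is realised in its own copy and the relative order of appearances inside a copy becomes irrelevant. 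Combining irreducibility with aperiodicity then yields primitivity of $Q$.
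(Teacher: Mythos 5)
Your proof is correct, but note that the paper does not actually prove this lemma at all: it states it and defers entirely to Lemma~3.11 of Ren and Beard, so your argument supplies a self-contained derivation where the paper has only a citation. Your three ingredients --- writing $e^{-\tilde L\tau}=e^{-c\tau}e^{(cI-\tilde L)\tau}$ to get non-negativity, row-stochasticity, a strictly positive diagonal for each factor and positivity at every edge of the corresponding digraph; deducing aperiodicity from the positive diagonal of the product; and using strong connectivity of the union to get irreducibility --- are exactly the ingredients underlying the cited result, so in spirit you are reconstructing the reference rather than inventing a new route. One simplification is available at the irreducibility step: for non-negative matrices with strictly positive diagonals one has $(M_1M_2)[u,v]\ge M_1[u,v]\,M_2[v,v]$ and $(M_1M_2)[u,v]\ge M_1[u,u]\,M_2[u,v]$, so by induction the digraph of $Q$ already contains the union of the digraphs of its factors, hence contains $\mathcal{G}^{*}$ and is strongly connected; this makes $Q$ itself irreducible and removes the need to pass to $Q^r$ and to track the ordering of the time slots $p_1,\dots,p_r$. (Your $Q^r$ argument is nevertheless valid, since the single positive product of entries along your chosen itinerary is a lower bound for $Q^r[u,v]$.) Two small points worth flagging: the lemma never states explicitly that $\mathcal{G}_{i_1},\dots,\mathcal{G}_{i_j}$ are the graphs occurring in the product --- you correctly read it that way, and the statement would be false otherwise; and if $t=t_\ell$ the last factor degenerates to the identity, so edges present only in $\mathcal{G}_{\sigma(t_\ell)}$ contribute nothing, meaning the statement implicitly requires every listed graph to act for a positive duration.
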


The following lemma characterizes which orientation the influencing agents should constantly adopt for guaranteeing the convergence of the flocking agents to their desired orientation, while using this lemma and Wolfowitz's lemma.
\begin{theorem}
\label{theorem:orientation-linear-influence-switching}
    Consider a \textbf{switching topology} network of agents given by Equation \ref{eq:global-orient-linear}. Let $C$ be some connected component of $\mathcal{\tilde{G}}(0)$. Let $\tilde{\theta} (0)$, $t_0$ and $\tau > 0$ be fixed. Let $\alpha_i$ be the orientation the agents associated with this connected component should converge to. 

    Let $\sigma(t):\mathbb{N} \rightarrow \mathcal{Q}_n$ be a piecewise constant switching signal whose switching times $t_1,t_2,\dots$ satisfy $t_{i+1}-t_i \geq \tau$, $i \geq 1$. We assume that the infinite sequence of contiguous, nonempty, bounded, nonoverlapping time-intervals $[t_{i_j},t_{i_j+\ell_j})$, $j \geq 1$, has the property that across each such interval, the agents which correspond to $C$ are linked together.

    Assuming that $m(C)$ influencing agents are inserted into this connected component, \textbf{all} the influencing agents should \textbf{constantly} adopt the orientation $\alpha_i$ in order for the flocking agents to converge to $\alpha_i$.
\end{theorem}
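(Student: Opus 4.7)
The plan is to mimic the structure of the proof of Theorem \ref{theorem:orientation-perron-influence-switching-same-orient}, but replacing the discrete-time products of Perron matrices with the continuous-time product of matrix exponentials given in Equation \ref{eq:product-exp}. The argument has two logically separated parts: first, establishing that some consensus is reached within the connected component $C$; second, pinning down the consensus value to be exactly $\alpha_i$.

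For the first part, I would restrict attention to $C$ by invoking Lemma \ref{lemma:connected-component-heading-linear-switching}, yielding an induced heading system on $C$ whose solution over the dwell-time partition is of the form $\tilde{\theta}^C(t) = e^{-\tilde{L}^C_{\sigma(t_\ell)}(t-t_\ell)}\bigl(\prod_{i=0}^{\ell-1}e^{-\tilde{L}^C_{\sigma(t_i)}\tau}\bigr)\tilde{\theta}^C(0)$. Each factor $e^{-\tilde{L}^C_{\sigma(t_i)}\tau}$ is a row-stochastic nonnegative matrix (the matrix exponential of a negative Laplacian over a finite time). By the joint-linking assumption across each interval $[t_{i_j},t_{i_j+\ell_j})$, Lemma \ref{lemma:exp-product-primitive} ensures that the corresponding finite sub-product is a primitive stochastic matrix. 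Hence, by Wolfowitz's lemma (Lemma \ref{lemma:wolfowitz}), there exists a row vector $w$ such that $\lim_{t\rightarrow\infty}\tilde{\theta}^C(t) = 1\,(w\tilde{\theta}^C(0)) =: \alpha\cdot 1$, establishing that consensus to some value $\alpha$ is achieved.

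For the second part, I would argue (exactly as in Theorem \ref{theorem:orientation-perron-influence-switching-same-orient}) that because the influencing agents reside in the neighborhoods of at least one flocking agent within $C$ during the joint-linking intervals, there exists a flocking agent $a_j \in A^F \cap C$ with $\limsup_{t\to\infty} m_j(t) > 0$. Write Equation \ref{eq:global-orient-linear} for $a_j$, split the sum over $N_j(t)$ into a flocking part and an influencing part, and substitute $\theta_{\ell_2}(t) = \alpha_i$ for each $a_{\ell_2} \in A^D$. This gives
\begin{equation*}
\dot{\tilde{\theta}}^C_j(t) = \sum_{\ell_1 \in N_j(t)\cap A^F}\tilde{A}^C[j,\ell_1]\bigl(\tilde{\theta}^C_{\ell_1}(t) - \tilde{\theta}^C_j(t)\bigr) + \sum_{\ell_2 \in N_j(t)\cap A^D}\tilde{A}^C[j,\ell_2]\bigl(\alpha_i - \tilde{\theta}^C_j(t)\bigr).
\end{equation*}
Taking $t\to\infty$, the first sum vanishes by the same sandwich-theorem bound used in Theorem \ref{theorem:orientation-perron-influence-switching-same-orient} (since every $\tilde{\theta}^C_{\ell_1}(t)\to\alpha$ and $\tilde{\theta}^C_j(t)\to\alpha$, and the number of summands is uniformly bounded by $k$), and the left-hand side vanishes because $\tilde{\theta}^C_j(t)$ converges. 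What remains forces $m_j^{\infty}(\alpha_i - \alpha) = 0$ along the subsequence where $m_j(t) > 0$, and since $m_j^{\infty} > 0$ we conclude $\alpha = \alpha_i$.

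The main obstacle is the continuous-time analog of the sandwich-theorem step: in the discrete-time case the limit on both sides of Equation \ref{eq:sum-perron} was simply the consensus value $\alpha$, whereas here one must handle $\dot{\tilde{\theta}}^C_j(t)$ in the limit together with a switching $\tilde{A}^C[j,\cdot]$. Rigorously, since $\tilde{\theta}^C_j(t)$ converges to $\alpha$, its derivative tends to zero at least along a subsequence (by the mean value theorem applied across dwell intervals of length $\tau > 0$), which suffices for the final algebraic identity. A secondary subtlety is that joint linking only guarantees $m_j(t) > 0$ infinitely often, so the conclusion should be extracted along the specific subsequence of times $t$ at which the influencing neighbors of $a_j$ are actually present, but this does not affect the identification of the limit value $\alpha$ already determined in the first part of the argument.
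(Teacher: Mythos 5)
Your proposal follows essentially the same route as the paper's proof: restrict to the component $C$, apply Lemma \ref{lemma:exp-product-primitive} and Wolfowitz's lemma to the product of matrix exponentials to get consensus to some $\alpha$, then substitute $\alpha_i$ into Equation \ref{eq:global-orient-linear} for a flocking agent with influencing neighbors and pass to the limit via the sandwich argument to force $\alpha = \alpha_i$. Your added care about why $\dot{\tilde{\theta}}^C_j(t)\to 0$ (via the mean value theorem over dwell intervals) and about $m_j(t)>0$ only holding infinitely often are refinements of steps the paper asserts without justification, not a different approach.
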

\begin{proof}
   According to Wolfowitz's lemma and Lemma \ref{lemma:exp-product-primitive}, there exists a row vector $w$ such that:
   $$\lim_{t \rightarrow \infty}e^{-\tilde{L}_{\sigma(t_\ell)}(t-t_\ell)}\Big(\prod_{i=0}^{\ell-1}e^{-\tilde{L}_{\sigma(t_i)}(\tau_i)}\Big) = 1w$$
   We get that $\lim_{t \rightarrow \infty} \tilde{\theta} ^ C (t) = 1(w\tilde{\theta} ^ C(0)) = \alpha 1$ with $\alpha = w\tilde{\theta} ^ C(0)$. The vector $w$ depends on the switching sequence and cannot be determined a priori. In other words, the group decision is an undetermined quantity in the convex hull of all initial states. Clearly, this means that: $\lim_{t \rightarrow \infty} \dot{\tilde{\theta}} ^ C (t) = 0$.

   Since across each interval $[t_{i_j},t_{i_j+\ell_j})$, $j \geq 1$, the agents which correspond to $C$ are linked together, there exists a flocking agent $a_{j_1} \in A^F$, for which $\lim_{t \rightarrow \infty} m_j(t) > 0$. In other words, after consensus is reached, the flocking agent $a_j$ has at least one influencing agent in its neighborhood.

    Considering Subsection \ref{sec:flocking-model}-model-model, the following holds according to the definition of the Laplacian matrix: $$\dot{\tilde{\theta}}_j ^ C(t+1) = \sum_{\ell_1 \in N_j(t) \cap A^F} \tilde{A}_{\sigma(t)} ^ C[j,\ell_1] (\tilde{\theta}^C_{\ell_1}(t) - \tilde{\theta}^C_j(t)) +$$
    $$+ \sum_{\ell_2 \in N_j(t) \cap A^D} \tilde{A}_{\sigma(t)} ^ C[j,\ell_2] (\tilde{\theta}^C_{\ell_2}(t) - \tilde{\theta}^C_j(t))$$
    Since all the influencing agents are \textbf{constantly} adopting the orientation $\alpha_i$, we have that:
    $$\dot{\tilde{\theta}}_j ^ C (t+1) = \sum_{\ell_1 \in N_j(0) \cap A^F} \tilde{A}_{\sigma(t)} ^ C[j,\ell_1] (\tilde{\theta}^C_{\ell_1}(t) - \tilde{\theta}^C_j(t)) +$$
    $$+ \sum_{\ell_2 \in N_j(0) \cap A^D} \tilde{A}_{\sigma(t)} ^ C[j,\ell_2] (\alpha_i - \tilde{\theta}^C_j(t)) =$$ 
    \begin{equation}
    \label{eq:laplacian-switching-sum}
        = \sum_{\ell_1 \in N_j(t) \cap A^F} \tilde{A}_{\sigma(t)} ^ C[j,\ell_1] \tilde{\theta}^C_{\ell_1}(t) - k_j(t) \tilde{\theta}^C_j(t) + m_j(t)(\alpha_i - \tilde{\theta}^C_j(t))
    \end{equation}

    Since $N_j(t) \cap A^F \subseteq A^F$, $\tilde{A}_{\sigma(t)} ^ C[j,\ell_1] \leq 1$ and $k_j(t) \leq k$, we have that:
    $$\sum_{\ell_1 \in N_j(t) \cap A^F} \tilde{A}_{\sigma(t)} ^ C[j,\ell_1] \tilde{\theta}^C_{\ell_1}(t) \leq \sum_{\ell_1 \in A^F} \tilde{\theta}^C_{\ell_1}(t) \quad,\quad k_j(t)\tilde{\theta}^C_j(t) \leq k\tilde{\theta}^C_j(t)$$
    Taking the limit $t \rightarrow \infty$ in both sides of both inequalities results in the following:
    $$\lim_{t \rightarrow \infty} \sum_{\ell_1 \in N_j(t) \cap A^F} \tilde{A}_{\sigma(t)} ^ C[j,\ell_1] \tilde{\theta}^C_{\ell_1}(t) \leq k\alpha \quad,\quad \lim_{t \rightarrow \infty} k_j(t)\tilde{\theta}^C_j(t+1) \leq k\alpha$$
    Subtracting between the second inequality from the first one, we have that:
    $$0 \leq \lim_{t \rightarrow \infty} \Bigg[\sum_{\ell_1 \in N_j(t) \cap A^F} \tilde{A}_{\sigma(t)} ^ C[j,\ell_1] \tilde{\theta}^C_{\ell_1}(t) - k_j(t)\tilde{\theta}^C_j(t+1)\Bigg] \leq 0$$
    According to the "Sandwich Theorem", we have that:
    $$\lim_{t \rightarrow \infty} \Bigg[\sum_{\ell_1 \in N_j(t) \cap A^F} \tilde{A}_{\sigma(t)} ^ C[j,\ell_1] \tilde{\theta}^C_{\ell_1}(t) - k_j(t)\tilde{\theta}^C_j(t+1)\Bigg] = 0$$
    Thus, taking the limit $t \rightarrow \infty$ in both sides of Equation \ref{eq:laplacian-switching-sum} results in the following:
    $$0 = 0 + \lim_{t \rightarrow \infty} m_j(t) (\alpha_i - \alpha)  \Rightarrow \boxed{\alpha = \alpha_i}$$
    where the last transition follows from the fact that $\lim_{t \rightarrow \infty} m_j(t) > 0$.
\end{proof}

\textbf{We shall now consider Equation \ref{eq:global-orient-zero-one}}. As noted in Lemma \ref{lemma:global-orient-zero-one-matrix}, this is a special case of an influencing neighbors graph $\mathcal{G}^*(t)$, for which $I$ is the degree matrix and $\tilde{D}(t) ^ {-1} \tilde{A}(t)$ is the adjacency matrix, i.e., its Laplacian matrix equals to $I - \tilde{D}(t) ^ {-1} \tilde{A}(t)$. Thus, the convergence analysis is identical to the previous analysis in the switching topology case.

\section{The Intersection Points Placement Method}
\label{sec:intersection-method}
Thus far, we have discussed the influence that one or more influencing agents have upon a single component, once they are initially placed in a manner that they can act to do so. In this section, we propose the \textit{Intersection Points Placement Method}, which guarantees that a \textit{single} influencing agent's initial placement is such that it will necessarily influence the entire connected component.

Beforehand, in the following lemma we formally prove the mathematical expression for the locus of all the points in $\mathbb{R}^2$, which lie on this linear line. Throughout the entire proof, we consider Figure \ref{fig:circles-intersection}.
\begin{Figure}
    \captionof{figure}{An illustration for the intersection of two closed discs, utilized in Lemma
    \ref{lemma:discs-intersection}.}
    \label{fig:circles-intersection}
    \centering
    \includegraphics[width=150px,height=150px]{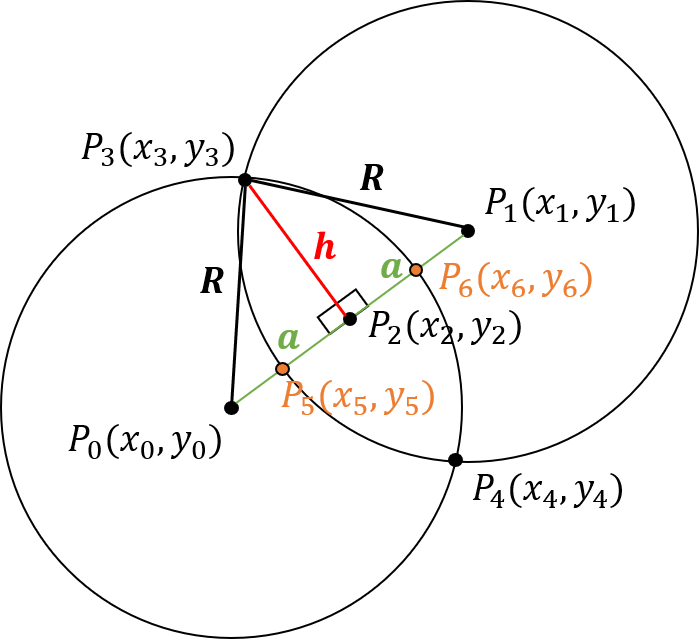}
\end{Figure}

\begin{lemma}
\label{lemma:discs-intersection}
Let us consider a pair of discs $\mathcal{D}_i = \{(x,y)\in \mathbb{R}^2|(x-x_i)^2 + (y-y_i)^2 \leq R^2\}$. We denote the distance between their centers by $d > 0$. Then, $\mathcal{D}_0 \cap \mathcal{D}_1 \neq \emptyset$ if and only if $0 < d \leq 2R$, and:
$$x_{3,4} = \frac{x_0+x_1}{2} \mp \frac{y_0-y_1}{2d}\sqrt{4R^2 - d^2}$$
$$y_{3,4} = \frac{y_0+y_1}{2} \pm \frac{x_0-x_1}{2d}\sqrt{4R^2 - d^2}$$
Consequently, the locus $\mathcal{L} \subseteq \mathcal{D}_0 \cap \mathcal{D}_1$ of all points $(x,y) \in \mathbb{R}^2$, which lie on the linear line connecting the intersection points $P_3,P_4$, is given by the following:
\begin{equation}
    \label{eq:locus-intersection}
    \mathcal{L} = \Big\{(x,y) \in \mathbb{R}^2 \Big| y = \frac{x_1-x_0}{y_0-y_1}x + \frac{x_1^2 - x_0^2 + y_0^2 - y_1^2}{2(x_0-x_1)} \wedge x \in [x_3,x_4]\Big\}
\end{equation}
\end{lemma}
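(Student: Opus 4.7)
The plan is to proceed in three stages, splitting the lemma into the existence criterion $0<d\le 2R$, the closed form for the intersection points $P_3,P_4$, and the description of the chord $\mathcal{L}$.

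First, for the existence criterion I would use the triangle inequality: if some $p\in \mathcal{D}_0\cap\mathcal{D}_1$, then $d=\|(x_0,y_0)-(x_1,y_1)\|\le \|(x_0,y_0)-p\|+\|p-(x_1,y_1)\|\le 2R$, giving one direction. Conversely, if $0<d\le 2R$, the midpoint of the two centers is at distance $d/2\le R$ from each center, so it lies in $\mathcal{D}_0\cap\mathcal{D}_1$.

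Second, I would compute the intersection points by solving the system $(x-x_i)^2+(y-y_i)^2=R^2$ for $i=0,1$. Subtracting the two circle equations cancels the quadratic terms and yields a linear relation in $x$ and $y$, which is the radical axis of the two circles and hence precisely the line through $P_3$ and $P_4$. Rearranging this linear relation gives $y=\frac{x_1-x_0}{y_0-y_1}x+\frac{x_1^2-x_0^2+y_0^2-y_1^2}{2(x_0-x_1)}$, under the non-degeneracy assumption $x_0\neq x_1$ and $y_0\neq y_1$; the axis-aligned cases follow by a trivial rotation.

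Third, I would substitute this expression for $y$ back into $(x-x_0)^2+(y-y_0)^2=R^2$ to obtain a quadratic in $x$. Solving it and combining the discriminant with $d^2=(x_0-x_1)^2+(y_0-y_1)^2$ should collapse to the symmetric form $x_{3,4}=\frac{x_0+x_1}{2}\mp\frac{y_0-y_1}{2d}\sqrt{4R^2-d^2}$, with the matching $y_{3,4}$ obtained by plugging back. Since each disc is convex and $P_3,P_4\in \mathcal{D}_0\cap\mathcal{D}_1$, the entire chord between them lies in the intersection, so $\mathcal{L}\subseteq \mathcal{D}_0\cap\mathcal{D}_1$. The description of $\mathcal{L}$ in the statement is then just the equation of the radical-axis line with $x$ restricted to the interval $[x_3,x_4]$.

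The main obstacle will be the algebraic bookkeeping in the third step: carefully expanding the quadratic and repeatedly substituting $(x_0-x_1)^2+(y_0-y_1)^2=d^2$ to reach the clean factored form $\sqrt{4R^2-d^2}$, without sign errors in the $\mp$ and $\pm$ placements. A secondary concern is the degenerate case $y_0=y_1$ (or $x_0=x_1$), where the slope-intercept form as written divides by zero even though the geometric picture is perfectly well defined; a brief note reducing to this case by an orthogonal change of coordinates should suffice.
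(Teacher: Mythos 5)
Your proposal is correct in approach but follows a genuinely different route from the paper. The paper's proof is synthetic--geometric: it forms the isosceles triangle $\bigtriangleup P_0P_1P_3$, uses the perpendicular bisector and the Pythagorean theorem to get the half-chord length $h^2=R^2-\tfrac{d^2}{4}$, and then grinds through the quadratic formula in the variable $z=x_0-x_3$, extracting the condition $0<d\le 2R$ from the sign of the discriminant and the line $\mathcal{L}$ from the slope $\tfrac{y_3-y_4}{x_3-x_4}$ together with the midpoint $P_2$. You instead (i) prove the ``if and only if'' directly by the triangle inequality in one direction and the midpoint-of-centers witness in the other, and (ii) obtain the line through $P_3,P_4$ as the radical axis by subtracting the two circle equations before solving any quadratic. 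Your version of (i) is actually cleaner and more faithful to the statement, which concerns the closed \emph{discs}: the paper's discriminant argument establishes when the boundary \emph{circles} meet (the two conditions coincide here only because the radii are equal), whereas your argument addresses the discs directly. Your convexity remark giving $\mathcal{L}\subseteq\mathcal{D}_0\cap\mathcal{D}_1$ is also a step the paper asserts without justification. The algebraic cost of the two routes in the final substitution step is comparable.

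One caveat: in your second stage you assert that rearranging the radical-axis relation yields the printed intercept $\tfrac{x_1^2-x_0^2+y_0^2-y_1^2}{2(x_0-x_1)}$. It does not. Subtracting the circle equations gives $2x(x_1-x_0)+2y(y_1-y_0)+x_0^2-x_1^2+y_0^2-y_1^2=0$, i.e.\ intercept $\tfrac{x_0^2-x_1^2+y_0^2-y_1^2}{2(y_0-y_1)}$; the slope agrees with the statement but the constant term does not (check $P_0=(0,0)$, $P_1=(1,1)$: the chord line is $y=-x+1$, while the printed formula gives $y=-x$, which misses the midpoint of the centers). This is a sign/denominator error in the lemma and in the paper's own closing computation of $c$, not a defect of your method --- indeed your approach exposes it --- but you should carry the subtraction through explicitly and state the corrected intercept rather than asserting agreement with the printed formula.
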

\begin{proof}
   We denote by $P_i (x_i,y_i)$ the center of the disc $\mathcal{D}_i$.
   
   Considering the isosceles triangle $\bigtriangleup P_0 P_1 P_3$, the perpendicular bisector $P_2 P_3$ of the base is also its median, that is $a := |P_0 P_2| = |P_1 P_2|$ and $P_2(\frac{x_0+x_1}{2},\frac{y_0+y_1}{2})$. Clearly:
   $|P_0 P_1| = |P_0 P_2| + |P_1 P_2| \Rightarrow d = 2a \Rightarrow a = \frac{d}{2}$. According to the Pythagorean theorem in the right-angled triangle $\bigtriangleup P_0 P_2 P_3$, we have that: $a^2 + h^2 = R^2 \Rightarrow h^2 = R^2-a^2 \Rightarrow h^2 = R^2-\frac{d^2}{4}$.

   Considering the distance between the points $P_2$ and $P_3$, we have that:
   \begin{equation}
       \label{eq:dist-center-p3}
       \Big(x_3 - \frac{x_0+x_1}{2}\Big)^2 + \Big(y_3 - \frac{y_0+y_1}{2}\Big)^2 = h^2
   \end{equation}
   We infer from Equation \ref{eq:dist-center-p3} that:
   $$\Big(x_3 - x_0 + \frac{x_0-x_1}{2}\Big)^2 + \Big(y_3 - y_0 + \frac{y_0-y_1}{2}\Big)^2 = h^2 \Rightarrow$$
   $$\Rightarrow (x_3 - x_0)^2 + 2(x_3 - x_0)\frac{x_0-x_1}{2} + \frac{(x_0-x_1)^2}{4} +$$
   $$+ (y_3 - y_0)^2 + 2(y_3 - y_0)\frac{y_0-y_1}{2} + \frac{(y_0-y_1)^2}{4} = h^2 \Rightarrow$$
   $$\Rightarrow (x_3 - x_0)^2 + (y_3 - y_0)^2 + \frac{(x_0-x_1)^2 + (y_0-y_1)^2}{4} +$$
   $$+ (x_3 - x_0)(x_0-x_1) + (y_3 - y_0)(y_0-y_1) = h^2$$
   Noting that $(x_3 - x_0)^2 + (y_3 - y_0)^2 = R^2$, $(x_0-x_1)^2 + (y_0-y_1)^2 = d^2$ and substituting the expression for $h^2$, we have that:
   $$(x_3 - x_0)(x_0-x_1) + y_3(y_0-y_1) - y_0(y_0-y_1) + R^2 + \frac{d^2}{4}= R^2 - \frac{d^2}{4} \Rightarrow$$
   \begin{equation}
       \label{eq:y-3-4}
       \Rightarrow y_3 = \frac{x_0-x_1}{y_0-y_1}(x_0-x_3) + y_0 -\frac{d^2}{2(y_0-y_1)}
   \end{equation}
   Since the point $P_3$ lies on the exterior of the disc $\mathcal{C}_0$ (while denoting $z := x_0-x_3$):
   $$(x_3-x_0)^2 + (y_3-y_0)^2=R^2 \Rightarrow$$
   $$\Rightarrow z^2 + \Bigg(\frac{x_0-x_1}{y_0-y_1}z -\frac{d^2}{2(y_0-y_1)}\Bigg)^2 = R^2 \Rightarrow$$
   $$\Rightarrow z^2 + \Bigg(\frac{x_0-x_1}{y_0-y_1}\Bigg)^2 z^2 -2\frac{d^2(x_0-x_1)}{2(y_0-y_1)^2}z + \frac{d^4}{4(y_0-y_1)^2} = R^2 \Rightarrow$$
   $$\Rightarrow \frac{(x_0-x_1)^2 + (y_0-y_1)^2}{(y_0-y_1)^2}z^2 - \frac{d^2(x_0-x_1)}{(y_0-y_1)^2}z + \frac{d^4}{4(y_0-y_1)^2} -R^2 = 0$$
   Noting that $(x_0-x_1)^2 + (y_0-y_1)^2 = d^2$:
   \begin{equation}
       \label{eq:quadratic-intersect}
       \frac{d^2}{(y_0-y_1)^2}z^2 - \frac{d^2(x_0-x_1)}{(y_0-y_1)^2}z + \frac{d^4}{4(y_0-y_1)^2} -R^2 = 0
   \end{equation}
   Considering the quadratic formula for finding the solutions for this quadratic equation, its discriminant is as follows:
   $$\Delta = \Bigg(\frac{d^2(x_0-x_1)}{(y_0-y_1)^2}\Bigg)^2 - 4 \cdot \frac{d^2}{(y_0-y_1)^2} \cdot \Bigg[\frac{d^4}{4(y_0-y_1)^2} -R^2\Bigg] = $$
   $$= \frac{d^4}{(y_0-y_1)^4}[(x_0-x_1)^2 - d^2] + \frac{4 d^2 R^2}{(y_0-y_1)^2}$$
   Since $(x_0-x_1)^2 + (y_0-y_1)^2 = d^2$, we have that:
   $$\Delta = \frac{d^2}{(y_0-y_1)^2}(4R^2 - d^2)$$
   Clearly, a solution exists if and only if: $4R^2 - d^2 \geq 0 \Rightarrow -2R \leq d \leq 2R$. Since $d>0$, we have that: $\boxed{0 < d \leq 2R}$. Hence, the solution to the quadratic equation \ref{eq:quadratic-intersect} are as follows:
   $$z_{1,2} = \frac{\frac{d^2(x_0-x_1)}{(y_0-y_1)^2} \pm \frac{d}{y_0-y_1}\sqrt{4R^2 - d^2}}{2\frac{d^2}{(y_0-y_1)^2}} =$$
   $$= \frac{x_0-x_1}{2} \pm \frac{y_0-y_1}{2d}\sqrt{4R^2 - d^2}$$
   From symmetry, when denoting $z' := x_0 - x_4$ and substituting $z$ by $z'$ in Equation \ref{eq:quadratic-intersect}, it still holds. Thus, without loss of generality, we have that:
   $$\boxed{x_{3,4} = \frac{x_0+x_1}{2} \mp \frac{y_0-y_1}{2d}\sqrt{4R^2 - d^2}}$$
   Substituting this expression to Equation \ref{eq:y-3-4} yields the following:
   $$y_{3,4} = \frac{x_0-x_1}{y_0-y_1}\Bigg(\frac{x_0-x_1}{2} \pm \frac{y_0-y_1}{2d}\sqrt{4R^2 - d^2}\Bigg) + y_0 -\frac{d^2}{2(y_0-y_1)}=$$
   $$= \frac{(x_0-x_1)^2}{2(y_0-y_1)} \pm \frac{x_0-x_1}{2d}\sqrt{4R^2 - d^2} + y_0 -\frac{d^2}{2(y_0-y_1)}$$
   Noting that $(x_0-x_1)^2 + (y_0-y_1)^2 = d^2$, we have that:
   $$\boxed{y_{3,4} = \frac{y_0+y_1}{2} \pm \frac{x_0-x_1}{2d}\sqrt{4R^2 - d^2}}$$
   Denoting the equation for the desired locus by $y = bx+c$, the slope $b$ equals to:
   \begin{equation}
       \label{eq:slope}
       b = \frac{y_3-y_4}{x_3-x_4} = \frac{\frac{x_0-x_1}{d}\sqrt{4R^2 - d^2}}{-\frac{y_0-y_1}{d}\sqrt{4R^2 - d^2}}=\frac{x_1-x_0}{y_0-y_1}
   \end{equation}
   Given that the point $P_2$ lies on this line, we substitute its coordinates and the expression for $b$ to the locus' equation:
   $$\frac{y_0+y_1}{2} = \frac{x_1-x_0}{y_0-y_1} \cdot \frac{x_0+x_1}{2} + c \Rightarrow$$
   $$\Rightarrow c = \frac{(x_1+x_0)(x_1-x_0) - (y_0+y_1)(y_0-y_1)}{2(x_0-x_1)} =$$
   \begin{equation}
       \label{eq:locus-const}
       = \frac{x_1^2 - x_0^2 + y_0^2 - y_1^2}{2(x_0-x_1)}
   \end{equation}
   Thus, the desired locus is given by the following equation:
   \begin{equation}
       \label{eq:locus}
       \boxed{\mathcal{L} = \Big\{(x,y) \in \mathbb{R}^2 \Big| y = \frac{x_1-x_0}{y_0-y_1}x + \frac{x_1^2 - x_0^2 + y_0^2 - y_1^2}{2(x_0-x_1)} \wedge x \in [x_3,x_4]\Big\}}
   \end{equation}
\end{proof}

The \textit{Intersection Points Placement Method} is stated formally in Algorithm \ref{alg:intersection-method}. The algorithm receives two arguments - a connected component $C$ and an influencing agent $a_i$. Regarding the first line of the algorithm, if we were to assume by contradiction that for each pair $u,v \in C$ it holds that $||p_u(0)-p_v(0)|| > 2R$, then $a_u \notin N_v(t)$ is also satisfied - which is a contradiction to $C$ being a connected component. Therefore, such a pair always exists. According to Lemma \ref{lemma:discs-intersection}, the influencing agent is initially placed in the intersection of both flocking agents' neighborhoods. Thus, it will influence them both, and, consequently, the entire connected component. 

\begin{Algorithm}
    \label{alg:intersection-method}
    \captionof{algorithm}{\textbf{Intersection-Points-Placement-Method}($a_i$, $C$)}
    \centering
  \fbox{\begin{minipage}{\columnwidth}
	\begin{algorithmic}[1]
	    \State{Let $u,v \in C$ be a random pair, which satisfies $||p_u(0)-p_v(0)|| \leq 2R$.}
    	\State{The influencing agent $a_i$ is initially placed randomly along the linear line, connecting the intersection points of both flocking agents' neighborhoods.}
	\end{algorithmic}
  \end{minipage}}
\end{Algorithm}

In the worst case, in which the entire flock is strongly connected, the first line will run at most $\binom{k}{2}=\frac{k(k-1)}{2}$ times. The second line is performed with a time complexity of $O(1)$. Hence, the time complexity for Algorithm \ref{alg:intersection-method} is $O\big(\frac{k(k-1)}{2}\big)=O(k^2)$.

\section{Experiments}
\label{sec:experiment}
In this section we describe our experiments, testing the behaviour of the influencing agents, which should eventually lead to a spatial consensus in the observed flock. 
\subsection{Simulation Environment}
\label{sec:sim-env}
We situate our research within the Flockers domain of the MASON simulator \cite{luke2005mason}. This simulator encodes all the dynamics as they are described in the previous sections. Each agent points and moves in the direction of its current velocity vector. We made a few alterations to the the MASON Flockers domain, such that they will fit our needs. It was initially altered to also contain \textit{influencing agents}, which follow our predefined and fully controlled behaviours. Another modification was making the \textit{flocking agents} update their orientation according to either Equation \ref{eq:orientation-perron} or Equation \ref{eq:orientation-calcdiff}. For more realistic implications of the simulator, its toroidal feature was removed. That is, if an agent moves off of an edge of our domain, it will not reappear and will remain \textit{"lost"} forever.

\subsection{Placement Methods}
\label{sec:placment}
Genter \cite{genter2017fly} introduced a graph placement method, whose aim is to influence as many flocking agents as possible. They presented this method as an instance of the geometric set cover problem, thus making this process \textbf{NP-hard}. Accordingly, determining the "best" initial placements of the influencing agents is \textbf{NP-hard}. Due to its computational expense, we consider a \textit{random} placement of such agents at the area occupied by the flocking agents for the sake of running higher scale experiments.

For guaranteeing that the flocking neighbors graph does indeed consists of a single connected component, we consider two methods. For isolating the influence of a single influencing agent's initial position on a flock, we first consider a \textit{grid placement method} \cite{genter2015determining}. The flocking agents are initially placed at predefined, well-spaced, gridded positions. Since a grid of size $\ell \in \mathbb{N}$ can consist at most $\ell^2$ flocking agents, we utilize the smallest grid for which $k \leq \ell^2$ is satisfied. Each pair of adjacent flocking agents are initially placed to be within $R-1$ from each other.

The second one is a \textit{random placement method}, in which we initially randomly place a flocking agent $a_0$, where each of its coordinates are within a given interval. Afterwards, each successive flocking agent $a_i$ is randomly inserted to a position, which is within a radius of $R$ of the flocking agent $a_{i-1}$ ($1 \leq i \leq k-1$). Thus, the resulting flock constitute a single connected component. 

Throughout both algorithms, we calculate the maximal and minimal coordinates at which flocking agents are located with respect to both the $x$-axis and the $y$-axis, which we denote by $x_{min},x_{max},y_{min},y_{max}$. Those values are then extended by the visibility radius ($R$) and each influencing agent is initially placed randomly within the rectangular box formed by $x_{min}-R,x_{max}+R,y_{min}-R,y_{max}+R$ (which we denote by $AREA_{Flock}^+$). When initially placing a \textit{single} influencing agent randomly, we consider the rectangular box formed by $x_{min},x_{max},y_{min},y_{max}$ (which we denote by $AREA_{Flock}$), for the sake of increasing the probability at which it will indeed influence the flock.

\subsection{"Lost" Agents}
\label{sec:lost}
Genter \cite{genter2017fly} considered cases in which some flocking agents may become indefinitely separated from a flock with a switching topology. Hence, we formally define a \textit{"lost"} flocking agent as follows:
\begin{definition}
\label{def:lost-agent}
(\textit{"Lost" Flocking Agent})
A flocking agent $a_i$, aiming to converge to the orientation $\alpha$ (either willingly or not), is considered \textit{"lost"} if two criteria hold:
\begin{enumerate}
    \item There exists a subset of flocking agents with cardinality $0 < k' < k$ and all orientations are within $\epsilon$ of $\alpha$ for more than $T$ time steps.
    \item Given that $t^*$ is the time step at which the subset converged to $\alpha$: $|\theta_i(t^*) - \alpha|>\epsilon$. That is, the agent $a_i$ did not converge to $\alpha$ by the time the subset did converge.
\end{enumerate}
The entire flock is considered \textit{"lost"} if after $T_{Flock}$ time steps no flocking agents are within $\epsilon$ of $\alpha$. Consequently, the execution terminates, and is refered to as \textit{"totally lossy"}. Moreover, an execution containing a "lost" flocking agent is referred to as a \textit{"lossy"}.
\end{definition}

Regarding Genter \cite{genter2017fly}, we set $T = 200$ and $T_{Flock} = 2,800$.

\subsection{Experimental Setup}
\label{sec:exp-set}
The baseline experimental settings for variables are given in Table \ref{table:vars-def}. The variables' values are as they are presented in Table \ref{table:vars-def}, unless stated otherwise.

\begin{Table}
    \label{table:vars-def}
    \captionof{table}{Definition of Variables}
    \centering
  \fbox{\begin{minipage}{0.84\columnwidth}
      \begin{tabular}{|c|c|}
      \hline
        \textit{Variable} & \textit{Value} \\ \hline
        Domain Height & 300 \\\hline
        Domain Width & 300 \\\hline
        The visibility radius of each agent ($R$) & 10 \\\hline
        Velocity ($v_i$) & 0.2 \\\hline
      \end{tabular}
  \end{minipage}}
\end{Table}

Flocking agents are initially placed with random initial headings throughout the domain. In our experiments, we conclude that the flock has converged to an orientation $\alpha$ when every agent (that is not an influencing agent) is facing within 0.01 radians of $\alpha$. Other stopping criteria, such as when $90\%$ of the agents are facing within 0.01 radians of $\alpha$, could have also been used. Moreover, due to the involvement of randomness in our simulations, each point in all graphs corresponds to the average over $100$ consecutive executions.

Regarding the \textit{random placement method} of the flocking agents, there might be executions in which an influencing agent is initially placed in a manner that it does \textit{not} influence any flocking agent at all. In accordance, in cases where a single influencing agent is incorporated, convergence can \textbf{never} reached. Such executions are due to the randomness of the placement and we would like to only consider placing the influencing agents as we see fit. That is, we are motivated to initially place them such that they influence the flock, and therefore merely consider the executions in which convergence does occur. Thus, we utilize the \textit{Intersection Points placement method} proposed in Algorithm \ref{alg:intersection-method}, for the sake of ensuring that we indeed influence the connected component when considering the \textit{random placement method} for the flocking agents.

Regarding Section \ref{sec:influencing-connected-components}, influencing agents with a \textit{Face Desired Orientation behaviour} are employed. For each one of the presented graphs, we consider two executions, where each one corresponds to either the \textit{grid placement method} or the \textit{random placement method} described earlier.

\subsection{Experimental Results}
\label{sec:consensus-experimental}
In this subsection, we present the experimental results regarding consensus among a \textit{single} connected component, for both the fixed topology case and the switching topology case. Let us consider a flock willing to converge to a desired orientation $\alpha = \pi$.

\subsubsection{Fixed Topology}
\label{sec:fixed-experimental}
In this subsection, we assume that the resulting flocking neighbors graph is \textbf{fixed} and \textbf{strongly connected}. 

We first consider a \textit{single} influencing agent $a_k$ and the influence of its initial placement $p_k(0)$ on the number of time steps required until convergence. Figure \ref{fig:flocking} considers an increasing number of flocking agents and a \textit{single} influencing agent $a_k$, where all flocking agents are initially placed according to either the \textit{grid placement method} or the \textit{random placement method} described earlier. 

It was observed that, when the influencing agent was initially placed \textit{at the border} of $AREA_{Flock}$, the time steps required until the convergence was reached was \textit{higher} than the time steps required when it was initially placed \textit{at the interior} of $AREA_{Flock}$. For instance, regarding the \textit{grid placement method}, when utilizing $50$ flocking agents, we observed that there was an execution, for which it took only $7535$ time steps until convergence, when $p_k(0)$ was the flock's \textit{interior}, but that there was an execution, for which it took $34,838$ time steps, when $p_k(0)$ was at the flock's \textit{border}. Theoretically, a possible explanation could be that the number of \textbf{directly} influenced flocking agents stems from the influencing agent's initial placement.

Furthermore, regarding the \textit{random placement method}, we noted that the number of time steps was influenced by an additional factor - \textbf{the density} at which the flocking agents were initially placed, rather than just being influenced by $p_k(0)$. Given that the flocking agents were placed in a \textit{high} density, we observed that the convergence was indeed \textit{faster} than the one reached in a \textit{lower} density. For instance, when utilizing $50$ flocking agents, we observed that there was an execution, for which it took only $738$ time steps until convergence when their density was quite \textit{high}, but there was an execution, for which it took $63,560$ time steps when the density was quite \textit{low}. Theoretically, the \textit{higher} the density at which they are initially placed, the \textit{higher} the number of flocking agents lying within each flocking agent's neighborhood, leading to a greater influence upon it.

We now consider $k=100$ flocking agents and investigate the influence of the number of influencing agents $m$ upon the number of time steps required until convergence. Figure \ref{fig:influencing} considers an increasing number of influencing agents and $k=100$ flocking agents, where all flocking agents are initially placed according to either the \textit{grid placement method} or the \textit{random placement method} described earlier. 

We noted that the number of time steps was influenced by their \textbf{coverage} of the area occupied by the flocking agents. For instance, regarding the \textit{random placement method}, when utilizing $m=10$ influencing agents which are evenly spread across $AREA_{Flock}$, we observed that there was an execution, for which it took $606$ time steps until convergence, but there was an execution, for which took $29,581$ time steps when they were quite crowded around a specific set of flocking agents. Theoretically, as long as they are evenly spread across this area, they influence \textit{more} flocking agents directly, resulting in \textit{less} time steps required until convergence.

Furthermore, it can be observed in Figure \ref{fig:influencing} that when the number of influencing agents $m$ reaches closer to the number of flocking agents $k=100$, the number of time steps required for convergence did not change drastically. For instance, regarding the \textit{grid placement method}, when utilizing $80,90$ influencing agents, we observed that it took $612.7,538.95$ time steps on average until convergence (respectively).

Regarding the \textit{random placement method}, we noted that there existed executions, in which several influencing agents weren't initially placed at the neighborhood of any flocking agent. Thus, no flocking agent was influenced, either directly or indirectly, by such influencing agents. Consequently, the number of time steps required for convergence was \textit{higher} in such cases. Theoretically, due to the fact that the area occupied by the flocking agents when regarding the random placement method is \textit{smaller}, they are scattered across the rectangular box formed by $x_{min}-R,x_{max}+R,y_{min}-R,y_{max}+R$ at a \textit{higher} density. Hence, the flocking agents' coverage of the mentioned box becomes \textit{lower}, and therefore there are more positions, at which placing an influencing agent won't influence any of the flocking agents.

\subsubsection{Switching Topology}
\label{sec:switching-experimental}

In this subsection, we assume that the resulting flocking neighbors graph is \textbf{switching} and that $\mathcal{G}(0)$ is \textbf{strongly connected}.

We first consider a \textit{single} influencing agent $a_k$ and the influence of its initial placement $p_k(0)$ on the average number of \textit{"lost"} flocking agents (Figure \ref{fig:flocking-lost}) and the number of execution in which \textit{all} flocking agents are \textit{"lost"} (Figure \ref{fig:flocking-lost-executions}). All flocking agents are initially placed according to either the \textit{grid placement method} or the \textit{random placement method} described earlier. Accordingly, we infer that the \textit{grid placement method} performs \textit{better} than the \textit{random placement method} (on average).
We now consider $k=50$ flocking agents and investigate the influence of the number of influencing agents $m$ upon the average number of \textit{"lost"} flocking agents (Figure \ref{fig:influencing-lost}) and the number of time steps until all the other agents converged (Figure \ref{fig:influencing-lost-steps}). All flocking agents are initially placed according to either the \textit{grid placement method} or the \textit{random placement method} described earlier. 

It can be observed by Figure \ref{fig:switching-influencing} that a result, which is \textit{opposite} to the one inferred from Figure \ref{fig:switching}, is achieved. Indeed, when considering the \textit{grid placement method}, there were \textbf{0 "totally lossy" executions} for each value of $m$, although they were \textit{all} "lossy" ones. In contrast, when considering the \textit{random placement method}, there were \textbf{8 "totally lossy" executions} for $m=10$, but for $m=60,70,80,90$ there were \textbf{5, 2, 9, 4 executions in which \textit{all} flocking agents converged to the desired orientation} (respectively). This is due to both the \textbf{density} and the \textbf{coverage} factors mentioned in Subsection \ref{sec:fixed-experimental}. That is, the influencing agents' area of influence becomes \textit{smaller} when the \textit{random placement method} is utilized. Thus, the density at which they are initially placed  and their coverage of the flock are \textit{higher} as $m$'s value rises, therefore making the \textit{random placement method}'s performance more superior.
\end{multicols}

\begin{figure}[h!]
    \begin{tabular}{cc}
        \begin{subfigure}[b]{0.5\textwidth}
            \centering
            \begin{tikzpicture}[yscale=0.54,xscale=0.54]
                \begin{axis}[
                    xlabel={Number of Flocking Agents ($k$)},
                    ylabel={Number of Time Steps},
                    xmin=0, xmax=60,
                    ymin=0, ymax=20000,
                    xtick={0,20,40,60},
                    ytick={0,2000,4000,6000,8000,10000,12000,14000,16000,18000,20000},
                    legend pos=north west,
                    ymajorgrids=true,
                    grid style=dashed,
                ]
                
                \addplot[
                    color=blue,
                    mark=square,
                    ]
                    coordinates {
                    (10,638.88)(20,2078.14)(30,5364.04)(40,9327.34)(50,18365.39)
                    };
                
                \addplot[
                    color=red,
                    mark=square,
                    ]
                    coordinates {
                    (10,457.75)(20,1773.18)(30,3458.83)(40,6996.94)(50,11815.15)
                    };
                
                \legend{Grid, Random}
                
                \end{axis}
                \end{tikzpicture}
                \caption{A Single influencing Agent's Initial Placement}
                \label{fig:flocking}
        \end{subfigure}
        
        &
        
        \begin{subfigure}[b]{0.5\textwidth}
            \centering
                \begin{tikzpicture}[yscale=0.54,xscale=0.54]
                \begin{axis}[
                    xlabel={Number of influencing Agents ($m$)},
                    ylabel={Number of Time Steps},
                    xmin=0, xmax=100,
                    ymin=0, ymax=7000,
                    xtick={0,20,40,60,80,100},
                    ytick={0,1000,2000,3000,4000,5000,6000,7000},
                    legend pos=north east,
                    ymajorgrids=true,
                    grid style=dashed,
                ]
                
                \addplot[
                    color=blue,
                    mark=square,
                    ]
                    coordinates {
                    (10,5571.17)(20,2715.41)(30,1767.76)(40,1358.31)(50,1044.05)(60,839.02)(70,727.12)(80,612.7)(90,538.95)
                    };
                
                \addplot[
                    color=red,
                    mark=square,
                    ]
                    coordinates {
                    (10,6517.03)(20,3796.77)(30,1976.81)(40,1289.08)(50,843.96)(60,849.62)(70,538.57)(80,489.1)(90,538.92)
                    };
                
                \legend{Grid, Random}
                
                \end{axis}
                \end{tikzpicture}
                \caption{Increasing Number of influencing Agents - $k=100$}
                \label{fig:influencing}
        \end{subfigure}
        
    \end{tabular}
    \caption{Figure \ref{fig:flocking} shows a comparison of the experimental results when considering an increasing number of flocking agents and a \textit{single} influencing agent. Figure \ref{fig:influencing} shows a comparison of the experimental results when considering an increasing number of influencing agents and $k=100$ flocking agents.}
    \label{fig:fixed}
\end{figure}
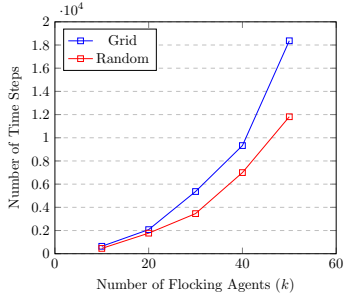
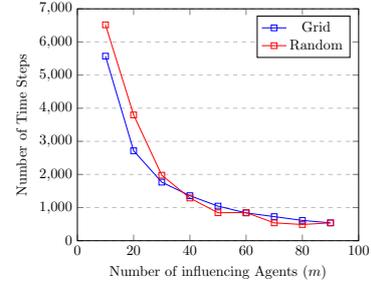 

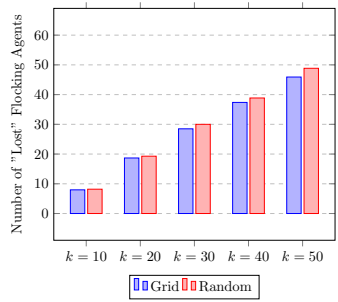
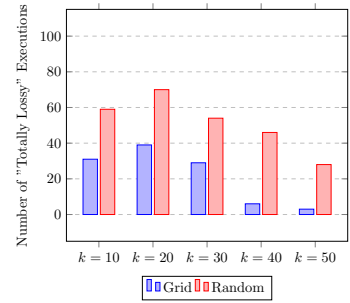
\begin{figure}[h!]
    \begin{tabular}{cc}
        \begin{subfigure}[b]{0.5\textwidth}
            \centering
                \begin{tikzpicture}[yscale=0.54,xscale=0.54]
                \begin{axis}[
                    ybar ,
                        ymin=0, 
                        ymax=60,
                        ytick={0,10,20,30,40,50,60},
                        enlargelimits=0.15,
                        legend style={at={(0.5,-0.15)},
                        anchor=north,legend columns=-1},
                        ylabel={Number of "Lost" Flocking Agents},
                        symbolic x coords={$k=10$,$k=20$,$k=30$,$k=40$,$k=50$},
                        xtick=data,
                        nodes near coords align={vertical},
                        ymajorgrids=true,
                        grid style=dashed,
                        ]
                    \addplot coordinates {($k=10$,7.96) ($k=20$,18.68) ($k=30$,28.49) ($k=40$,37.37) ($k=50$,45.91)};
                    \addplot coordinates {($k=10$,8.18) ($k=20$,19.28) ($k=30$,30) ($k=40$,38.86) ($k=50$,48.85)};
                    \legend{Grid, Random}
                \end{axis}
                \end{tikzpicture}
            \caption{Average \# "Lost" Agents}
            \label{fig:flocking-lost}
        \end{subfigure}
        & 
        \begin{subfigure}[b]{0.5\textwidth}
            \centering
                \begin{tikzpicture}[yscale=0.54,xscale=0.54]
                \begin{axis}[
                        ybar ,
                        ymin=0, 
                        ymax=100,
                        ytick={0,20,40,60,80,100},
                        enlargelimits=0.15,
                        legend style={at={(0.5,-0.15)},
                        anchor=north,legend columns=-1},
                        ylabel={Number of "Totally Lossy" Executions},
                        symbolic x coords={$k=10$,$k=20$,$k=30$,$k=40$,$k=50$},
                        xtick=data,
                        nodes near coords align={vertical},
                        ymajorgrids=true,
                        grid style=dashed,
                        ]
                    \addplot coordinates {($k=10$,31) ($k=20$,39) ($k=30$,29) ($k=40$,6) ($k=50$,3)};
                    \addplot coordinates {($k=10$,59) ($k=20$,70) ($k=30$,54) ($k=40$,46) ($k=50$,28)};
                    \legend{Grid, Random}
                \end{axis}
                
                \end{tikzpicture}
            \caption{\# "Totally Lossy" Executions}
            \label{fig:flocking-lost-executions}
        \end{subfigure}
        
    \end{tabular}
    \caption{Those graphs consider a \textit{single} influencing agent $a_k$ and the influence of its initial placement $p_k(0)$ on the average number of \textit{"lost"} flocking agents (Figure \ref{fig:flocking-lost}) and the number of execution in which \textit{all} flocking agents are lost (Figure \ref{fig:flocking-lost-executions}).}
    \label{fig:switching}
\end{figure} 

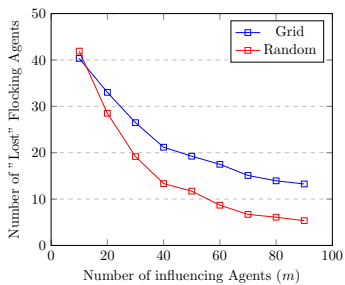
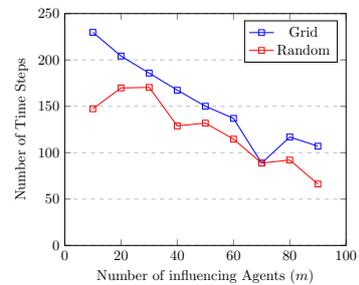
\begin{figure}[h!]
    \begin{tabular}{cc}
        \begin{subfigure}[b]{0.5\textwidth}
            \centering
            \begin{tikzpicture}[yscale=0.54,xscale=0.54]
                \begin{axis}[
                    xlabel={Number of influencing Agents ($m$)},
                    ylabel={Number of "Lost" Flocking Agents},
                    xmin=0, xmax=100,
                    ymin=0, ymax=50,
                    xtick={0,20,40,60,80,100},
                    ytick={0,10,30,20,40,50},
                    legend pos=north east,
                    ymajorgrids=true,
                    grid style=dashed,
                ]
                
                \addplot[
                    color=blue,
                    mark=square,
                    ]
                    coordinates {                    (10,40.34)(20,33.02)(30,26.48)(40,21.18)(50,19.25)(60,17.5)(70,15.1)(80,13.94)(90,13.26)
                    };
                
                \addplot[
                    color=red,
                    mark=square,
                    ]
                    coordinates {                    (10,41.91)(20,28.5)(30,19.17)(40,13.35)(50,11.71)(60,8.68)(70,6.7)(80,6.09)(90,5.35)
                    };
                
                \legend{Grid, Random}
                
                \end{axis}
                \end{tikzpicture}
            \caption{Average \# "Lost" Agents}
            \label{fig:influencing-lost}
        \end{subfigure}
        & 
        \begin{subfigure}[b]{0.5\textwidth}
            \centering
                \begin{tikzpicture}[yscale=0.54,xscale=0.54]
                \begin{axis}[
                    xlabel={Number of influencing Agents ($m$)},
                    ylabel={Number of Time Steps},
                    xmin=0, xmax=100,
                    ymin=0, ymax=250,
                    xtick={0,20,40,60,80,100},
                    ytick={0,50,100,150,200,250},
                    legend pos=north east,
                    ymajorgrids=true,
                    grid style=dashed,
                ]
                
                \addplot[
                    color=blue,
                    mark=square,
                    ]
                    coordinates {                    (10,229.77)(20,204.18)(30,185.85)(40,167.39)(50,150.14)(60,136.99)(70,88.82)(80,116.89)(90,107.05)
                    };
                
                \addplot[
                    color=red,
                    mark=square,
                    ]
                    coordinates {                    (10,147.31)(20,169.84)(30,170.4)(40,128.88)(50,131.91)(60,114.65)(70,88.82)(80,92.07)(90,66.2)
                    };
                
                \legend{Grid, Random}
                
                \end{axis}
                \end{tikzpicture}
            \caption{\# Time Steps}
            \label{fig:influencing-lost-steps}
        \end{subfigure}
        
    \end{tabular}
    \caption{Those graphs consider $k=50$ flocking agents and investigate the influence of the number of influencing agents $m$ on the average number of \textit{"lost"} flocking agents (Figure \ref{fig:influencing-lost}) and the number of time steps until all the other agents converged (Figure \ref{fig:influencing-lost-steps}).}
    \label{fig:switching-influencing}
\end{figure} 

\begin{multicols}{2}
\section{Conclusions}
\label{sec:conclusions-future}
We provided the convergence analysis of a consensus protocol for a network of integrators with directed information flow and either a fixed or a switching topology. Our analysis relies on several tools from algebraic graph theory, matrix theory, control theory and Lyapunov stability. We examined the problem of forcing the flock to reach a {\em desired} orientation by inserting one or \textit{more} agents, referred to as {\em influencing agents}, into the flock. We proved that influencing agents with a \textit{Face Desired Orientation behaviour} are sufficient for guaranteeing consensus, while employing them into a swarm of flocking agents that follow the Vicsek Model \cite{vicsek1995novel}. 

We have also presented that continuous-time update rules converge exponentially \textit{faster} than discrete-time ones, thus yielding the benefits of the continuous-time case. Furthermore, for a switching topology with a continuous-time update rule, we proposed a more concrete convergence rate than the one introduced by Olfati-Saber and Murray \cite{olfati2004consensus}. 
Moreover, we introduced the \textit{Intersection Points Placement Method}, which guarantees in polynomial time that a \textit{single} influencing agent can be inserted into the flock in a manner that it will necessarily influence an entire single connected component.



\bibliographystyle{plain}
\bibliography{sample}

\begin{thebibliography}{10}

\bibitem{barrett2015cooperating}
Samuel Barrett and Peter Stone.
\newblock Cooperating with unknown teammates in complex domains: A robot soccer
  case study of ad hoc teamwork.
\newblock In {\em Twenty-ninth AAAI conference on artificial intelligence},
  2015.

\bibitem{cao2010dwell}
Ming Cao and A~Stephen Morse.
\newblock Dwell-time switching.
\newblock {\em Systems \& Control Letters}, 59(1):57--65, 2010.

\bibitem{fax2004information}
J~Alexander Fax and Richard~M Murray.
\newblock Information flow and cooperative control of vehicle formations.
\newblock {\em IEEE transactions on automatic control}, 49(9):1465--1476, 2004.

\bibitem{fischer1983consensus}
Michael~J Fischer.
\newblock The consensus problem in unreliable distributed systems (a brief
  survey).
\newblock In {\em International conference on fundamentals of computation
  theory}, pages 127--140. Springer, 1983.

\bibitem{fischer1985impossibility}
Michael~J Fischer, Nancy~A Lynch, and Michael~S Paterson.
\newblock Impossibility of distributed consensus with one faulty process.
\newblock {\em Journal of the ACM (JACM)}, 32(2):374--382, 1985.

\bibitem{fu2018influencing}
Daniel~Y Fu, Emily~S Wang, Peter~M Krafft, and Barbara~J Grosz.
\newblock Influencing flock formation in low-density settings.
\newblock In {\em Proceedings of the 17th International Conference on
  Autonomous Agents and MultiAgent Systems}, pages 1613--1621, 2018.

\bibitem{genter2017fly}
Kathryn~Long Genter et~al.
\newblock {\em Fly with me: algorithms and methods for influencing a flock}.
\newblock PhD thesis, 2017.

\bibitem{genter2013ad}
Katie Genter, Noa Agmon, and Peter Stone.
\newblock Ad hoc teamwork for leading a flock.
\newblock In {\em Proceedings of the 2013 international conference on
  Autonomous agents and multi-agent systems}, pages 531--538, 2013.

\bibitem{genter2016ad}
Katie Genter and Peter Stone.
\newblock Ad hoc teamwork behaviors for influencing a flock.
\newblock {\em Acta Polytechnica}, 56(1):18--26, 2016.

\bibitem{genter2016adding}
Katie Genter and Peter Stone.
\newblock Adding influencing agents to a flock.
\newblock pages 615--623, 2016.

\bibitem{genter2015determining}
Katie Genter, Shun Zhang, and Peter Stone.
\newblock Determining placements of influencing agents in a flock.
\newblock In {\em Proceedings of the 2015 International Conference on
  Autonomous Agents and Multiagent Systems}, pages 247--255, 2015.

\bibitem{gershgorin1931uber}
S~GERSCHGORIN.
\newblock Uber die abgrenzung der eigenwerte einer matrix.
\newblock {\em lzv. Akad. Nauk. USSR. Otd. Fiz-Mat. Nauk}, 7:749--754, 1931.

\bibitem{godsil2013algebraic}
Chris Godsil and Gordon~F Royle.
\newblock {\em Algebraic graph theory}, volume 207.
\newblock Springer Science \& Business Media, 2013.

\bibitem{hong2007lyapunov}
Yiguang Hong, Lixin Gao, Daizhan Cheng, and Jiangping Hu.
\newblock Lyapunov-based approach to multiagent systems with switching jointly
  connected interconnection.
\newblock {\em IEEE Transactions on Automatic Control}, 52(5):943--948, 2007.

\bibitem{horn1985johnson}
Roger~A Horn and Charles~R Johnson.
\newblock Matrix analysis.
\newblock {\em Cambridge University Press}, 1987.

\bibitem{jadbabaie2003coordination}
Ali Jadbabaie, Jie Lin, and A~Stephen Morse.
\newblock Coordination of groups of mobile autonomous agents using nearest
  neighbor rules.
\newblock {\em IEEE Transactions on automatic control}, 48(6):988--1001, 2003.

\bibitem{luke2005mason}
Sean Luke, Claudio Cioffi-Revilla, Liviu Panait, Keith Sullivan, and Gabriel
  Balan.
\newblock Mason: A multiagent simulation environment.
\newblock {\em Simulation}, 81(7):517--527, 2005.

\bibitem{morse1996supervisory}
A~Stephen Morse.
\newblock Supervisory control of families of linear set-point controllers-part
  i. exact matching.
\newblock {\em IEEE transactions on Automatic Control}, 41(10):1413--1431,
  1996.

\bibitem{olfati2007consensus}
Reza Olfati-Saber, J~Alex Fax, and Richard~M Murray.
\newblock Consensus and cooperation in networked multi-agent systems.
\newblock {\em Proceedings of the IEEE}, 95(1):215--233, 2007.

\bibitem{olfati2004consensus}
Reza Olfati-Saber and Richard~M Murray.
\newblock Consensus problems in networks of agents with switching topology and
  time-delays.
\newblock {\em IEEE Transactions on automatic control}, 49(9):1520--1533, 2004.

\bibitem{ren2005consensus}
Wei Ren and Randal~W Beard.
\newblock Consensus seeking in multiagent systems under dynamically changing
  interaction topologies.
\newblock {\em IEEE Transactions on automatic control}, 50(5):655--661, 2005.

\bibitem{stone2013teaching}
Peter Stone, Gal~A Kaminka, Sarit Kraus, Jeffrey~S Rosenschein, and Noa Agmon.
\newblock Teaching and leading an ad hoc teammate: Collaboration without
  pre-coordination.
\newblock {\em Artificial Intelligence}, 203:35--65, 2013.

\bibitem{vicsek1995novel}
Tam{\'a}s Vicsek, Andr{\'a}s Czir{\'o}k, Eshel Ben-Jacob, Inon Cohen, and Ofer
  Shochet.
\newblock Novel type of phase transition in a system of self-driven particles.
\newblock {\em Physical review letters}, 75(6):1226, 1995.

\bibitem{wolfowitz1963products}
Jacob Wolfowitz.
\newblock Products of indecomposable, aperiodic, stochastic matrices.
\newblock {\em Proceedings of the American Mathematical Society},
  14(5):733--737, 1963.

\end{thebibliography}

\end{multicols}
\end{document}